\theoremstyle{plain}
\newtheorem{theorem}{Theorem}
\newtheorem{lemma}[theorem]{Lemma}
\newtheorem{proposition}[theorem]{Proposition}
\theoremstyle{definition}
\newtheorem{definition}[theorem]{Definition}
\newtheorem{corollary}[theorem]{Corollary}
\newtheorem{remark}[theorem]{Remark}
\theoremstyle{remark}
\newlength{\blank}
\newcommand{\cpy}[1]{\tilde{#1}} 
\def\A{\mathrm{A}}
\def\At{\cpy{\A}}
\def\B{\mathrm{B}}
\def\C{\mathrm{C}}
\def\Q{\mathrm{Q}}
\def\R{\mathrm{R}}
\def\Qt{\cpy{\Q}}
\def\goc{\bm{\Omega}} 
\def\mclass{\goc}
\def\subops{\bm{\mathrm{subops}}} 
\def\lm{\mathcal{L}} 
\def\ops{\bm{\mathrm{ops}}} 
\def\states{\bm{\mathrm{states}}} 
\def\mops{\bm{\mathrm{mops}}} 
\def\tests{\bm{\mathrm{T}}} 
\def\gmap{\mathcal{N}}
\def\transmap{\mathrm{t}}
\newcommand{\nc}{\newcommand}
\def\mc{\mathcal}
\def\ox{\otimes}
\def\Ox{\bigotimes}
\nc{\bra}[1]{\langle#1|}
\nc{\ket}[1]{|#1\rangle}
\nc{\ketbra}[2]{|#1\rangle\!\langle#2|}
\nc{\braket}[2]{\langle#1|#2\rangle}
\nc{\innerproduct}[2]{\langle #1, #2 \rangle}
\nc{\eref}[1]{(\ref{#1})}
\nc{\pure}[1]{\ketbra{#1}{#1}}
\nc{\partrans}[2]{\Gamma_{#1}\left(#2\right)}
\def\squareforqed{\hbox{\rlap{$\sqcap$}$\sqcup$}}
\def\qed{\ifmmode\squareforqed\else{\unskip\nobreak\hfil

\penalty50\hskip1em\null\nobreak\hfil\squareforqed
\parfillskip=0pt\finalhyphendemerits=0\endgraf}\fi}
\def\endenv{\ifmmode\;\else{\unskip\nobreak\hfil
\penalty50\hskip1em\null\nobreak\hfil\;
\parfillskip=0pt\finalhyphendemerits=0\endgraf}\fi}
\nc\supp{\mathrm{supp}}
\nc{\mathword}[1]{\mathinner{\mathrm{#1}}}
\def\tr{\mathword{Tr}}
\def\hc{^{\dagger}} 
\def\Hs{\mc{H}}
\DeclareRobustCommand\idop{\leavevmode\hbox{\small1\normalsize\kern-.33em1}}
\DeclareRobustCommand\scriptidop{
\leavevmode\hbox{\fontsize{7}{8}\selectfont 1\scriptsize\kern-.33em1}}
\def\1{\idop}
\def\id{\mathrm{id}}
\def\chan{\mathcal{E}}
\def\src{\mathcal{S}}
\def\code{\mathcal{Z}}
\def\chS{\chan_{\outS\qcon\A}}
\def\iS{\rho_{\A}}
\def\icS{\rho_{\At}}
\def\oS{\sigma_{\outS}}
\def\meS{\Phi_{\At\A}}
\def\bS{\At\outS}
\def\purho{\rho_{\At\A}}
\global\long\def\tr{\mathrm{Tr}}
\def\id{\mathrm{id}}
\def\eA{\mathrm{A_E}}
\def\eB{\mathrm{B_E}}
\def\sT{\mathrm{G}}
\def\ient{\Psi}
\def\ent{\Phi}
\def\mV{w}
\def\dmV{\hat{w}}
\def\mRV{W}
\def\dmRV{\hat{W}}
\def\outS{\mathrm{B}}
\def\enc{\mathcal{C}}
\def\chan{\mathcal{E}}
\def\id{\mathrm{id}}
\def\code{\mathcal{Z}}
\def\src{\mathcal{S}}
\def\T{\mathrm{T}}
\def\half{\frac{1}{2}}
\def\LO{\bm{\mathrm{L}}}
\def\oneway{\bm{\mathrm{LC1}}}
\def\PPT{\bm{\mathrm{PPT}}}
\def\all{\bm{\mathrm{ALL}}}
\def\mmap{\mathcal{M}_{\A' \B' \qcon \A \B}}
\def\qcon{|}
\def\SE{\rm{\mathrm{E}}}
\newcommand{\rBeta}[4]{\beta^{#1}_{#2}(#3\|#4)} 
\newcommand{\rDH}[4]{D^{#1}_{#2}(#3\|#4)} 
\newcommand{\rBch}[4]{\beta^{#1}_{#2}(#3,#4)} 
\newcommand{\rIHch}[4]{I^{#1}_{#2}(#3,#4)} 
\newcommand{\Mi}[3]{M_{#1}(#2,#3)} 
\newcommand{\M}[2]{M_{#1}(#2)} 
\newcommand{\Mei}[3]{M^{\SE}_{#1}(#2,#3)}
\newcommand{\Me}[2]{M^{\SE}_{#1}(#2)} 
\definecolor{darkred}{rgb}{0.8,0,0}
\definecolor{grey}{rgb}{0.5,0.5,0.5}
\begin{document}
\title{Finite blocklength converse bounds for quantum channels} 
\author{William Matthews}
\affiliation{Statistical Laboratory,
University of Cambridge,
Wilberforce Road, Cambridge,
CB3 0WB
England}
\email{will@northala.net}
\thanks{This paper was presented in part at Quantum Information
Processing 2013}
\author{Stephanie Wehner}
\affiliation{Centre for Quantum Technologies,
National University of Singapore,
3 Science Drive 2,
Singapore, 117543 }

\begin{abstract}
    We derive upper bounds on the rate of transmission
    of classical information over quantum channels by
    block codes with a given blocklength and error probability,
    for both entanglement-assisted and unassisted codes,
    in terms of a unifying framework of quantum hypothesis testing
    with restricted measurements.
    Our bounds do not depend on any special property of the
    channel (such as memorylessness) and
    generalise both a classical converse of
    Polyanskiy, Poor, and Verd\'{u} as well as a quantum converse of
    Renner and Wang, and have a number of desirable properties.
    In particular our bound on entanglement-assisted codes
    is a semidefinite program and for memoryless
    channels its large blocklength limit is the well known
    formula for entanglement-assisted capacity
    due to Bennett, Shor, Smolin and Thapliyal.
\end{abstract}

\maketitle

\section{Introduction}
This work is concerned with the transmission of
classical information over quantum channels by means
of block codes. This is a central subject of study
in quantum information theory, and the \emph{asymptotic}
rates of transmission in the large blocklength limit,
for various types of code and channel,
are the subject of celebrated theorems and intriguing open problems.
A more fundamental problem, of both theoretical and practical
interest, is to obtain upper (or \emph{converse})
and lower (or \emph{achievability}) bounds on
the optimal transmission rate
for a given error probability $\epsilon$ and
\emph{finite} blocklength $n$.

Without assumptions
on the structure of the operation
implemented by $n$ channel uses (e.g. independence),
there is only a notational
difference between coding for $n$ uses of a channel and
coding for one use of a larger composite channel,
so bounds which apply in this setting are also
known as `one-shot' bounds.
These are the subject of a number of recent results in
quantum information~\cite{MD,RennerWang,RR,DH,DMHB} and remain
an active topic of research
in classical information \cite{PPV,2013-Polyanskiy}.
All bounds referred to in the remainder of this introduction
are of this type.

Mosonyi and Datta \cite{MD}, Wang and Renner \cite{RennerWang}
and Renes and Renner \cite{RR}
have given converse and achievability bounds for classical-quantum channels.
In \cite{DH} Datta and Hsieh derive converse and achievability results for
\emph{entanglement-assisted coding} over quantum channels in terms of
smoothed min- and max-entropies.

Polyanskiy, Poor and Verd\'{u}~\cite{PPV} identified a very
general approach to \emph{classical} converse bounds which they call
the `meta-converse'. The bounds they obtain are given
in terms of a classical hypothesis testing problem.
In this paper we further generalise this approach to the
quantum setting, obtaining novel converse bounds for both
entanglement-assisted and unassisted coding over general
quantum channels. Our bounds are given in terms of a
\emph{quantum} hypothesis testing problem on a bipartite system.
In the bounds for unassisted codes the measurements used for the
hypothesis test obey certain locality restrictions with respect to the bipartition.

Section \ref{prelims} gives a brief review of the mathematical
framework that we work with,
introduces the classes of restricted operations we use (\ref{classes})
and quantum hypothesis testing with these restrictions (\ref{sec:hypoTest}),
and defines precisely entanglement-assisted and unassisted codes
(\ref{codes}).
Section \ref{summary} states our converse
bounds and summarises the properties of these bounds, which we establish
in Section \ref{sec:props}. Section \ref{sec:fbc} proves the
converse bounds via a quantum `meta-converse' result.

As an example of the application of our bound for entanglement-assisted
codes, in Section \ref{sec:eg} we show how to compute it exactly for $n$ uses of a depolarising channel. In Section \ref{sec:sconv} we discuss the relationship of this work to existing results
on strong converse bounds for quantum channels, and to security
proofs in the noisy-storage model.
We conclude in Section \ref{sec:conc}, where we mention
some open problems.

\def\bmo{\mathcal{T}}
\def\sdm{\mathbf{d}}

\section{Preliminaries}\label{prelims}
As usual, a quantum system $\Q$ is associated with
a Hilbert space $\mathcal{H}_{\Q}$. By the dimension
of the system $\dim(\Q)$
we mean the dimension of the associated space.
This work deals only with finite-dimensional systems.
The space of linear maps from
$\Hs_{\Q}$ to $\Hs_{\R}$,
we denote by $\lm(\Hs_{\Q},\Hs_{\R})$,
with the abbreviation
$\lm(\Hs_{\Q})$ for the space $\lm(\Hs_{\Q},\Hs_{\Q})$
of linear \emph{operators} on $\Hs_{\Q}$.

By a \emph{state} of $\Q$ we mean a density
(i.e. positive semidefinite, trace one)
operator on $\lm(\mathcal{H}_{\Q})$.
We denote the set of all states of $\Q$ by $\states(\Q)$.
We omit tensor products, when the subscripts make it clear
on which systems the operators act e.g.
$\rho_{\A} \sigma_{\B} = \rho_{\A} \otimes \sigma_{\B}$.

By a \emph{sub-operation} with input system $\mathrm{A}$ and
output system $\mathrm{B}$ we mean a linear map from
$\lm(\mathcal{H}_{\mathrm{A}})$ to
$\lm(\mathcal{H}_{\mathrm{B}})$ which is
completely positive and trace non-increasing.
We denote the set of these by $\subops(\A\to\B)$.
If a sub-operation is trace preserving, then it is
called an \emph{operation}. We denote the subset
of operations in $\subops(\A\to\B)$ by $\ops(\A\to\B)$.

As a completely positive map, any sub-operation
$\mc{N}_{\B|\A} \in \subops(\A\to\B)$ has a
(non-unique) representation
\begin{equation}
    \mc{N}_{\B|\A}: X_{\A} \mapsto \sum_{i} M_{i} X_{\A} M_{i}^{\dagger}
\end{equation}
in terms of linear operators
$M_{i} \in \lm(\Hs_{\A},\Hs_{\B})$ 
called \emph{Kraus operators}. 
The trace non-increasing condition is equivalent to
$\sum_i M_{i}^{\dagger} M_{i} \leq \1_{\A}$, and $\mc{N}_{\B|\A}$
is trace preserving iff $\sum_i M_{i}^{\dagger} M_{i} = \1_{\A}$.

Given a linear map $\mc{N}_{\B|\A}: \lm(\Hs_{\A}) \to \lm(\Hs_{\B})$
its \emph{adjoint} map is defined to be
the unique linear map
$\mc{N}^{\dagger}_{\B|\A}: \lm(\Hs_{\B}) \to \lm(\Hs_{\A})$ such that,
for all $X_{\A} \in \lm(\Hs_{\A})$ and $Y_{\B} \in \lm(\Hs_{\B})$,
we have $\tr_{\B} Y_{\B}^{\dagger} \mc{N}_{\B|\A} X_{\A}
= \tr_{\A} (\mc{N}^{\dagger}_{\A|\B} Y_{\B})^{\dagger}  X_{\A}$.
If $\mc{N}_{\B|\A}$ is an operation, then $\mc{N}^{\dagger}_{\A|\B}$
is completely positive and unital (i.e. it maps the identity operator
to the identity operator).

Processes which produce a classical outcome in some
set $W$, can be represented by an \emph{instrument}.
For our purposes, it suffices to consider finite $W$ and we
can represent an instrument by a collection of
suboperations indexed by $W$,
$\{ \mc{N}^{(w)}_{\B|\A} \}_{w \in W}$, which
sum to an operation:
$\sum_{w \in W} \mc{N}^{(w)}_{\B|\A} \in \ops(\A \to \B)$.
If the instrument is applied to a state $\rho_{\A}$
then the probability of outcome $w$ is
$\tr_{\B} \mc{N}^{(w)}_{\B|\A} \rho_{\A}$,
and state of $\B$ conditioned on outcome $w$ is
$(\mc{N}^{(w)}_{\B|\A} \rho_{\A})/
(\tr_{\B} \mc{N}^{(w)}_{\B|\A} \rho_{\A} )$.
If $M^{(w)}_i$ are Kraus operators for $\mc{N}^{(w)}_{\B|\A}$, then
$\tr_{\B} \mc{N}^{(w)}_{\B|\A} \rho_{\A}
= \sum_{i} \tr_{\B} M^{(w)}_i \rho_{\A} (M^{(w)}_i)^{\dagger}
= \tr_{\A} E(w)_{\A} \rho_{\A}$
where $E(w)_{\A} := \sum_{i} (M^{(w)}_i)^{\dagger} M^{(w)}_i$,
and the equation follows from the linearity and cyclic property of trace.
The \emph{POVM} (positive operator-valued measure)
$\{ E(w)_{\A} \}_{w \in W}$
and the state $\rho_{\A}$ determine the distribution of the outcome.
The fact that $\{ \mc{N}^{(w)}_{\B|\A} \}_{w \in W}$ is an instrument
implies that POVM elements satisfy $E(w)_{\A} \geq 0$ and
$\sum_{w \in W} E(w)_{\A} = \1_{\A}$.

It is convenient to assume that every quantum system $\Q$
comes equipped with a canonical orthonormal basis which we
call the \emph{classical} basis, and whose members we denote by
$\ket{i}_{\Q}$ for $i = 1, \ldots, \dim(\Q)$.
Given two systems $\Q$ and $\Qt$ of the same dimension,
we denote by $\id_{\Qt|\Q}$ the ``identity'' operation
from $\Q$ to $\Qt$ which is defined via its action on
the classical bases of the systems, thus
\begin{equation}
    \id_{\Qt|\Q} : \ketbra{i}{j}_{\Q} \mapsto
    \ketbra{i}{j}_{\Qt}.
\end{equation}
We also define the \emph{transpose map} on a system $\Q$ by its action
on the classical basis
\begin{equation}
    \transmap_{\Q|\Q} : \ketbra{i}{j}_{\Q} \mapsto
    \ketbra{j}{i}_{\Q}.
\end{equation}
It is important to note that $\transmap_{\Q|\Q}$ is positive but
not completely positive, so it is not an operation.
Given an operator $X_{\Q}$ we will also make use of the standard
notation for its transpose
$X_{\Q}^{\T} := \transmap_{\Q|\Q} X_{\Q}$. 
Similarly, the complex conjugate $X^{\ast}_{\Q}$ of
$X_{\Q}$ is defined by taking the classical basis to be real.
Therefore, the adjoint $X_{\Q}^{\dagger}$ of an operator $X_{\Q}$
satisfies $X_{\Q}^{\dagger} = (X_{\Q}^{\T})^{\ast}$.

\subsection{Classes of operation on bipartite systems}\label{classes}

Let $\ops^{\goc}(\A:\B \to \A':\B')$
denote the set of all operations
taking states of the bipartite system $\A : \B$
to states of $\A' : \B'$, which belong to class $\goc$.
We insert the colon to make explicit the relevant
bipartitions of the input and output systems.

We call an operation a \emph{measurement operation}
if, for any input, its output is diagonal in the classical
basis of its output system(s). It is worth emphasising
that the classical basis of a composite system is the
\emph{product basis} formed from the classical bases
of its constituents. We denote the subset of measuring
operations in $\ops^{\goc}(\A:\B \to \A':\B')$ by
$\mops^{\goc}(\A:\B \to \A':\B')$. Measuring operations
are also called ``quantum-classical'', or ``q-c'', operations
in the literature.

An operation
$\gmap_{\A' \B' \qcon \A \B} \in \ops(\A:\B \to \A':\B')$
belongs to the class $\PPT$ if it is positive-partial-transpose preserving,
i.e. if $\transmap_{\B'|\B'} \gmap_{\A' \B' \qcon \A \B}
\transmap_{\B|\B}$ is
completely positive.

$\gmap_{\A' \B' \qcon \A \B}$ belongs to $\oneway$
if it can be implemented by
local operations and one-way classical communication from Alice to Bob:
Alice performs any instrument on her side, generating a classical
outcome $a$ which she sends to Bob. Bob uses $a$ to determine
which operation he applies. Such an operation can be written in the form
\begin{equation}
	\gmap_{\A' \B' \qcon \A \B}
	= \sum_{a}
	\mathcal{F}^{(a)}_{\A' \qcon \A}
	\mathcal{D}^{(a)}_{\B' \qcon \B}
\end{equation}
where, for each $a$ 
$\mathcal{F}_{\A' \qcon \A}^{(a)} \in \subops(\A\to\A')$,
$\sum_{a} \mathcal{F}_{\A' \qcon \A}^{(a)} \in \ops(\A\to\A')$,
and $\mathcal{D}^{(a)}_{\B' \qcon \B} \in \ops(\B\to\B')$.
Throughout, we will omit tensor products between operations
when it is clear which systems they act on from the subscripts.

$\gmap_{\A' \B' \qcon \A \B}$ belongs to $\LO$ if it can be implemented by
local operations and shared randomness, which means it can be written
\begin{equation}
	\gmap_{\A' \B' \qcon \A \B}
    = \sum_{r} p_r
    \mathcal{F}^{(r)}_{\A' \qcon \A}
    \mathcal{D}^{(r)}_{\B' \qcon \B}
\end{equation}
where $\mathcal{F}^{(r)}_{\A' \qcon \A} \in \ops(\A\to\A')$ and 
$\mathcal{D}^{(r)}_{\B' \qcon \B} \in \ops(\B\to\B')$ and $p_r$
is the probability of the shared random variable being equal to $r$.

These classes of operations are all closed under composition.
For any measurement operation
$\mc{M}_{\A''\B''|\A'\B'} \in \mops^{\goc}(\A':\B' \to \A'':\B'')$
and operation
$\mc{W}_{\A'\B'|\A\B} \in \mops^{\goc}(\A:\B \to \A':\B')$
(where $\goc \in \{\LO, \oneway, \PPT, \all\}$)
we have the closure property
\begin{equation}
    \mc{M}_{\A''\B''|\A'\B'} \mc{W}_{\A'\B'|\A\B}
    \in \mops^{\goc}(\A:\B \to \A'':\B'').
\end{equation}
Note, however, that following a measurement operation with an operation
which is not measuring won't necessarily result in a measurement operation.
All the classes mentioned are closed under convex combination.
Furthermore, they form a hierarchy
$\all \supset \PPT \supset \oneway \supset \LO$
\cite{Rains-rig,2001-Rains-SDP}.

\subsection{Quantum Hypothesis Testing with Restricted Measurements.}
\label{sec:hypoTest}
In a classical hypothesis testing problem
(with simple hypotheses and finite sample space)
there are two hypotheses $H_i$, $i \in \{0,1\}$, of the form
is `the random variable $R$ has distribution $P^{(i)}$'.
A statistical test $T$ can be specified by the giving the
probabilities $T(r) = \Pr( \text{accept } H_0 | T, R = r )$.

The `type-I error' of $T$ is
\begin{align}
    \alpha(P^{(0)},T) =& \Pr(\text{accept }H_1 | H_0,T)\\
    =& 1 - \sum_r P^{(0)}(r) T(r)
\end{align}
while the `type-II error' of $T$ is
\begin{align}
    \beta(P^{(1)},T) =& \Pr(\text{accept }H_0 |H_1,T)\\
    =& \sum_r P^{(1)}(r) T(r).
\end{align}
\begin{definition}\label{c-beta}
\begin{align}
    \rBeta{}{\epsilon}{P^{(0)}}{P^{(1)}}
    :=& \min \beta(P^{(1)},T)\\
    &\text{subject to}\notag\\
    &\alpha(P^{(0)},T) \leq \epsilon,\\
    \forall r:~&0 \leq T(r) \leq 1.
\end{align}
\end{definition}

In a quantum hypothesis testing problem (with simple hypotheses)
there are two hypotheses $H_i$, $i \in \{0,1\}$, of the form
is `the state of system $\Q$ is $\tau^{(i)}_{\Q}$'.
In order to distinguish between these situations it is necessary
to perform a measurement on $\Q$, the outcome of which is
then subjected to a classical hypothesis test.

If the measurement operation is $\mc{M}_{\C|\Q}$
and the classical test has probability $T(r)$ of accepting
when the outcome is $r$, then the overall probability of
acceptance when the state of $\Q$ is $\tau_{\Q}$ is
\begin{equation}
    \sum_{r} T(r) \tr_{\C} \ketbra{r}{r}_{\C} \mc{M}_{\C|\Q} \tau_{\Q}
    = \tr_{\Q} T_{\Q} \tau_{\Q}
\end{equation}
where $T_{\Q} = \mc{M}^{\dagger}_{\Q|\C}
\left( \sum_{r} T(r) \ketbra{r}{r}_{\C} \right)$
is the POVM element corresponding to acceptance of $H_0$.
To see that it is a POVM element, we can use the fact that
$0 \leq \sum_{r} T(r) \ketbra{r}{r}_{\C} \leq 1$,
and that the adjoint map $\mc{M}_{\Q|\C}^{\dagger}$
of an operation is completely positive and unital.

With no restriction on the measurement operation
(i.e. for class $\all$), $T_{\Q}$ can be any valid POVM
element. Therefore, we are justified in defining
\begin{definition}
\begin{align}
    \rBeta{\all}{\epsilon}{\tau_{\Q}^{(0)}}{\tau_{\Q}^{(1)}}
    :=& \min \tr_{\Q} \tau_{\Q}^{(1)} T_{\Q}\\
    &\text{subject to}\notag\\
    &1 - \tr_{\Q} \tau_{\Q}^{(0)} T_{\Q} \leq \epsilon,\\
    \forall r:~&0 \leq T_{\Q} \leq 1.
\end{align}
\end{definition}

Suppose $\tau_{\Q}^{(0)}$ and $\tau_{\Q}^{(1)}$ commute,
and $\ketbra{\eta_i}{\eta_i}_{\Q}$ is a common eigenbasis
for them. Defining the operation
\begin{equation}
    \mc{K}_{\Q|\Q}: X_{\Q} \mapsto \sum_{i=1}^{\dim(\Q)}
    \ketbra{\eta_i}{\eta_i}_{\Q} X_{\Q} \ketbra{\eta_i}{\eta_i}_{\Q}
\end{equation}
which completely dephases the system in this eigenbasis, we have
$\mc{K}_{\Q|\Q} \tau^{(0)}_{\Q} = \tau^{(0)}_{\Q}$ and
$\mc{K}_{\Q|\Q} \tau^{(1)}_{\Q} = \tau^{(1)}_{\Q}$.
Since $\mc{K}_{\Q|\Q}$ is its own adjoint map, this means that
there a test $T_{\Q}$, optimal for the hypothesis test,
which itself satisfies $\mc{K}_{\Q|\Q} T_{\Q} = T_{\Q}$. That is,
it too is diagonal in the common eigenbasis of the two states.
In particular, if both $\tau_{\Q}^{(0)}$ and $\tau_{\Q}^{(1)}$ are
diagonal in the classical basis, then we can assume without loss
of generality that the test is too, whereupon everything reduces to the
classical case.

We extend this definition to restricted measurements
on bipartite systems by
\begin{definition}\label{rmeas}
For $\goc \neq \all$,
\begin{align}
	&\rBeta{\goc}{\epsilon}{\tau^{(0)}_{\A \B}}{\tau^{(1)}_{\A \B}}
	:= \inf
	\rBeta{}{\epsilon}{\mmap \tau^{(0)}_{\A \B} }{\mmap \tau^{(1)}_{\A\B} }\\
    &\text{subject to}\notag\\
    & \mmap \in \mops^{\goc}(\A:\B \to \A':\B'),\\
    &\text{for arbitrary }\A',\B'.
\end{align}
That is, we reduce the quantum to the classical case by optimising
over all measurement operations belonging to the allowed class $\goc$.
Based on this quantity, we define the
$\goc$-\emph{hypothesis-testing relative entropy}
\begin{equation}
    \rDH{\goc}{\epsilon}{\tau^{(0)}_{\A \B}}{\tau^{(1)}_{\A \B}} :=
- \log \rBeta{\goc}{\epsilon}{\tau^{(0)}_{\A \B}}{\tau^{(1)}_{\A \B}}.
\end{equation}
\end{definition}
A trivial but very useful result is the following
\begin{proposition}[Data processing inequality]\label{dpi}
    If $\goc$ is closed under composition, then
	for any operation $\mathcal{N}_{\A'\B'\qcon\A\B} \in
	\ops^{\goc}(\A:\B \to \A':\B')$
	\begin{equation}
		\rDH{\goc}{\epsilon}
		{\mathcal{N}_{\A'\B'\qcon\A\B} \tau^{(0)}_{\A \B} }
		{\mathcal{N}_{\A'\B'\qcon\A\B} \tau^{(1)}_{\A \B} }
		\leq \rDH{\goc}{\epsilon}{\tau^{(0)}_{\A \B}}{\tau^{(1)}_{\A \B}}.
	\end{equation}
\end{proposition}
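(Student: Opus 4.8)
The plan is to reduce the bipartite (restricted-measurement) statement to the fully unrestricted single-system data processing inequality for $\rBeta{}{\epsilon}{\cdot}{\cdot}$, which in turn is immediate from the definition: composing a statistical test $T$ with a classical channel gives another statistical test, with the same type-I and type-II errors, so $\rBeta{}{\epsilon}{PD}{QD} \le \rBeta{}{\epsilon}{P}{Q}$ for any classical channel $D$. First I would unfold Definition~\ref{rmeas}: fixing $\goc$ closed under composition and an operation $\mc{N}_{\A'\B'\qcon\A\B}\in\ops^{\goc}(\A:\B\to\A':\B')$, we must show that for every measurement operation $\mmap\in\mops^{\goc}(\A':\B'\to\A'':\B'')$ there is a corresponding measurement operation $\mmap'\in\mops^{\goc}(\A:\B\to\A'':\B'')$ (with possibly different output systems) such that $\mmap'\tau^{(i)}_{\A\B}=\mmap(\mc{N}_{\A'\B'\qcon\A\B}\tau^{(i)}_{\A\B})$ for $i=0,1$, and then take the infimum over $\mmap$ on the left and compare.

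The key step is the choice $\mmap' := \mmap \circ \mc{N}_{\A'\B'\qcon\A\B}$. By the closure property stated just before Section~\ref{sec:hypoTest} — namely that for $\goc\in\{\LO,\oneway,\PPT,\all\}$, the composition of a measurement operation in class $\goc$ with an operation in class $\goc$ is again a measurement operation in class $\goc$ — we have $\mmap'\in\mops^{\goc}(\A:\B\to\A'':\B'')$; this is exactly where the hypothesis ``$\goc$ closed under composition'' is used. Then
\begin{equation}
\rBeta{}{\epsilon}{\mmap'\tau^{(0)}_{\A\B}}{\mmap'\tau^{(1)}_{\A\B}}
= \rBeta{}{\epsilon}{\mmap(\mc{N}_{\A'\B'\qcon\A\B}\tau^{(0)}_{\A\B})}{\mmap(\mc{N}_{\A'\B'\qcon\A\B}\tau^{(1)}_{\A\B})},
\end{equation}
so taking the infimum of the right-hand side over all $\mmap\in\mops^{\goc}(\A':\B'\to\A'':\B'')$ and over $\A'',\B''$ gives $\rBeta{\goc}{\epsilon}{\mc{N}_{\A'\B'\qcon\A\B}\tau^{(0)}_{\A\B}}{\mc{N}_{\A'\B'\qcon\A\B}\tau^{(1)}_{\A\B}}$, while the left-hand side is an infimum over a \emph{subset} of the measurement operations admissible for computing $\rBeta{\goc}{\epsilon}{\tau^{(0)}_{\A\B}}{\tau^{(1)}_{\A\B}}$ (those that factor through $\mc{N}_{\A'\B'\qcon\A\B}$), hence is bounded below by $\rBeta{\goc}{\epsilon}{\tau^{(0)}_{\A\B}}{\tau^{(1)}_{\A\B}}$. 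Chaining these gives $\rBeta{\goc}{\epsilon}{\mc{N}\tau^{(0)}}{\mc{N}\tau^{(1)}}\ge\rBeta{\goc}{\epsilon}{\tau^{(0)}}{\tau^{(1)}}$, and taking $-\log$ (which is order-reversing) yields the claimed inequality for $\rDH{\goc}{\epsilon}{\cdot}{\cdot}$. The case $\goc=\all$ is handled separately but identically, using that the composition of an arbitrary measurement operation with an arbitrary operation is a measurement operation, or equivalently the standard single-system argument that $\mc{N}^{\dagger}$ applied to a POVM element gives a POVM element.

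The main obstacle is essentially bookkeeping rather than mathematics: being careful that the infimum in Definition~\ref{rmeas} ranges over \emph{arbitrary} output systems $\A',\B'$, so that the composed operation $\mmap\circ\mc{N}_{\A'\B'\qcon\A\B}$ — whose output systems $\A'',\B''$ may differ from those one would ``natively'' use on $\A:\B$ — is still a legitimate competitor in the optimisation defining $\rBeta{\goc}{\epsilon}{\tau^{(0)}_{\A\B}}{\tau^{(1)}_{\A\B}}$. Once one observes that the feasible set on the left embeds into the feasible set on the right via post-composition with $\mc{N}_{\A'\B'\qcon\A\B}$, the inequality is just ``infimum over a smaller set is larger''. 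No smoothness or compactness is needed since everything is phrased with infima.
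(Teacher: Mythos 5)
The paper gives no proof of this proposition (it calls it ``trivial''), and your argument is exactly the intended one and is correct: for any candidate measurement operation $\mmap\in\mops^{\goc}(\A':\B'\to\A'':\B'')$, the closure hypothesis gives $\mmap\circ\mathcal{N}_{\A'\B'\qcon\A\B}\in\mops^{\goc}(\A:\B\to\A'':\B'')$, so it is a feasible competitor in the infimum defining $\rBeta{\goc}{\epsilon}{\tau^{(0)}_{\A\B}}{\tau^{(1)}_{\A\B}}$, and taking infima and then $-\log$ gives the claim. One harmless slip in your preamble: the classical data-processing inequality you quote should read $\rBeta{}{\epsilon}{PD}{QD}\ge\rBeta{}{\epsilon}{P}{Q}$ (post-processing can only increase the minimum type-II error), but since your actual argument never invokes it — it rests entirely on the ``$\mmap\circ\mathcal{N}$ is a feasible competitor'' step — the proof stands.
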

\begin{corollary}\label{revdpi}
	If there is also an operation
	$\mathcal{N}'_{\A\B\qcon\A'\B'} \in \ops^{\goc}(\A:\B \to \A':\B')$
	such that
	$\mathcal{N}'_{\A\B\qcon\A'\B'}
	 \mathcal{N}_{\A'\B'\qcon\A\B} \tau^{(0)}_{\A \B} 
	= \tau^{(0)}_{\A \B}$
	and $\mathcal{N}'_{\A\B\qcon\A'\B'}
	    \mathcal{N}_{\A'\B'\qcon\A\B} \tau^{(1)}_{\A \B}
	= \tau^{(1)}_{\A \B}$ then
	\begin{equation}
		\rDH{\goc}{\epsilon}
		{\mathcal{N}_{\A'\B'\qcon\A\B} \tau^{(0)}_{\A \B}}
		{\mathcal{N}_{\A'\B'\qcon\A\B} \tau^{(1)}_{\A \B}}
		= \rDH{\goc}{\epsilon}{\tau^{(0)}_{\A \B} }{ \tau^{(1)}_{\A \B}}.
	\end{equation}
\end{corollary}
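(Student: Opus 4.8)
The plan is to derive the missing reverse inequality by invoking the data processing inequality (Proposition~\ref{dpi}) a second time, applied to the ``inverting'' operation $\mathcal{N}'$. The hypothesis that $\goc$ is closed under composition is inherited from Proposition~\ref{dpi}, so both $\mathcal{N}_{\A'\B'\qcon\A\B} \in \ops^{\goc}(\A:\B\to\A':\B')$ and $\mathcal{N}'_{\A\B\qcon\A'\B'} \in \ops^{\goc}(\A':\B'\to\A:\B)$ are legitimate inputs to that proposition, and their composite $\mathcal{N}'_{\A\B\qcon\A'\B'}\mathcal{N}_{\A'\B'\qcon\A\B}$ is again an operation in $\ops^{\goc}(\A:\B\to\A:\B)$.

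First, Proposition~\ref{dpi} applied directly to $\mathcal{N}_{\A'\B'\qcon\A\B}$ gives
\[
\rDH{\goc}{\epsilon}{\mathcal{N}_{\A'\B'\qcon\A\B}\tau^{(0)}_{\A\B}}{\mathcal{N}_{\A'\B'\qcon\A\B}\tau^{(1)}_{\A\B}}
\;\le\; \rDH{\goc}{\epsilon}{\tau^{(0)}_{\A\B}}{\tau^{(1)}_{\A\B}}.
\]
Next I would apply the same proposition to $\mathcal{N}'_{\A\B\qcon\A'\B'}$, now regarding $\mathcal{N}_{\A'\B'\qcon\A\B}\tau^{(0)}_{\A\B}$ and $\mathcal{N}_{\A'\B'\qcon\A\B}\tau^{(1)}_{\A\B}$ as the two states being compared, obtaining (suppressing subscripts on the composite)
\[
\rDH{\goc}{\epsilon}{\mathcal{N}'\mathcal{N}\tau^{(0)}_{\A\B}}{\mathcal{N}'\mathcal{N}\tau^{(1)}_{\A\B}}
\;\le\; \rDH{\goc}{\epsilon}{\mathcal{N}_{\A'\B'\qcon\A\B}\tau^{(0)}_{\A\B}}{\mathcal{N}_{\A'\B'\qcon\A\B}\tau^{(1)}_{\A\B}}.
\]
By hypothesis $\mathcal{N}'\mathcal{N}$ acts as the identity on each of $\tau^{(0)}_{\A\B}$ and $\tau^{(1)}_{\A\B}$, so the left-hand side of the second inequality equals $\rDH{\goc}{\epsilon}{\tau^{(0)}_{\A\B}}{\tau^{(1)}_{\A\B}}$. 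Chaining the two displays yields
\[
\rDH{\goc}{\epsilon}{\tau^{(0)}_{\A\B}}{\tau^{(1)}_{\A\B}}
\;\le\; \rDH{\goc}{\epsilon}{\mathcal{N}_{\A'\B'\qcon\A\B}\tau^{(0)}_{\A\B}}{\mathcal{N}_{\A'\B'\qcon\A\B}\tau^{(1)}_{\A\B}}
\;\le\; \rDH{\goc}{\epsilon}{\tau^{(0)}_{\A\B}}{\tau^{(1)}_{\A\B}},
\]
so equality holds throughout, which is the assertion of the corollary.

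There is no genuine obstacle here: the statement is a formal consequence of applying Proposition~\ref{dpi} twice. The only points needing (minimal) care are bookkeeping ones, namely that the domain and codomain bipartitions of $\mathcal{N}'$ are such that $\mathcal{N}'\mathcal{N}$ is a well-defined operation on $\A:\B$ in the class $\goc$, and the observation that we never use that $\mathcal{N}'\mathcal{N}$ equals the identity \emph{operation} on $\A:\B$ — only that it fixes the two particular states $\tau^{(0)}_{\A\B}$ and $\tau^{(1)}_{\A\B}$, which is exactly what is assumed. (In the applications of interest $\mathcal{N}$ is typically an isometric encoding and $\mathcal{N}'$ a partial trace or an adjoint channel that reconstructs the relevant inputs.)
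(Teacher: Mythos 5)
Your proof is correct and is exactly the argument the paper intends: the corollary is stated without a separate proof precisely because it follows by applying Proposition~\ref{dpi} twice (once to $\mathcal{N}$, once to $\mathcal{N}'$) and using the hypothesis that $\mathcal{N}'\mathcal{N}$ fixes $\tau^{(0)}_{\A\B}$ and $\tau^{(1)}_{\A\B}$. You also correctly read through the typo in the stated type of $\mathcal{N}'$ (it should lie in $\ops^{\goc}(\A'{:}\B' \to \A{:}\B)$, not $\ops^{\goc}(\A{:}\B \to \A'{:}\B')$), which is the interpretation needed for the composite $\mathcal{N}'\mathcal{N}$ to be well-defined on $\A{:}\B$.
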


Suppose that the optimal (classical) hypothesis test
which acts on measurement result in Definition \ref{rmeas}
has probability $T(a,b)$ of accepting $H_0$ when 
the measurement operation produces the outcome $\ketbra{a}{a}_{\A'}\ketbra{b}{b}_{\B'}$.
Then the overall probability of the test accepting $H_0$
when the state of $\A\B$ is $\tau_{\A\B}$ is
\begin{align}
    &\sum_{a,b} T(a,b)
    \tr_{\A' \B'}
    \ketbra{a}{a}_{\A'}\ketbra{b}{b}_{\B'}
    \mmap \tau_{\A \B}\\
    =& \tr_{\A\B} T_{\A\B} \tau_{\A\B} 
\end{align}
where (with $\mc{M}^{\dagger}_{\A\B\qcon\A'\B'}$ the adjoint map
for $\mmap$),
\begin{equation}
    T_{\A\B} = \mc{M}^{\dagger}_{\A\B\qcon\A'\B'}
    \left(\sum_{a,b} T(a,b) \ketbra{a}{a}_{\A'}\ketbra{b}{b}_{\B'}\right).
\end{equation}

$T_{\A\B}$ is the POVM element (which we call simply a `test')
corresponding to acceptance of hypothesis $H_0$, in some quantum
hypothesis test which can be implemented by a measurement operation
in $\goc$ followed by a classical hypothesis test in the joint outcome.
We denote the set of such tests on $\A\B$ by by $\tests^{\goc}(\A:\B)$.

As shown in \cite{VP},
$\tests^{\PPT}(\A:\B)$ consists of all POVM elements $T_{\A\B}$
(that is, $0 \leq T_{\A\B} \leq \1_{\A\B}$), satisfying
\begin{equation}
    0 \leq \transmap_{\B|\B} T_{\A\B} \leq \1_{\A\B}.
\end{equation}
Working through the definitions, we see that
$\tests^{\LO}(\A:\B)$
is the convex hull of all POVM elements of the form
\begin{equation}\label{LOT}
    \sum_{a,b} T(a,b) E(a)_{\A} D(b)_{\B}
\end{equation}
where $\{ E(a)_{\A} \}$ is a POVM on $\A$,
$\{ D(b)_{\B} \}$ is a POVM on $\B$,
and $0 \leq T(a,b) \leq 1$.
$\tests^{\oneway}(\A:\B)$
is the convex hull of all POVM elements of the form
\begin{equation}\label{onewayT}
    \sum_{a,b} T(a,b) E(a)_{\A} D^{a}(b)_{\B}
\end{equation}
where $\{ E(a)_{\A} \}$ is a POVM on $\A$,
and for each $a$, $\{ D^{a}(b)_{\B} \}$
is a POVM on $\B$, and $0 \leq T(a,b) \leq 1$.
(We can use the results
of \cite{2005-DArianoLoPrestiPerinotti} to show that,
given our assumption that $\A$ and $\B$ are finite dimensional,
it suffices to take local POVMs with a finite number of outcomes
in these last two statements.)

\begin{proposition}\label{test-beta}
    \begin{align}
	\rBeta{\goc}{\epsilon}{\tau^{(0)}_{\A \B}}{\tau^{(1)}_{\A \B}}
	=& \inf \tr \tau^{(1)}_{\A \B} T_{\A \B} \\
	&\text{subject to}\notag\\
	&\tr_{\A\B} \tau^{(0)}_{\A\B} T_{\A \B} \geq 1 - \epsilon,\\
	&T_{\A \B} \in \tests^{\goc}(\A:\B).
    \end{align}
\end{proposition}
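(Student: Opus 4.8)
The plan is to show that the pairs (measurement operation, classical post-test) over which Definition \ref{rmeas} implicitly optimises are, as far as the two error probabilities are concerned, in exact correspondence with the feasible tests $T_{\A\B}$ on the right-hand side, so that the two optimisation problems have the same value. Unfolding Definition \ref{rmeas} together with Definition \ref{c-beta},
\begin{equation}
\rBeta{\goc}{\epsilon}{\tau^{(0)}_{\A \B}}{\tau^{(1)}_{\A \B}}
= \inf_{\mmap} \; \min_{T} \; \beta(\mmap \tau^{(1)}_{\A\B}, T),
\end{equation}
where the infimum is over $\mmap \in \mops^{\goc}(\A:\B \to \A':\B')$ for arbitrary finite-dimensional $\A',\B'$, and the minimum is over classical tests $T$ on the joint classical outcome with $\alpha(\mmap \tau^{(0)}_{\A\B}, T) \leq \epsilon$; this minimum is attained since the output systems are finite-dimensional, hence the outcome set is finite and the feasible set compact and nonempty.

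For the inequality ``$\leq$'' on the right, take any such pair $(\mmap, T)$, with $T$ having acceptance probabilities $T(a,b)$ for the joint outcome $\ketbra{a}{a}_{\A'}\ketbra{b}{b}_{\B'}$. The displayed computation immediately preceding the proposition shows that, setting $T_{\A\B} = \mc{M}^{\dagger}_{\A\B\qcon\A'\B'}\left(\sum_{a,b} T(a,b)\ketbra{a}{a}_{\A'}\ketbra{b}{b}_{\B'}\right)$, one has $\beta(\mmap \tau^{(1)}_{\A\B}, T) = \tr \tau^{(1)}_{\A\B} T_{\A\B}$ and $\alpha(\mmap \tau^{(0)}_{\A\B}, T) = 1 - \tr \tau^{(0)}_{\A\B} T_{\A\B}$; and by the very definition of $\tests^{\goc}(\A:\B)$ we have $T_{\A\B} \in \tests^{\goc}(\A:\B)$. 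So $T_{\A\B}$ is feasible for the right-hand side with objective value equal to $\beta(\mmap \tau^{(1)}_{\A\B}, T)$, and taking the infimum shows the right-hand side is at most $\rBeta{\goc}{\epsilon}{\tau^{(0)}_{\A \B}}{\tau^{(1)}_{\A \B}}$.

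For the reverse inequality, let $T_{\A\B} \in \tests^{\goc}(\A:\B)$ with $\tr \tau^{(0)}_{\A\B} T_{\A\B} \geq 1-\epsilon$. By definition of $\tests^{\goc}(\A:\B)$ there are $\A',\B'$, a measurement operation $\mmap \in \mops^{\goc}(\A:\B \to \A':\B')$, and numbers $0 \leq T(a,b) \leq 1$ realising $T_{\A\B}$ as above. Running the same computation backwards, the classical test $T$ satisfies $\alpha(\mmap\tau^{(0)}_{\A\B}, T) = 1 - \tr \tau^{(0)}_{\A\B} T_{\A\B} \leq \epsilon$, so it is feasible for the classical problem defining $\rBeta{}{\epsilon}{\mmap\tau^{(0)}_{\A\B}}{\mmap\tau^{(1)}_{\A\B}}$, with type-II error $\tr \tau^{(1)}_{\A\B} T_{\A\B}$. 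Hence $\rBeta{\goc}{\epsilon}{\tau^{(0)}_{\A \B}}{\tau^{(1)}_{\A \B}} \leq \rBeta{}{\epsilon}{\mmap\tau^{(0)}_{\A\B}}{\mmap\tau^{(1)}_{\A\B}} \leq \tr \tau^{(1)}_{\A\B} T_{\A\B}$, and taking the infimum over feasible $T_{\A\B}$ finishes the case $\goc \neq \all$. The case $\goc = \all$ is the same, recalling that $\tests^{\all}(\A:\B)$ is the set of all POVM elements and that $\rBeta{\all}{\epsilon}{\cdot}{\cdot}$ was already defined by optimising directly over these.

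There is no substantive obstacle here: the statement is a repackaging of the definitions and of the adjoint-map computation just performed. The only points needing a moment's care are that the ``arbitrary $\A',\B'$'' clause is harmless (any realising $\mmap$ will do in the reverse direction), and that finite-dimensionality of the output systems keeps the classical subproblem a genuine (attained) minimum over a finite outcome set, so all the term-by-term manipulations are legitimate.
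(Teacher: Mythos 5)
Your proof is correct and matches the paper's approach: the paper states Proposition~\ref{test-beta} without a separate proof environment, treating it as an immediate consequence of the adjoint-map computation and the definition of $\tests^{\goc}(\A:\B)$ that directly precede it, and your proposal simply unfolds that correspondence explicitly in both directions. The added remarks (finite-dimensionality guaranteeing a genuine attained minimum over a finite outcome set, and the harmlessness of the arbitrary $\A',\B'$ in the reverse direction) are accurate and consistent with what the paper leaves implicit.
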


\begin{remark}
	For the classes of operations considered in this
	paper ($\all$, $\PPT$, $\oneway$ and $\LO$),
	$\tests^{\goc}(\A:\B)$ is a closed set, and so for these
classes
	the infimum in Proposition \ref{test-beta} and Definition
	\ref{rmeas} can be replaced by a minimum.
\end{remark}

\subsection{Codes}\label{codes}
As usual, a use (or uses) of a quantum channel with input system $\A$
and output system $\B$ is represented by an operation
$\chS \in \ops(\A\to\B)$.

\begin{definition}
    \label{def:ea-code}
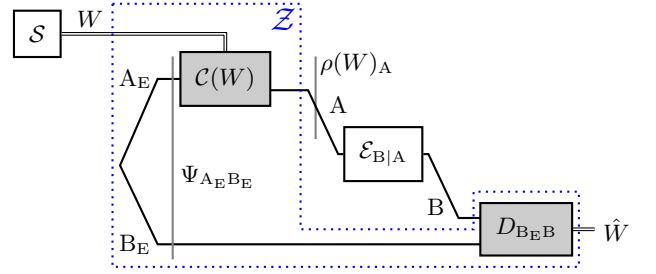
\begin{figure}[h]
	\begin{tikzpicture}[scale=1]
		\node (abox) at (-3,1) {};
		
		\node[rectangle, draw, thick, fill=white, inner sep=2mm]
		(chan) at (-0.9,0) {$\mc{E}_{\outS \qcon \A }$};
		
		\node (bbox) at (1,-1) {};

		\node (output) at (2.2,-1) {$\hat{W}$};

		\node[rectangle, draw, thick, fill=white, inner sep=2mm]
		(src) at (-5.5,1.6) {$\src$};
		
		\draw[draw=black, style=thick]
		(-3,1)--(-3.9,1) -- (-4.4,-0.15) -- (-3.9,-1.2) -- (1,-1.2);

		\draw[draw=black, style=gray,thick]
		(-1.8,1.3) -- (-1.8,0.2);
		\node (encstate) at (-1.25,1.2) {$\rho(W)_{\A}$};
		
		\node (input) at (-4.8,1.8) {$W$};
		\draw[draw=black, style=double]
		(src) -| (-3,1.2);

		\draw[draw=black, style=thick]
		(-3,0.85)--(-1.9,0.85)--
		(-1.5,0)--(chan)--(-0.3,0)--
		(0.1,-0.85)--(1,-0.85);
		
		\draw[draw=black, style=gray,thick]
		(-3.7,-1.4) -- (-3.7,1.3);
		\node (ie) at (-3.1,-0.3) {$\ient_{\eA\eB}$};
		
		\draw[draw=black, style=double]
		(bbox) -- (output);
		
		\node[rectangle, draw, thick, fill=black!20, inner sep=2mm]
		(abox2) at (abox) {$\mathcal{C}(W)$};
		
		\node[rectangle, draw, thick, fill=black!20, inner sep=2mm]
		(bbox2) at (bbox) {$D_{\eB\outS }$};

		\draw[draw=black, style=dotted,thick,blue]
		(-4.5,2)--(-2,2)--(-2,-1)--(0.3,-1)--(0.3,-0.5)
		--(1.7,-0.5)--(1.7,-1.5)--(-4.5,-1.5)--(-4.5,2);
		\node[style=blue] (code) at (-2.3,1.8) {
		\begin{large} $\code$ \end{large}};

		\node (AGS) at (-4.2,1) {$\eA$};
		\node (AGcS) at (-4.2,-1.2) {$\eB$};
		\node (AS) at (-1.5,0.65) {$\A$};
		
		\node (BS) at (-0.2,-0.7) {$\outS$};
	\end{tikzpicture}
  \centering
  \caption{
  An entanglement-assisted code $\code$ transmitting a message
  $W$ produced by a source $\src$, via a channel use $\mc{E}$.
  The average channel input induced by the source and encoding
  is $\iS = \sum_{w = 1}^M \src(w) \rho(w)_{\A}$.
  }\label{fig:EA}
\end{figure}
  In an {\bf entanglement-assisted code}
  of size $M$, the sender and receiver have
  systems $\eA$ and $\eB$ in an entangled state $\ient_{\eA \eB}$,
  and for each message $w \in \{1,2, \ldots, M\}$
  there is an encoding operation
   $\enc(w)_{\A\qcon\eA} \in \ops(\eA\to\A)$. 
  Following the use(s) of the channel,
  the decoder performs a POVM $D_{\eB \outS }$ on $\eB \outS$ to obtain the
  decoded message. $D(\hat{w})_{\eB \outS }$ is the POVM element 
  corresponding the decoded message being $\hat{w}$.
\end{definition}

\begin{definition}
\label{def:ua-code}
An {\bf unassisted code} can be viewed as a degenerate case
    of an entanglement-assisted code
  where the decoding measurement operates
  only on the channel output $\outS$. Since $\eB$ is completely ignored,
  there is no loss of generality if we take $\eB$ and $\eA$ to be trivial,
  one-dimensional systems, so that $\enc(w)_{\A\qcon\eA}$
  is completely specified by its output $\rho(w)_{\A}$ on $\A$. 
\end{definition}

Figure \ref{fig:EA} illustrates an entanglement-assisted code $\code$
transmitting a message $W$ produced by a source $\src$ via a channel use
$\chS$. The message $W$ and the outcome $\hat{W}$ of the decoding POVM
are classical random variables. The source is specified by the probabilities
\begin{equation}
    \src(w) := \Pr(W=w|\src).
\end{equation}
The probability of error (which depends
on the source, code and channel) is
\begin{equation}
    \Pr(\dmRV\neq\mRV|\chan,\code,\src).
\end{equation}
For $M \in \mathbb{N}$ let $\src_{M}$ denote a source with
$M$ equiprobable messages
i.e.
$\src_{M}(w) = 1/M$.

\begin{definition}\label{def:code-size}
    We call a size $M$ code $\code$
	an $(M,\epsilon,\iS)$ code for $\mc{E}$,
    if
    its average error probability for $M$ equiprobable messages satisfies
	$\Pr(\dmRV\neq\mRV|\chan,\code,\src_{M}) \leq \epsilon$,
	and the average channel input state induced by using
	the code with equiprobable messages is $\iS$.

    We denote by
$\Mi{\epsilon}{\chS}{\iS}$
the largest $M$ such that there is an
$(M,\epsilon,\iS)$ unassisted code,
and by $\M{\epsilon}{\chS}$ the largest $M$
such that there exists an $\iS$ such that there is an $(M,\epsilon,\iS)$
unassisted code.

$\Mei{\epsilon}{\chS}{\iS}$ and $\Me{\epsilon}{\chS}$ denote
the corresponding quantities for entanglement-assisted codes.

\end{definition}
\begin{remark}
    Clearly,
    \begin{equation}
        \Me{\epsilon}{\chS} = \displaystyle\max_{\iS \in \states(\A)} \Mei{\epsilon}{\chS}{\iS}
    \end{equation} and
    \begin{equation}
        \M{\epsilon}{\chS} = \displaystyle\max_{\iS \in \states(\A)} \Mi{\epsilon}{\chS}{\iS}
    \end{equation}
\end{remark}

\section{Summary of results}\label{summary}

In \cite{PPV} Polyanskiy, Poor and Verd\'{u}
showed that many existing \emph{classical}
converse results can be easily
derived from a finite blocklength converse 
(Theorem 27 of \cite{PPV}) which we will call the \emph{PPV converse}.
It is obtained by a simple and conceptually
appealing argument relating coding to
hypothesis testing on the joint distribution of the
channel input and channel output.
Our bounds are given in terms of a quantum hypothesis
testing problem on a bipartite system, consisting of
$\B$ (the output system for the channel operation)
and $\At$ (a copy of the input system). To compactly
describe the hypotheses, we first introduce a little
notation.

\begin{figure}[ht]
    \begin{tikzpicture}[scale=1.2]
        \node (h0) at (-2.7,0.3) {$H_0:$};
		\node[rectangle, draw, thick, fill=white, inner sep=2mm]
		(chan) at (0,0) {$\chan$};
		\draw[draw=black, style=thick]
		(2,0.6) -- (-2,0.6) -- (-2.2,0.3) -- (-2,0) -- (chan) -- (2,0);
		\draw[draw=black, style=gray]
		(1.2,1) -- (1.2,-0.4);
		\node (out) at (1.4,-0.5) {$\chS \purho$};
		\draw[draw=black, style=gray]
		(-1.8,1) -- (-1.8,-0.4);
		\node (out) at (-1.8,-0.5) {$\purho$};
		\draw[draw=black, style=gray]
		(-1,0.1) -- (-1,-0.4);
		\node (out) at (-1,-0.5) {$\rho_{\A}$};
		\draw[draw=black, style=gray]
		(-1,0.5) -- (-1,0.85);
		\node (outc) at (-1,1) {$\rho_{\At}$};
		\node (AcS) at (-1.6,0.8) {$\At$};
		\node (AcS2) at (0.8,0.8) {$\At$};
		\node (AS) at (-1.6,0.2) {$\A$};
		\node (BS) at (0.8,0.2) {$\outS$};
	\end{tikzpicture}
	\begin{tikzpicture}[scale=1.2]
        \node (h) at (-2.7,0.3) {$H_1:$};
		\node[rectangle, draw, thick, fill=white, inner sep=2mm]
		(chan) at (.2,0) {$\sigma_{\outS}$};	
		\node [rectangle, draw, thick, fill=black, inner sep=2mm]
		(deadend) at (-0.5,0) {};
		\draw[draw=black, style=thick]
		(2,0.6) -- (-2,0.6) -- (-2.2,0.3) -- (-2,0) -- (deadend);
		\draw[draw=black, style=thick]
		(chan) -- (2,0);
		
		\draw[draw=black, style=gray]
		(1.2,1) -- (1.2,-0.4);
		\node (out) at (1.4,-0.5) {$\rho_{\At}\sigma_{\outS}$};
		\draw[draw=black, style=gray]
		(-1.8,1) -- (-1.8,-0.4);
		\node (out) at (-1.8,-0.5) {$\purho$};
		\draw[draw=black, style=gray]
		(-1,0.1) -- (-1,-0.4);
		\node (out) at (-1,-0.5) {$\rho_{\A}$};
		\draw[draw=black, style=gray]
		(-1,0.5) -- (-1,0.85);
		\node (outc) at (-1,1) {$\rho_{\At}$};
		\node (AcS) at (-1.6,0.8) {$\At$};
		\node (AcS2) at (0.8,0.8) {$\At$};
		\node (AS) at (-1.6,0.2) {$\A$};
		\node (BS) at (0.8,0.2) {$\outS$};
	\end{tikzpicture}
  \centering
  \caption{The quantum hypothesis testing problem which appears in our bounds.
}\label{fig:qH}
\end{figure}
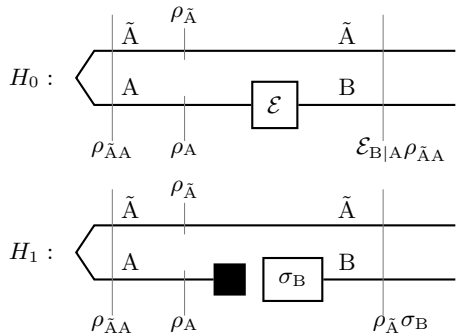   

\begin{definition}\label{Phi-def}
    For any two systems $\Q$, $\Qt$ of the same dimension, let
    $\Phi_{\Qt\Q} := \sum_{i,j=1}^{\dim(\Q)}
\ket{i}_{\Qt}\ket{i}_{\Q}
\bra{j}_{\Qt}\bra{j}_{\Q}$.
\end{definition}
\begin{remark}\label{transpose-trick}
    $\Phi_{\Qt\Q}$ is $\dim(\Q)$ times the isotropic,
    maximally entangled state of $\Qt\Q$.
    Note that $\tr_{\cpy{Q}} \Phi_{\Qt\Q} = \1_{\Q}$
    and $\tr_{\Q} \Phi_{\Qt\Q} = \1_{\Qt}$.
    We will make use, more than once, of the basic fact that
    for any linear operator
$M_{\Q}$ on $\mc{H}_{\Q}$, $M_{\Q} \Phi_{\Qt\Q}
    = M^{\T}_{\Qt} \Phi_{\Qt\Q}$
\end{remark}
\begin{definition}\label{sch-def}
    Given a state $\rho_{\A}$,
    let $\At$ be a copy of system $\A$,
    and define a canonical purification of $\rho_{\A}$ on
    $\A \At$ by
    \begin{equation}
        \purho := \iS^{\half} \Phi_{\At\A} \iS^{\half}.
    \end{equation}
\end{definition}
\begin{remark}
    By Remark \ref{transpose-trick},
    we have $\tr_{\At} \purho = \rho_{\A}$,
    and we find that the
    marginal state of $\At$ is
    \begin{equation}
        \rho_{\At} := \tr_{\A} \purho
        = \id_{\At|\A} \rho_{\A}^{\T}.
    \end{equation}
    Throughout the paper, we regard $\purho$ and $\rho_{\At}$
    as functions of $\rho_{\A}$.
\end{remark}

As shown in Fig.~\ref{fig:qH} the hypotheses specify quantum states
of a bipartite system $\bS$, where $\outS$ is the output
system of $\chS$ and $\At$ is isomorphic to
its input system.
Hypothesis $H_0$ is that $\bS$ is in the state
$\chS \purho$,
whereas hypothesis $H_1$ is that $\bS$ is in the product state
$\rho_{\At}\oS$.

Our main converse bounds are quantum generalisations of Theorem 27
of \cite{PPV}, for entanglement-assisted, and for unassisted, codes.
To state them, let us introduce
\begin{definition}
    \begin{equation}
        \rBch{\mclass}{\epsilon}{\chS}{\iS}
        := \max_{\oS \in \states(\B)}
        \rBeta{\mclass}{\epsilon}{\chS \purho}{\icS \oS},
    \end{equation}
    which is given in terms of Definition \ref{rmeas}.
    In words, this is the minimum type-II error of all tests in
    class $\mclass$ which have type-I error no greater than $\epsilon$
    for the hypothesis testing problem depicted in Figure \ref{fig:qH},
    maximised over all $\sigma_{\outS}$.
\end{definition}

The relationship between the mutual information and
standard relative entropy, informs
\begin{definition}\label{htI}
\begin{align}
    \rIHch{\mclass}{\epsilon}{\chS}{\iS}
    :=& \min_{\oS \in \states(\B)}
    \rDH{\mclass}{\epsilon}{\chS \purho}{\icS \oS}
    \\=&-\log \rBch{\mclass}{\epsilon}{\chS}{\iS}.
\end{align}
We note again that $\purho$ is here the canonical
pure state of Definition \ref{sch-def}
and is to be regarded as a function of $\rho_{\A}$.
\end{definition}

With these and Definition \ref{def:code-size},
we can now state the main results of this paper in a compact form:
\begin{theorem}[Entanglement-assisted converse]
    \label{se-converse}
\begin{align}
    \log \Mei{\epsilon}{\chS}{\iS}
    \leq&
    \rIHch{\all}{\epsilon}{\chS}{\iS}, \text{ and so}\\
    \log \Me{\epsilon}{\chS}
    \leq& \max_{\iS \in \states(\A)}
    \rIHch{\all}{\epsilon}{\chS}{\iS}.
\end{align}
\end{theorem}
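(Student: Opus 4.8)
The plan is to run the quantum ``meta-converse'' argument: show that any good entanglement-assisted code, together with a decoding POVM, furnishes a good test for the hypothesis testing problem of Figure~\ref{fig:qH}, so that the existence of an $(M,\epsilon,\iS)$ EA code forces a lower bound $1/M$ on the achievable type-II error $\rBeta{\all}{\epsilon}{\chS\purho}{\icS\oS}$ for \emph{every} choice of $\oS$. Taking logarithms and then the minimum over $\oS$ gives $\log M \leq \rIHch{\all}{\epsilon}{\chS}{\iS}$, and maximising over the induced input state $\iS$ gives the second displayed inequality by the Remark following Definition~\ref{def:code-size}.

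Concretely, fix an $(M,\epsilon,\iS)$ EA code with shared state $\ient_{\eA\eB}$, encodings $\enc(w)_{\A\qcon\eA}$, and decoding POVM $\{D(\hat w)_{\eB\outS}\}$. First I would set up the joint state: feed one half of a purification of $\src_M$ (i.e.\ a uniform classical register $W$) through the encoding and the channel. The key observation is that, because the messages are equiprobable, the \emph{average} channel input is $\iS$, and one can arrange that the classical-quantum state on ($W$, channel output $\outS$, receiver side $\eB$) is related, after an appropriate isometric relabelling, to $\chS\purho$ on $\At\outS$ -- here $\At$ plays the role of the purifying copy of the message/input, exploiting $\purho = \iS^{\half}\Phi_{\At\A}\iS^{\half}$ and the transpose trick of Remark~\ref{transpose-trick}. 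Second, I would build the test $T_{\At\outS}$: under $H_0$ the correct decoding succeeds with probability $\geq 1-\epsilon$, so the test ``does the decoding and checks whether $\hat W = W$'' has type-I error $\leq \epsilon$. Third, and this is the crux, I must bound the type-II error -- the probability this same test accepts when the state is instead $\icS\oS$, a product of the marginal $\rho_{\At}$ with an \emph{arbitrary} $\oS$ on $\outS$. Under $H_1$ the output system carries no information about the message, so the decoder is guessing; by the symmetry of the uniform prior over $M$ messages the success probability is exactly $1/M$, independent of $\oS$. Hence $\rBeta{\all}{\epsilon}{\chS\purho}{\icS\oS} \leq 1/M$ for all $\oS$, which is what we want.

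The main obstacle is the bookkeeping in the second and third steps: making precise how the EA code's decoding POVM $D_{\eB\outS}$, which acts on $\eB\outS$, descends to a \emph{bona fide} test $T_{\At\outS}$ on the bipartite system $\At\outS$ appearing in the hypothesis testing problem. This requires (i) absorbing the shared entanglement $\ient_{\eA\eB}$ and the encodings into a description of the global state, (ii) using that for the $\all$ class \emph{any} POVM element is an admissible test (so no locality constraint needs checking here -- that subtlety is only relevant to the unassisted bound), and (iii) verifying that the ``$\hat W = W$'' event, pulled back through the purification of $W$, corresponds to taking the trace of $T_{\At\outS}$ against $\chS\purho$ resp.\ $\icS\oS$. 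Once the identification of states and test is set up cleanly, both error bounds are essentially immediate; the rest is the $-\log$ and the two maximisations/minimisations, which are formal.

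I would close by noting that the argument uses no property of $\chS$ beyond its being an operation, so it applies verbatim to $n$ channel uses (the ``one-shot'' nature of the bound), and that the reduction of the unassisted case will be the degenerate specialisation in which $\eA,\eB$ are trivial -- to be handled separately since there the test inherits a locality restriction and the class $\all$ must be replaced by a restricted class.
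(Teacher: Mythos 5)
Your proposal follows the paper's meta-converse approach exactly: Proposition~\ref{code2test} constructs precisely the test you describe, and the type-I ($\leq\epsilon$) and type-II ($=1/M$ for every $\sigma_{\outS}$) error bounds are the ones you sketch. The ``bookkeeping'' you defer is the technical heart -- the paper does it by Stinespring-dilating the encoder (setting $\eA = \sT\A$, $\eB = \cpy{\sT}\At$) and applying the transpose trick of Remark~\ref{transpose-trick} to arrive at $T_{\At\outS} = \rho_{\At}^{-1/2}(\tr_{\cpy{\sT}}R_{\cpy{\sT}\At\outS})\rho_{\At}^{-1/2}$ with the crucial operator bound $0\leq R_{\cpy{\sT}\At\outS}\leq\rho_{\cpy{\sT}\At}\otimes\1_{\outS}$ guaranteeing $T\leq\1$; in your purification phrasing the same conclusion follows, but only once you include purifiers for both the message register and the encoder's Stinespring environment (the state on $W\eB\outS$ alone is not pure and is not isometrically related to $\chS\purho$), after which the relabelling is a genuine isometry and conjugating the decoding POVM element through it automatically yields a valid test on $\At\outS$ -- a point your item (ii) glosses over but which is where the proof actually lives.
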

\begin{theorem}[Unassisted converse]
    \label{ua-converse}
    For any class of operations $\mclass$ containing $\LO$, we have
\begin{align}
    \log \Mi{\epsilon}{\chS}{\iS}
    \leq&
    \rIHch{\mclass}{\epsilon}{\chS}{\iS}, \text{ and so}\\
    \log M_{\epsilon}(\chS)
    \leq& \max_{\iS \in \states(\A)}
    \rIHch{\mclass}{\epsilon}{\chS}{\iS}.
\end{align}
\end{theorem}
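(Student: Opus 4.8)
The plan is to prove the bound on $\log \Mi{\epsilon}{\chS}{\iS}$ and then take a maximum over $\iS$ to obtain the second inequality. Given an $(M,\epsilon,\iS)$ unassisted code $\code$ for $\chS$, the idea is to build, out of the code, a specific measurement operation in the class $\LO$ (hence in any class $\mclass \supseteq \LO$), and a specific classical test acting on its outcome, which does well in the hypothesis testing problem of Figure \ref{fig:qH} for \emph{every} choice of $\oS$. Since $\rBch{\mclass}{\epsilon}{\chS}{\iS}$ maximises over $\oS$, and each fixed strategy gives an upper bound on the type-II error for that $\oS$ and a valid type-I constraint, we will get a lower bound on $\rBch{\mclass}{\epsilon}{\chS}{\iS}$ in terms of $1/M$, i.e. $\rIHch{\mclass}{\epsilon}{\chS}{\iS} \geq \log M$.

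Concretely, first I would set up the channel-input state. With $M$ equiprobable messages and codeword states $\rho(w)_{\A}$, the average input is $\iS = \frac{1}{M}\sum_w \rho(w)_{\A}$. I want to relate the \emph{pair} $(\At,\A)$ carrying $\purho = \iS^{1/2}\Phi_{\At\A}\iS^{1/2}$ to a classical-quantum picture in which $\At$ records the message. The cleanest route is: since $\purho$ is a purification of $\iS$, and the ensemble $\{(1/M,\rho(w)_{\A})\}$ also gives $\iS$, there is a measurement on $\At$ (a POVM $\{G(w)_{\At}\}$, obtained from the operators $\iS^{-1/2}\rho(w)_{\A}^{1/2}$ via the transpose trick in Remark \ref{transpose-trick}) such that measuring $\At$ with this POVM yields outcome $w$ with probability $1/M$ and collapses $\A$ to $\rho(w)_{\A}$. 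This POVM on $\At$ is the ``Alice'' part of an $\LO$ test. After the channel acts on $\A$, producing $\chan_{\outS|\A}\rho(w)_{\A}$ on $\outS$, the decoder's POVM $\{D(\hat w)_{\outS}\}$ is the ``Bob'' part. The test $T_{\At\outS} \in \tests^{\LO}(\At:\outS)$ is then the one given by \eqref{LOT} with $T(w,\hat w) = [\hat w = w]$, i.e. $T_{\At\outS} = \sum_w G(w)_{\At}\otimes D(w)_{\outS}$; ``accept $H_0$'' means the decoded message matches the one read off $\At$.

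Next I would verify the two inequalities. For the type-I error under $H_0$ (state $\chS\purho$): $\tr[(\chS\purho) T_{\At\outS}] = \sum_w \tfrac1M \tr[D(w)_{\outS}\,\chan_{\outS|\A}\rho(w)_{\A}] = \Pr(\dmRV = \mRV \mid \chan,\code,\src_M) \geq 1-\epsilon$, using that the code has average error at most $\epsilon$. For the type-II quantity under $H_1$ (state $\icS\oS$, arbitrary $\oS$): because $T_{\At\outS} = \sum_w G(w)_{\At}\otimes D(w)_{\outS}$ is a sum of product tests and $\sum_{\hat w} D(\hat w)_{\outS} = \1_{\outS}$, we get $\tr[(\icS\oS) T_{\At\outS}] = \sum_w \tr[G(w)_{\At}\,\icS]\,\tr[D(w)_{\outS}\,\oS] \leq \sum_w \tr[G(w)_{\At}\,\icS] \cdot 1$. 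The last sum equals $\tr[\icS \sum_w G(w)_{\At}]$; since $\sum_w G(w)_{\at}$ need not be the identity I would instead use that $\tr[G(w)_{\At}\icS] = 1/M$ for each $w$ (this is exactly the statement that the $\At$-measurement reproduces the ensemble probabilities on the reduced state $\icS = \iS^{\T}$), so the sum over the $M$ messages is $M\cdot(1/M)=1$... which is too weak. The fix, and the genuinely important step, is to be more careful: $\tr[D(w)_{\outS}\oS] \leq \tr[\1_{\outS}\oS] = 1$ is wasteful; instead sum first over $w$ inside, writing $\tr[(\icS\oS)T_{\At\outS}] = \sum_w \tfrac1M \tr[D(w)_{\outS}\oS]$ and noting $\sum_w D(w)_{\outS} \leq \1_{\outS}$, giving $\leq \tfrac1M \tr[\1_{\outS}\oS] = \tfrac1M$. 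Hence for every $\oS$, $\rBeta{\LO}{\epsilon}{\chS\purho}{\icS\oS} \leq 1/M$, so $\rBch{\mclass}{\epsilon}{\chS}{\iS} \leq 1/M$ a priori looks like the wrong direction — so the argument must instead run: this exhibited test certifies $\rBeta{\mclass}{\epsilon}{\chS\purho}{\icS\oS} \leq 1/M$ only as an \emph{upper} bound on the min, whereas we need the max over $\oS$ of these minima to be \emph{large}.

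So the real structure of the proof is a minimax/duality step, and \textbf{this is the main obstacle}: one must show that there \emph{exists} a $\oS$ for which \emph{no} admissible test in $\mclass$ can beat type-II error $1/M$ while keeping type-I error $\leq \epsilon$. The natural choice is $\oS = \tr_{\A}[\chS\purho] = \chan_{\outS|\A}\iS$, the actual channel output. Then $\icS\oS$ is the product of the two marginals of $\chS\purho$, and I would argue as in \cite{PPV}: any test $T_{\At\outS} \in \tests^{\mclass}(\At:\outS)$ with $\tr[(\chS\purho)T] \geq 1-\epsilon$, when averaged appropriately against the code structure (use the decoding POVM to define a hypothesis test distinguishing ``correct codeword'' from ``independent $\oS$''), yields $\tr[(\icS\oS)T] \geq (1-\epsilon)/M$ — the meta-converse inequality. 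Concretely I would lower-bound $\tr[(\icS\oS)T_{\At\outS}]$ by conditioning on the message and using that for the actual code the channel output $\chan\rho(w)$ cannot be perfectly distinguished from the average $\oS$; the $1/M$ enters because the $\At$ system must be guessed among $M$ equally likely values and the test is local across $\At:\outS$. I expect to invoke Proposition \ref{test-beta} to pass freely between the ``measurement operation + classical test'' picture and the ``test $T_{\At\outS} \in \tests^{\mclass}$'' picture, and Proposition \ref{dpi} (data processing) to handle the reduction from a general code to this canonical hypothesis test. Assembling: $1/M \geq \rBeta{\mclass}{\epsilon}{\chS\purho}{\icS\oS}$ for the chosen $\oS$ would be the wrong sign, so the correct chain is $\rBch{\mclass}{\epsilon}{\chS}{\iS} = \max_{\oS}\rBeta{\mclass}{\epsilon}{\chS\purho}{\icS\oS} \geq \rBeta{\mclass}{\epsilon}{\chS\purho}{\icS\oS^\star} \geq 1/M$ with $\oS^\star = \chan\iS$, where the last inequality is the meta-converse lower bound just sketched; taking $-\log$ gives $\rIHch{\mclass}{\epsilon}{\chS}{\iS} \geq \log M$, hence $\log \Mi{\epsilon}{\chS}{\iS} \leq \rIHch{\mclass}{\epsilon}{\chS}{\iS}$. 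Finally, maximising both sides over $\iS \in \states(\A)$ and using the Remark after Definition \ref{def:code-size} gives $\log M_{\epsilon}(\chS) \leq \max_{\iS}\rIHch{\mclass}{\epsilon}{\chS}{\iS}$, completing the proof. The delicate points to get right are (i) the construction of the $\At$-side POVM via the transpose trick so that it lies in $\LO \subseteq \mclass$, and (ii) the meta-converse lower bound $\tr[(\icS\oS^\star)T] \geq (1-\epsilon)/M$ — actually it suffices to get $\geq 1/M$ up to the $\epsilon$-slack absorbed into the type-I constraint, exactly paralleling Theorem 27 of \cite{PPV} — which is where the locality of $\mclass$ relative to the $\At:\outS$ cut is essential and where I anticipate spending most of the effort.
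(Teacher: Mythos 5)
Your proposal actually \emph{solves} the problem in the middle and then talks itself out of the correct answer. Let me untangle the confusion, because the construction you set up is exactly right and matches the paper's Proposition~\ref{code2test}.

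You build the $\LO$ test $T_{\At\outS} = \sum_{w} G(w)_{\At} \otimes D(w)_{\outS}$ and correctly verify (i) $\tr[(\chS\purho)T_{\At\outS}] \geq 1-\epsilon$, so the type-I error constraint is met, and (ii) $\tr[(\icS\oS)T_{\At\outS}] = \tfrac1M\sum_w\tr[D(w)_{\outS}\oS] \leq \tfrac1M$ using $\sum_w D(w)_{\outS} \leq \1_{\outS}$, \emph{for every} $\oS$. Since $\rBeta{\mclass}{\epsilon}{\cdot}{\cdot}$ is an \emph{infimum} over feasible tests of the type-II error and $T_{\At\outS}\in\tests^{\LO}\subseteq\tests^{\mclass}$ is feasible, this gives $\rBeta{\mclass}{\epsilon}{\chS\purho}{\icS\oS} \leq 1/M$ for every $\oS$. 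Taking the max over $\oS$, $\rBch{\mclass}{\epsilon}{\chS}{\iS} \leq 1/M$, hence $\rIHch{\mclass}{\epsilon}{\chS}{\iS} = -\log\rBch{\mclass}{\epsilon}{\chS}{\iS} \geq \log M$. \textbf{This is the end of the proof.} No minimax, no special $\oS^\star$, no ``meta-converse lower bound'' is needed; the same single test works uniformly over all $\oS$ precisely because the completely useless reference channel (constant output $\oS$) gives the code success probability exactly $1/M$, and that number is independent of $\oS$.

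The point where you go wrong is the sentence ``so $\rBch{\mclass}{\epsilon}{\chS}{\iS} \leq 1/M$ a priori looks like the wrong direction''. It is the \emph{right} direction: $\rIHch{\mclass}{\epsilon}{\chS}{\iS}$ large means $\rBch{\mclass}{\epsilon}{\chS}{\iS}$ small, and you want $\rIHch{\mclass}{\epsilon}{\chS}{\iS}\geq\log M$, i.e.\ $\rBch{\mclass}{\epsilon}{\chS}{\iS}\leq 1/M$. From there you pivot to a supposed fix that asserts $\rBeta{\mclass}{\epsilon}{\chS\purho}{\icS\oS^\star}\geq 1/M$ for a special $\oS^\star=\chS\iS$, conclude $\rBch{\mclass}{\epsilon}{\chS}{\iS}\geq 1/M$, and then take $-\log$ to get $\rIHch{\mclass}{\epsilon}{\chS}{\iS}\geq\log M$. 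That last step is a sign error: $\rBch{\mclass}{\epsilon}{\chS}{\iS}\geq 1/M$ gives $-\log\rBch{\mclass}{\epsilon}{\chS}{\iS}\leq\log M$, the \emph{wrong} inequality. Moreover, a lower bound like $\rBeta{\mclass}{\epsilon}{\chS\purho}{\icS\oS^\star}\geq 1/M$ cannot be obtained by exhibiting a test (exhibiting a test only upper-bounds an infimum); it would require ruling out \emph{all} admissible tests, which is a different and much harder kind of statement, and is not what this theorem needs. Strike the entire ``fix'' paragraph, keep your first computation, and the proof is complete (modulo the routine step of maximising over $\iS$ for the second displayed inequality, which you handle correctly).
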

The proof of both of these is given in the next section
(\ref{sec:fbc}). In section \ref{sec:props}
we prove a number of properties of the bounds, which we first summarise here.
The converse for entanglement-assisted codes in Theorem \ref{se-converse}
has a number of desirable properties:
\begin{enumerate}
	\item
	Like the bound of Datta and Hsieh~\cite{DH},
it is \emph{asymptotically tight} for memoryless channels.
That is to say, analysing the large block length behaviour of the bound
    for memoryless channels recovers the converse part of
    the single-letter formula for entanglement-assisted capacity
    proven by Bennett, Shor, Smolin and Thapliyal \cite{BSST}.
This is shown in subsection \ref{ssec:asymp}.

    \item For a fixed blocklength, the converse of
    Datta and Hsieh \cite{DH} grows slowly, but without bound,
    as $\epsilon \to 0$ whereas our converse is
    a decreasing function of $\epsilon$.

\item Generalising results of Polyanskiy~\cite{2013-Polyanskiy}
we show that
    $\rBeta{\all}{\epsilon}{\chS \purho}{\icS \oS}$ is convex in
    $\rho_{\A}$ and concave in $\oS$.
    This enables one to use symmetries of the channel to restrict
    the optimisation over $\iS$ and $\oS$ to states with
    corresponding symmetries, as we show in subsection \ref{sym}.

\item In subsection \ref{ssec:sdps}, we give
an explicit formulation of the bound as \emph{semidefinite program} (SDP)
which is a natural generalisation of the linear program (LP)
given in \cite{2012-Matthews} for the PPV converse.
\end{enumerate}

Regarding our converse for unassisted codes, Theorem \ref{ua-converse},
in subsection \ref{ssec:WR}
the Wang-Renner bound is shown to be equivalent to
making the (sometimes suboptimal \cite{2013-Polyanskiy}) choice
$\oS = \chS \iS$
and taking $\mclass$ to be the class of operations $\oneway$
(local operations and one-way classical communication from Alice to Bob).

Since the Wang-Renner bound is asymptotically tight for the
unassisted capacity (and even for
the product state capacity, thus recovering the HSW theorem),
the stronger bound obtained using $\LO$
also has these properties. Unfortunately, it lacks an SDP
formulation and does not possess the concavity property
mentioned above.
However, the formulation in terms of restricted hypothesis testing makes
it clear that by moving to less restrictive conditions on the test, we might
obtain weaker, but more tractable bounds.
When $\mclass$ is $\oneway$, or the larger class $\PPT$ (see next section),
the concavity property does hold (see Theorem \ref{thm:cvx}),
and we can therefore use the symmetrisation arguments.

For $\PPT$ the bound has the advantage that it can be formulated as
an SDP (see subsection \ref{ssec:sdps}).
It seems unlikely that the $\PPT$ bound is in general,
asymptotically tight, but it might prove useful for certain channels.

\section{Proof via metaconverse}\label{sec:fbc}

Just as in \cite{PPV}, our main results (Theorems \ref{se-converse}
and \ref{ua-converse}) are consequences of a
more general `meta-converse'.
Following \cite{2010-PolyanskiyVerdu,2012-SharmaWarsi}
we first express the general idea of the meta-converse
using ``generalised divergences''. The hypothesis testing
based bounds can be obtained by using the hypothesis-testing
relative entropies as the divergences.

Let $(1-\epsilon, \epsilon)_{\C}$ denote
a state diagonal in the classical basis
with eigenvalues $(1-\epsilon, \epsilon)$
on a two-dimensional system $\C$
(it represents a binary probability distribution).

Suppose that, given a code $\code$, we can find
a measurement operation $\bmo_{\C|\A\B}$ with a binary outcome such that,
for all channel operations $\chS$ from $A$ to $B$, we have
\begin{equation}\label{bid}
    \bmo_{\C|\A\B} \chS \purho = (1-\epsilon, \epsilon)_{\C},
\end{equation}
where $\epsilon$ is the error probability of the
code $\mc{Z}$ for $\mc{E}$,
and $\purho$ is a canonical purification of the average
input $\rho_{\A}$ made to the channel
by the encoder (see Definition \ref{sch-def}).

Taking a reference channel operation $\mc{F}_{\outS|\A}$
for which the error probability $f$ obtained by the
code is known (or bounded) and
a measure of state distinguishability 
(a ``generalised divergence'') $\sdm$
which is non-increasing under the operation $\bmo$
we have
\begin{align}
    &\sdm((1-\epsilon, \epsilon)_{\C}, (1-f, f)_{\C})\\
    &\qquad =\ \sdm(\bmo_{\C|\A\B} \chS \purho ,
    \bmo_{\C|\A\B} \mc{F}_{\outS|\A} \purho )\\
    &\qquad\leq\ \sdm(\chS \purho,
     \mc{F}_{\outS|\A} \purho),
\end{align}
bounding the difference between $\epsilon$ and the known
quantity $f$. 

In the classical meta-converse of Polyanskiy et al. \cite{PPV}
$\rho_{\A} = \sum_{x} p(x) \ketbra{x}{x}_{\A}$ is the
classical distribution on the input alphabet
induced by the encoding of equiprobable messages, and
$\chS \purho$ is the joint probability distribution
over the input and output alphabets induced by the
use of the channel.
In Wang--Renner and in Hayashi \cite{2006-HayashiBook},
$\chS$ is a classical-quantum (c-q) channel which takes
input symbol $x$ to some output state $\tau(x)_{\outS}$.
In these bounds $\rho_{\A}$ is again
a probability distribution over the input symbols,
while $\chS \purho$
is now the quantum state
\begin{equation}
    \chS \purho = \sum_{x} p(x)\ketbra{x}{x}_{\At} \otimes
     \tau(x)_{\B}
\end{equation}
where $\tau(x)_{\B}$ is the output of the classical-quantum channel
on the input symbol $x$. 
If a code of size $M$ is used with a channel operation whose output
is a fixed state $\sigma$, independent of its input, its error
probability is $1/M$.
Hayashi and Wang-Renner implicitly use an such an operation
for $\mc{F}_{\outS|\A}$ with
$\sigma_{\outS} = \chS \rho_{\A}$.
In Theorem 27 of Polyanskiy et al. \cite{PPV} the bound is optimised over all such operations.
Hayashi uses the quantum relative R\'{e}nyi entropy for
$\sdm$ while Wang-Renner and Polyanskiy et al. use
the (unrestricted) hypothesis-testing relative entropy,
which itself depends on $\epsilon$. 

We first show how to construct from an entanglement-assisted code
a measurement operation satisfying (\ref{bid}).
\begin{proposition}\label{code2test}
    From any entanglement-assisted code $\code$
    and source $\src$,
    such that the average input state
    is $\rho_{\A}$,
    one can construct a test $T_{\bS} \in \tests(\At:\outS)$
    such that, for all $\chS \in \ops(\A \to \B)$,
    \begin{equation}
    \Pr(\dmRV=\mRV|\chan,\code,\src)
    = \tr_{\bS} T_{\bS} \chS \purho.
    \end{equation}
    Furthermore, if $\code$ is unassisted, then
    $T_{\bS} \in \tests^{\LO}(\At:\outS)$.
\end{proposition}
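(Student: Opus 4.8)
The plan is to simulate the whole coding experiment --- message selection, encoding, channel use, and decoding --- using only the systems $\At$ and $\outS$, and to let $T_{\bS}$ be the ``accept'' POVM element of the resulting binary measurement. Write $\rho(w)_{\A}$ for the channel input conditioned on message $w$, put $\xi(w)_{\A\eB} := \enc(w)_{\A\qcon\eA}\ient_{\eA\eB}$ for the state of $\A\eB$ just before the channel acts, and
\[
    \zeta_{\mRV\A\eB} := \sum_{w} \src(w)\, \ketbra{w}{w}_{\mRV} \otimes \xi(w)_{\A\eB}
\]
for the state that also records the message in a classical register $\mRV$; note $\tr_{\mRV\eB}\zeta_{\mRV\A\eB} = \iS$. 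Unwinding the definitions of code, source and channel gives
\[
    \Pr(\dmRV=\mRV|\chan,\code,\src) = \sum_{w}\src(w)\, \tr\bigl[ D(w)_{\eB\outS}\,(\chS\otimes\id_{\eB})\,\xi(w)_{\A\eB} \bigr].
\]

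The key step is to move the ``$\A\eB$-side'' data --- the message register $\mRV$ and Bob's share $\eB$ --- onto the purifying copy $\At$. Since $\purho$ purifies $\iS$ with purifying system $\At$, and any purification of $\zeta_{\mRV\A\eB}$ is again a purification of $\iS$, the equivalence of purifications up to an isometry on the purifying system (after enlarging the environment of $\zeta$ if needed) yields a channel $\mathcal{W}_{\mRV\eB\qcon\At}$ with $(\id_{\A}\otimes\mathcal{W}_{\mRV\eB\qcon\At})\purho = \zeta_{\mRV\A\eB}$; dephasing the output register $\mRV$ turns it into an instrument $\{\mathcal{R}^{(w)}_{\eB\qcon\At}\}_{w}$ on $\At$ (so $\sum_{w}\mathcal{R}^{(w)}_{\eB\qcon\At}$ is an operation) with $(\mathcal{R}^{(w)}_{\eB\qcon\At}\otimes\id_{\A})\purho = \src(w)\,\xi(w)_{\A\eB}$. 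I then set
\[
    T_{\bS} := \sum_{w} \bigl( (\mathcal{R}^{(w)}_{\eB\qcon\At})^{\dagger}\otimes\id_{\outS} \bigr)\bigl( D(w)_{\eB\outS} \bigr).
\]
Each summand is positive (the adjoint of a completely positive map is completely positive), and since $D(w)_{\eB\outS}\leq\1_{\eB\outS}$ while the adjoint of the operation $\sum_{w}\mathcal{R}^{(w)}_{\eB\qcon\At}$ is unital, $0\leq T_{\bS}\leq\1_{\At\outS}$, so $T_{\bS}\in\tests(\At:\outS)$. Using the defining property of the adjoint, and that $\chS$ (acting on $\A$) and $\mathcal{R}^{(w)}_{\eB\qcon\At}$ (acting on $\At$) occupy disjoint tensor factors of $\purho$,
\begin{align*}
    \tr_{\bS} T_{\bS}\,\chS\purho
    &= \sum_{w} \tr\bigl[ D(w)_{\eB\outS}\,(\mathcal{R}^{(w)}_{\eB\qcon\At}\otimes\id_{\outS})\,\chS\purho \bigr] \\
    &= \sum_{w}\src(w)\, \tr\bigl[ D(w)_{\eB\outS}\,(\chS\otimes\id_{\eB})\,\xi(w)_{\A\eB} \bigr] = \Pr(\dmRV=\mRV|\chan,\code,\src),
\end{align*}
which is the first assertion.

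For the unassisted case $\eA$ and $\eB$ are trivial, $\xi(w)_{\A}=\rho(w)_{\A}$, and $\mathcal{R}^{(w)}_{\eB\qcon\At}$ collapses to a single POVM element $E(w)_{\At}\geq0$ on $\At$; solving $(\mathcal{R}^{(w)}\otimes\id_{\A})\purho=\src(w)\rho(w)_{\A}$ with $\purho=\iS^{1/2}\Phi_{\At\A}\iS^{1/2}$ and the ricochet identity $M_{\Q}\Phi_{\Qt\Q}=M_{\Qt}^{\T}\Phi_{\Qt\Q}$ of Remark~\ref{transpose-trick} gives $E(w)_{\At}=\src(w)\,\rho_{\At}^{-1/2}\,\rho(w)_{\At}\,\rho_{\At}^{-1/2}$, where $\rho(w)_{\At}:=\id_{\At|\A}\rho(w)_{\A}^{\T}$ and the inverse is a pseudo-inverse on $\supp(\rho_{\At})$, these elements forming a POVM on $\At$. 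Then $T_{\bS}=\sum_{w}E(w)_{\At}\otimes D(w)_{\outS}$, which is visibly of the form \eref{LOT}: take the POVM $\{E(w)_{\At}\}_{w}$ on $\At$, the decoding POVM $\{D(\hat{w})_{\outS}\}_{\hat{w}}$ on $\outS$, and coefficients $T(w,\hat{w})=\delta_{w\hat{w}}$; hence $T_{\bS}\in\tests^{\LO}(\At:\outS)$.

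The main obstacle is exhibiting the recovery instrument $\{\mathcal{R}^{(w)}_{\eB\qcon\At}\}$: everything else is bookkeeping with adjoints and the ricochet identity, but that step rests on the equivalence of purifications and on checking that the resulting completely positive maps carry the right positivity and normalisation. A route that sidesteps purifications altogether is to guess $T_{\bS}$ directly and verify $\tr_{\bS}T_{\bS}\chS\purho=\Pr(\dmRV=\mRV)$ by a Kraus-operator computation, expanding $\chS$ through its Choi operator $(\chS\otimes\id_{\At})\Phi_{\At\A}$ and using $\chS\purho=\rho_{\At}^{1/2}\bigl((\chS\otimes\id_{\At})\Phi_{\At\A}\bigr)\rho_{\At}^{1/2}$; the price is heavier index manipulation involving Bob's system.
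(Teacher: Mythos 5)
Your proof is correct, but it reaches Proposition~\ref{code2test} by a genuinely different route from the paper's. You conjure the recovery instrument $\{\mathcal{R}^{(w)}_{\eB\qcon\At}\}_{w}$ abstractly, from uniqueness of purifications: $\purho$ and any purification of the cq-state $\zeta_{\mRV\A\eB}$ both purify $\iS$ across $\A$, so an isometry from $\At$ into $\mRV\eB\enS$ carries one into the other, and tracing out the environment and dephasing $\mRV$ hands you the instrument for free. The paper instead builds the recovery map by hand: it reduces to a pure $\ient$, passes through the Stinespring unitary $U(w)_{\sT\A}$ of the encoding map, and uses the ricochet identity of Remark~\ref{transpose-trick} to slide $M(w)_{\sT\A}=U(w)_{\sT\A}\ient_{\sT\A}^{1/2}$ from $\sT\A$ onto $\cpy{\sT}\At$, yielding an explicit $R_{\cpy{\sT}\At\outS}$ and hence a closed-form $T_{\bS}=\rho_{\At}^{-1/2}R_{\bS}\rho_{\At}^{-1/2}$. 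The two constructions coincide in the end --- the paper's Kraus operators are precisely a Kraus representation of your $(\mathcal{R}^{(w)})^{\dagger}$ --- but yours is conceptually lighter (no index bookkeeping; the bound $0\le T_{\bS}\le\1_{\bS}$ drops straight out of adjoints of CP maps being CP and adjoints of channels being unital), while the paper's is fully explicit, which pays off exactly where it wants a concrete formula, namely the local POVM elements in equation~(\ref{local-T}). One small nit in your unassisted step: the operators $E(w)_{\At}=\src(w)\,\rho_{\At}^{-1/2}\rho(w)_{\At}\rho_{\At}^{-1/2}$ sum to the projector onto $\supp(\rho_{\At})$, not to $\1_{\At}$, so as written they are a sub-POVM; to land in the form~(\ref{LOT}) one should append $E(0)_{\At}=\1_{\At}-\rho_{\At}^{-1/2}\rho_{\At}\rho_{\At}^{-1/2}$ with $T(0,\hat{w})=0$, exactly as the paper does.
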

\begin{proof}
We consider a general entanglement-assisted code as depicted
in Figure \ref{fig:EA}, and
using the notation established in Definition \ref{def:ea-code}.
Since it is always possible to augment $\eA$ to $\eA' \eA$, 
let $\ient_{\eA'\eA\eB}$ be a purification of $\ient_{\eA\eB}$,
and take $\enc(\mV)_{\A\qcon\eA'\eA}
:=  \enc(\mV)_{\A\qcon\eA} \tr_{\eA'}$,
we can assume that $\ient_{\eA\eB}$ is pure.

Let the isometry
$U(w) \in \lm(\Hs_{\eA},\Hs_{\sT}\ox\Hs_{\A})$
be the Stinespring representation
of the encoding map $\enc(\mV)_{\A\qcon\eA}$,
where $\sT$ is the discarded
environment system. In fact, we can just take $\eA = \sT\A$ so that
$U(w)_{\sT\A}$ is a unitary.
So, the encoding map for message $w$ is
$\enc(\mV)_{\A\qcon\sT\A}: X_{\sT\A} \mapsto
\tr_{\sT} U(w)_{\sT\A} X_{\sT\A} U(w)_{\sT\A}^{\dagger}$.
Finally, there is no loss of generality in demanding that
$\eB = \cpy{\sT}\At \cong \sT\A$.
This reformulation of the protocol of Fig. \ref{fig:EA}
is illustrated in Fig. \ref{fig:EA2}.

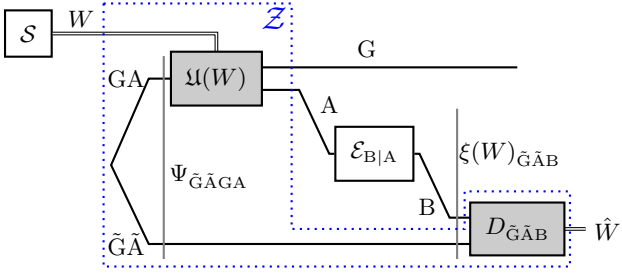
\begin{figure}[h]
	\begin{tikzpicture}[scale=1]
		\node (abox) at (-3,1) {};
		
		\node[rectangle, draw, thick, fill=white, inner sep=2mm]
		(chan) at (-0.9,0) {$\mc{E}_{\outS \qcon \A }$};
		
		\node (bbox) at (1,-1) {};

		\node (output) at (2.2,-1) {$\hat{W}$};

		\node[rectangle, draw, thick, fill=white, inner sep=2mm]
		(src) at (-5.5,1.6) {$\src$};
		
		\draw[draw=black, style=thick]
		(-3,1)--(-3.9,1) -- (-4.4,-0.15) -- (-3.9,-1.2) -- (1,-1.2);
		
		\node (input) at (-4.8,1.8) {$W$};
		\draw[draw=black, style=double]
		(src) -| (-3,1.2);
		
		\draw[draw=black, style=thick]
		(-3,1.15) -- (1,1.15);

		\draw[draw=black, style=thick]
		(-3,0.85)--(-1.9,0.85)--
		(-1.5,0)--(chan)--(-0.3,0)--
		(0.1,-0.85)--(1,-0.85);
		

		\draw[draw=black, style=gray,thick]
		(-3.7,-1.4) -- (-3.7,1.3);
		\node (ie) at (-3.1,-0.3) {$\ient_{\cpy{\sT}\At\sT\A}$};
		
		\draw[draw=black, style=gray,thick]
		(0.2,-1.4) -- (0.2,0.6);
		\node (out) at (0.9,0) {$\xi(W)_{\cpy{\sT}\At\outS}$};
		
		\draw[draw=black, style=double]
		(bbox) -- (output);
		
		\node[rectangle, draw, thick, fill=black!20, inner sep=2mm]
		(abox2) at (abox) {$\mathfrak{U}(W)$};
		
		\node[rectangle, draw, thick, fill=black!20, inner sep=2mm]
		(bbox2) at (bbox) {$D_{\cpy{\sT}\At\outS }$};

		\draw[draw=black, style=dotted,thick,blue]
		(-4.5,2)--(-2,2)--(-2,-1)--(0.3,-1)--(0.3,-0.5)
		--(1.7,-0.5)--(1.7,-1.5)--(-4.5,-1.5)--(-4.5,2);
		\node[style=blue] (code) at (-2.3,1.8) {
		\begin{large} $\code$ \end{large}};

		\node (AGS) at (-4.2,1) {$\sT\A$};
		\node (AGcS) at (-4.2,-1.2) {$\cpy{\sT}\At$};
		\node (AS) at (-1.5,0.65) {$\A$};
		\node (GS) at (-1,1.4) {$\sT$};
		\node (BS) at (-0.2,-0.7) {$\outS$};

	\end{tikzpicture}
  \centering
	\caption{Reformulation of the protocol of Fig. \ref{fig:EA}.}
  \label{fig:EA2}
\end{figure}
First note that
\begin{equation} 
\begin{split}  
    &U(w)_{\sT\A}
	\ient_{\sT\A\cpy{\sT}\At} U(w)_{\sT\A}^{\dag}\\
	=&M(w)_{\sT\A}
	    \ent_{\sT\A\cpy{\sT}\At}
	M(w)_{\sT\A}^{\dagger}
\end{split}
\end{equation}
where $M(w)_{\sT\A} := U(w)_{\sT\A} \ient_{\sT\A}^{\half}$
and $\ient_{\sT\A} := \tr_{\cpy{\sT}\At} \ient_{\sT\A\cpy{\sT}\At}$.
The state of $\sT\A$ after encoding if the message $W = w$ is
\begin{align}
    \rho(w)_{\sT\A} =
    M(w)_{\sT\A} M(w)_{\sT\A}^{\dagger}.
\end{align}
Referring to the diagram, we see that
\begin{equation}
    \begin{split}
	&\Pr(\dmRV=\dmV|\mRV=\mV, \chan, \code)
	\\ =&
	\tr_{\cpy{\sT}\At\outS}
	D(\hat{w})_{\cpy{\sT}\At\outS}
	\xi(w)_{\cpy{\sT}\At\outS}
\end{split}
\end{equation}
where $D(\hat{w})_{\cpy{\sT}\At\outS}$ is the POVM element
corresponding to the decoded message being $\hat{w}$, and
where
\begin{align}
	\xi(w)_{\cpy{\sT}\At\outS}=&
	\tr_{\sT} \chS [
	M(w)_{\sT\A}
	    \ent_{\sT\A\cpy{\sT}\At}
	M(w)_{\sT\A}^{\dagger}
	]\\
	=& \tr_{\sT}
	    M(w)_{\cpy{\sT}\At}^{\T}
		\ent_{\cpy{\sT}\sT} (\chS \ent_{\At\A})
		 M(w)_{\cpy{\sT}\At}^{\ast}\\
	=& M(w)_{\cpy{\sT}\At}^{\T}
		\1_{\cpy{\sT}} (\chS \ent_{\At\A}) 
		 M(w)_{\cpy{\sT}\At}^{\ast}.
\end{align}
Here we have used the facts noted in Remark \ref{transpose-trick}
and that $\ent_{\sT\A\cpy{\sT}\At} = \ent_{\At\A} \ent_{\cpy{\sT}\sT}$.
The probability of successful decoding is
\begin{align}
	&\Pr(\dmRV=\mRV|\chan, \code, \src)\\
	=&
	\sum_{w = 1}^M \src(w)
	\tr_{\cpy{\sT}\At\outS}
	D(w)_{\cpy{\sT}\At\outS}
	\xi(w)_{\cpy{\sT}\At\outS}\\
	=&
	\tr_{\cpy{\sT}\At\outS}
    R_{\cpy{\sT}\At\outS} \chS \ent_{\At\A}
\end{align}
where
\begin{align}\label{Rdef}
	R_{\cpy{\sT}\At\outS} :=&
	\sum_{w = 1}^M
    \src(w)
    M(w)_{\cpy{\sT}\At}^{\ast}
	D(w)_{\cpy{\sT}\At\outS}
	M(w)_{\cpy{\sT}\At}^{\T}.
\end{align}
Since $R_{\cpy{\sT}\At\outS}$
is given by a completely positive
map (with Kraus operators
$\sqrt{\src(w)} M(w)_{\cpy{\sT}\At}^{\ast}$)
acting on $D_{\cpy{\sT}\At\outS}$, which satisfies
$0 \leq D_{\cpy{\sT}\At\outS} \leq \1_{\cpy{\sT}\At\outS}$,
we have
$0 \leq R_{\cpy{\sT}\At\outS} \leq
\rho_{\cpy{\sT}\At} \1_{\outS}$,
where $\rho_{\sT\A} := \sum_{w=1}^M \src(w) \rho(w)_{\sT\A}$
is the average state of $\sT\A$ after encoding,
$\rho_{\cpy{\sT}\At\sT\A} := \rho_{\sT\A}^{1/2}
\ent_{\sT\A\cpy{\sT}\At} \rho_{\sT\A}^{1/2}$
is its canonical purification, and
$\rho_{\cpy{\sT}\At} := \tr_{\sT\A} \rho_{\cpy{\sT}\At\sT\A}$.
Therefore,
\begin{align}
	T_{\bS}
:= \rho_{\At}^{-\half } R_{\bS} \rho_{\At}^{-\half }
\end{align}
satisfies $0 \leq T_{\bS} \leq \1_{\bS}$, and 
\begin{align}
    \Pr(\dmRV=\mRV|\chan,\code,\src)
	=& \tr_{\bS} \rho_{\At}^{\half } (\chS \ent_{\At\A})
	\rho_{\At}^{\half } T_{\bS}\\
    =& \tr_{\bS}
    T_{\bS} \chS \purho
\end{align}
as promised.

As noted in the caption for Figure \ref{fig:EA},
any \emph{unassisted} quantum code corresponds to restricting
Bob's decoding measurement to the output system of the channel,
so that
\begin{equation}
    \forall w \in \{1,\ldots, M\}:~D(w)_{\cpy{\sT}\At\outS }
    = \1_{\cpy{\sT}\At} D(w)_{\outS}.
\end{equation}
Substituting this into (\ref{Rdef}), we see that
\begin{equation}\label{local-T}
    T_{\bS}
    = \sum_{w =1}^M
    E(w)_{\At}
    D(w)_{\outS}
\end{equation}
where the positive operators
\begin{equation}
    E(w)_{\At} :=
    S(w)
(\rho_{\At})^{-\half} 
    \left(
    \tr_{\cpy{\sT}}
    \rho(w)_{\cpy{\sT}\At}
    \right)
    (\rho_{\At})^{-\half} 
\end{equation}
satisfy
\begin{equation}
    \sum_{w =1}^M E(w)_{\At} 
    = 
    \rho^{-\half}_{\At}
    \rho_{\At}
    \rho^{-\half}_{\At}
    \leq \1_{\At}.
\end{equation}
Letting $E(0)_{\At} := \1_{\At}
- \rho^{-\half}_{\At}
    \rho_{\At}
    \rho^{-\half}_{\At}$, the
operators $E(0)_{\At}, \ldots, E(M)_{\At}$ constitute a POVM,
and so, for an unassisted code,
$T_{\bS}$ is a \emph{local} test i.e. $T_{\bS} \in \tests^{\LO}(\At:
\outS)$. The local implementation is simply this: Alice performs a
measurement with POVM elements $E(w)_{\At}$ and Bob performs the decoding
measurement (with POVM elements $D(w)_{\B}$). If their outcomes are equal
then the test accepts, the `accepting' POVM element being given
in equation ($\ref{local-T}$).
\end{proof}


A quantum generalisation of the `meta-converse' Theorem 26 of \cite{PPV},
is now straightforward:
\begin{proposition}[Meta-converse]
    Let $\code$ be an entanglement-assisted code 
which, when used with $\mc{S}$, induces the average input state
$\rho_{\A}$, and which has success probability
\begin{equation}
    \Pr(\dmRV=\mRV|\chan^{(i)},\code,\src) = 1 - \epsilon_i,
\end{equation}
when used with channel operation $\chS^{(i)}$, for $i \in \{0,1\}$.
Consider the hypothesis testing problem where $H_i$ asserts that the state
of $\bS$ is $\chS^{(i)}\purho$. If we accept $H_0$ when
the test constructed
from $(\code,\src)$ as in Proposition \ref{code2test} accepts, then
\begin{equation}
    \beta = \tr_{\bS} T_{\bS} \chS^{(1)}\purho  =  1 - \epsilon_1
\end{equation}
and
\begin{equation}
    1 - \alpha =  \tr_{\bS} T_{\bS} \chS^{(0)}\purho  = 1 - \epsilon_0.
\end{equation}
Therefore, by definition
\begin{equation}
    \rBeta{\all}{\epsilon_0}{\chS^{(0)}\purho}{\chS^{(1)}\purho}
    \leq 1 - \epsilon_1.
\end{equation}
Furthermore, if constraints on the code $\code$ mean that $T_{\bS}$
is guaranteed to belong to some class of tests $\tests^{\goc}(\A:\B)$,
then the (potentially more stringent) bound
\begin{equation}\label{MC-final}
    \rBeta{\mclass}{\epsilon_0}{\chS^{(0)}\purho}{\chS^{(1)}\purho}
    \leq 1 - \epsilon_1.
\end{equation}
also applies.
\end{proposition}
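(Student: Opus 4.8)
The plan is to obtain this as an essentially immediate corollary of Proposition~\ref{code2test}. First I would apply that proposition to the code $\code$ together with the source $\src$ inducing average input $\rho_{\A}$, producing a fixed POVM element $T_{\bS}\in\tests(\At:\outS)$ (and $T_{\bS}\in\tests^{\LO}(\At:\outS)$ when $\code$ is unassisted) for which
\begin{equation}
    \Pr(\dmRV=\mRV|\chan,\code,\src) = \tr_{\bS} T_{\bS}\,\chS\purho
\end{equation}
holds \emph{simultaneously for every} channel operation $\chS\in\ops(\A\to\B)$, with $\purho$ the canonical purification of Definition~\ref{sch-def}. The feature to emphasise is that $T_{\bS}$ and $\purho$ do not depend on the channel: the same operator witnesses both hypotheses, and $\purho$ is determined by the code's average input alone.

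Next I would specialise to $i\in\{0,1\}$. Substituting $\chS=\chS^{(i)}$ into the identity above gives $\tr_{\bS} T_{\bS}\,\chS^{(i)}\purho = \Pr(\dmRV=\mRV|\chan^{(i)},\code,\src) = 1-\epsilon_i$. Reading $T_{\bS}$ as the ``accept $H_0$'' POVM element in the quantum hypothesis test between $\chS^{(0)}\purho$ and $\chS^{(1)}\purho$, the probability of correctly accepting $H_0$ when $H_0$ holds is $1-\alpha = \tr_{\bS} T_{\bS}\,\chS^{(0)}\purho = 1-\epsilon_0$, and the probability of incorrectly accepting $H_0$ when $H_1$ holds is $\beta = \tr_{\bS} T_{\bS}\,\chS^{(1)}\purho = 1-\epsilon_1$. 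Thus this particular test has type-I error exactly $\epsilon_0$ and type-II error $1-\epsilon_1$.

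Finally I would close by a feasibility argument. Since $0\le T_{\bS}\le\1_{\bS}$ and its type-I error equals $\epsilon_0\le\epsilon_0$, the test $T_{\bS}$ is a feasible point of the minimisation defining $\rBeta{\all}{\epsilon_0}{\chS^{(0)}\purho}{\chS^{(1)}\purho}$, so that quantity is at most the attained objective value $\beta=1-\epsilon_1$. For the sharpened conclusion~\eref{MC-final}, if the code's structure forces $T_{\bS}\in\tests^{\goc}(\At:\outS)$ — e.g.\ the unassisted case, where Proposition~\ref{code2test} already places $T_{\bS}$ in $\tests^{\LO}$, hence in $\tests^{\goc}$ for any $\goc$ containing $\LO$ — then $T_{\bS}$ is likewise feasible for the minimisation in Proposition~\ref{test-beta} computing $\rBeta{\mclass}{\epsilon_0}{\chS^{(0)}\purho}{\chS^{(1)}\purho}$, giving the same bound with $\all$ replaced by $\goc$. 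There is no real obstacle here: all the substantive work — constructing a channel-\emph{independent} test from the code and identifying which restricted test class it lies in — is carried by Proposition~\ref{code2test}, and the only point requiring a moment's care is that this channel-independence is precisely what allows a single $T_{\bS}$ to serve as a legitimate statistical test for the pair of hypotheses.
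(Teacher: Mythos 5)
Your proof is correct and follows essentially the same route as the paper: apply Proposition~\ref{code2test} to obtain a channel-independent test $T_{\bS}$, substitute $\chS^{(0)}$ and $\chS^{(1)}$ to identify the type-I and type-II errors as $\epsilon_0$ and $1-\epsilon_1$, and conclude by feasibility in the minimisation defining $\rBeta{\mclass}{\epsilon_0}{\cdot}{\cdot}$. Your explicit emphasis that $T_{\bS}$ and $\purho$ depend only on $(\code,\src)$ and not on the channel is exactly the point the paper's argument rests on.
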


We now apply the meta-converse to obtain our main results.
Suppose that
$\chS^{(1)}$, is a completely useless channel operation which has
$\chS^{(1)} \rho_{\A} = \sigma_{\outS}$
for all $\rho_{\A}$, and assume that
the $M$ messages are equiprobable i.e. $\src = \src_M$.
Then, it is easily verified that $1 - \epsilon_1 = 1/M$ (in fact
any other value would imply that communication is possible in
the absence of a channel.) Setting $\chS^{(0)} = \chS$,
the hypothesis testing problem described in the Proposition above
is now exactly the one shown in Figure \ref{fig:qH}.
With these choices, equation (\ref{MC-final}) tells us that
any $(M,\epsilon,\iS)$ code whose corresponding test belongs
to $\tests^{\goc}(\At:\outS)$ must satisfy
\begin{equation}
    \max_{\oS} \rBeta{\mclass}{\epsilon}{\chS \purho}{\icS\oS} \leq 1/M.
\end{equation}
Here, we maximise over $\oS$ to obtain the best possible bound.
For entanglement-assisted codes, rearranging this
and using Definition~\ref{rmeas}
gives us
\begin{equation}
    \log \Mei{\epsilon}{\chS}{\iS} \leq
    \min_{\oS} D^{\all}_{\epsilon}(\chS \purho\|\rho_{\At} \oS)
\end{equation}
(Theorem \ref{se-converse}) and for unassisted codes,
we can write the stronger bound
\begin{equation}
    \log \Mi{\epsilon}{\chS}{\iS} \leq
    \min_{\oS} D^{\LO}_{\epsilon}(\chS \purho\|\rho_{\At} \oS)
\end{equation}
(which is Theorem \ref{ua-converse}).

\section{Properties of the bounds}
\label{sec:props}

In this section, we generalise results of Polyanskiy~\cite{2013-Polyanskiy},
showing that
$\beta^{\mclass}_{\epsilon}(\chS \purho \| \rho_{\At} \oS)$
is convex in
$\rho_{\A}$ and concave in $\oS$, provided $\mclass$ contains
$\oneway$.
This enables one to use symmetries of the channel to restrict
the optimisation over $\iS$ and $\oS$ to states with
corresponding symmetries, as we show in subsection \ref{sym}.
    
\def\ciS{\At}
Let $\beta(T,\icS,\oS) := \tr_{\bS} T_{\bS} \icS\oS $.
This is a bilinear function
of $T$ and $\oS$.
Therefore, the minimum of $\beta(T,\icS,\oS)$ over $T$ is concave
in $\oS$, and since the set $\states(\B)$
and the set of tests satisfying $\alpha(T,\icS) \leq \epsilon$
are both convex, von Neumann's minimax theorem
tells us that
$\max_{\oS} \beta_{\epsilon}^{\goc}(\chS\purho, \icS\oS)$ is also equal to
\begin{align}
    &\min_{T_{\bS} \in \goc} \beta^{\ast}(T)\\
    &\text{ subject to }\notag\\
    &\tr_{\bS} T_{\bS} \chS\purho  \geq 1 - \epsilon,
\end{align}
where
\begin{align}
    \beta^{\ast}(T,\icS) :=& \max_{\oS} \tr_{\bS} T_{\bS} \icS\oS \\
    =& \| \tr_{\A} T_{\bS} \icS  \|_{\infty}
\end{align}
As noted in \cite{2013-Polyanskiy} for the classical case,
$\beta^{\ast}(T)$ is the (worst case)
probability of type II error for $T_{\bS}$
in the \emph{compound} hypothesis testing problem where
$H_0$ is still that the state is $\chS \purho$, but now
$H_1$ is the compound hypothesis that the state belongs to the
set $\{ \icS\oS : \oS \in \states(\B) \}$.

\begin{theorem}\label{thm:cvx}
    For any operations $\chS^{(0)}$ and $\chS^{(1)}$,
	and class of bipartite operations $\goc$ which contains
	$\oneway$, the function
    \begin{equation}
     (\epsilon,\iS) \mapsto
     \rBeta{\mclass}{\epsilon}
     {\chS^{(0)} \purho}{\chS^{(1)} \purho}
     \end{equation}
    is jointly convex in $\epsilon$ and $\iS$.
\end{theorem}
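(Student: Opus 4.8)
The plan is to reduce the statement to a convexity property of a single min-type optimization, exactly as Polyanskiy does in the classical case, but taking care that the restricted class $\goc$ causes no trouble. First I would fix the two channel operations $\chS^{(0)}$ and $\chS^{(1)}$ and, for each input state $\iS$ and each value $\epsilon$, use Proposition~\ref{test-beta} to write
\[
\rBeta{\mclass}{\epsilon}{\chS^{(0)}\purho}{\chS^{(1)}\purho}
= \inf\bigl\{ \tr_{\bS} T_{\bS}\,\chS^{(1)}\purho \;:\; \tr_{\bS} T_{\bS}\,\chS^{(0)}\purho \ge 1-\epsilon,\ T_{\bS}\in\tests^{\goc}(\At:\outS)\bigr\}.
\]
The delicate point is that $\purho$ and hence both $\chS^{(i)}\purho$ depend \emph{nonlinearly} (through $\iS^{1/2}$) on $\iS$, so naive convexity in $\iS$ is not obvious. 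The standard trick is to dilate: take a mixture $\iS=\sum_k p_k\,\rho^{(k)}_{\A}$ and introduce a classical register $\mathrm{K}$ so that the joint purification $\sum_k p_k\,\ketbra{k}{k}_{\mathrm{K}_{\At}}\!\otimes\!\ketbra{k}{k}_{\mathrm{K}_{\A}}\!\otimes\!\widetilde{\rho^{(k)}}_{\At\A}$ purifies $\sum_k p_k\,\rho^{(k)}_{\A}$ only after we \emph{forget} $\mathrm{K}$; applying $\chS^{(i)}$ and then discarding (tracing out) $\mathrm{K}$ on the $\At$-side recovers $\chS^{(i)}\purho$.

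Concretely I would argue as follows. Put $\mathrm{K}$ on Alice's side (attached to $\At$), and let $\widehat\rho^{(i)}_{\mathrm{K}\At\outS} := \sum_k p_k\,\ketbra{k}{k}_{\mathrm{K}}\otimes \chS^{(i)}\widetilde{\rho^{(k)}}_{\At\A}$. Tracing out $\mathrm{K}$ is a local operation on Alice's subsystem, so it lies in $\LO\subseteq\oneway\subseteq\goc$; since it also maps $\widehat\rho^{(i)}$ to $\chS^{(i)}\purho$ for both $i$, the data-processing inequality (Proposition~\ref{dpi}, applicable because $\goc$ is closed under composition) gives
\[
\rBeta{\mclass}{\epsilon}{\chS^{(0)}\purho}{\chS^{(1)}\purho}
\ \ge\ \rBeta{\mclass}{\epsilon}{\widehat\rho^{(0)}_{\mathrm{K}\At\outS}}{\widehat\rho^{(1)}_{\mathrm{K}\At\outS}}.
\]
Conversely, copying the $\mathrm{K}$-register out of the $\At$-system is \emph{not} reversible in general, so to get the matching inequality I instead build a feasible test directly: given, for each $k$, an optimal test $T^{(k)}_{\bS}\in\tests^{\goc}(\At:\outS)$ for $(\chS^{(0)}\widetilde{\rho^{(k)}}, \chS^{(1)}\widetilde{\rho^{(k)}})$ at level $\epsilon_k$, the operator $T := \sum_k \ketbra{k}{k}_{\mathrm{K}}\otimes T^{(k)}_{\bS}$ is a test in $\tests^{\goc}(\mathrm{K}\At:\outS)$ (the extra classical register is read locally by Alice, so $\oneway$-structure — and hence membership in any $\goc\supseteq\oneway$ — is preserved), its type-I error against $\widehat\rho^{(0)}$ is $\sum_k p_k\epsilon_k$, and its type-II error against $\widehat\rho^{(1)}$ is $\sum_k p_k\,\rBeta{\mclass}{\epsilon_k}{\chS^{(0)}\widetilde{\rho^{(k)}}}{\chS^{(1)}\widetilde{\rho^{(k)}}}$.

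Combining the two directions: for any choice of $\epsilon_k\ge0$ with $\sum_k p_k\epsilon_k = \epsilon$,
\[
\rBeta{\mclass}{\epsilon}{\chS^{(0)}\purho}{\chS^{(1)}\purho}
\ \le\ \sum_k p_k\,\rBeta{\mclass}{\epsilon_k}{\chS^{(0)}\widetilde{\rho^{(k)}}}{\chS^{(1)}\widetilde{\rho^{(k)}}},
\]
and taking the infimum over the splittings $(\epsilon_k)$ yields precisely that $(\epsilon,\iS)\mapsto\rBeta{\mclass}{\epsilon}{\chS^{(0)}\purho}{\chS^{(1)}\purho}$ is the infimal convolution-type lower envelope, i.e. it is jointly convex in $(\epsilon,\iS)$. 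I expect the main obstacle to be the ``converse'' direction — verifying carefully that the block-diagonal test $T=\sum_k\ketbra{k}{k}_{\mathrm{K}}\otimes T^{(k)}$ really stays in $\tests^{\goc}$ when $\goc$ is $\oneway$ or $\PPT$ (for $\oneway$ this is where Alice's ability to \emph{read} her classical register and condition on it is essential, and it is exactly why the hypothesis $\goc\supseteq\oneway$, rather than merely $\goc\supseteq\LO$, appears); for $\PPT$ one checks the partial-transpose condition respects the block-diagonal structure, and for $\all$ it is immediate. The rest is bookkeeping with Proposition~\ref{test-beta} and the definition of joint convexity.
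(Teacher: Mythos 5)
Your high-level intuition is exactly right — introduce a classical register $\mathrm{K}$, condition the test $T^{(k)}$ on its value, and use the hypothesis $\goc\supseteq\oneway$ to argue that the resulting conditioned test stays in $\tests^{\goc}$ — and this is indeed the mechanism the paper uses. However, there is a genuine gap in how you connect the dilated state $\widehat\rho^{(i)}_{\mathrm{K}\At\outS}$ to $\chS^{(i)}\purho$, and it is precisely the place where the nonlinearity of $\iS\mapsto\purho$ bites.

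You claim that ``discarding (tracing out) $\mathrm{K}$ on the $\At$-side recovers $\chS^{(i)}\purho$.'' This is false. Writing $\purho^{(k)}$ for the canonical purification of $\rho^{(k)}_{\A}$, we have $\tr_{\mathrm{K}}\widehat\rho^{(i)}_{\mathrm{K}\At\outS} = \sum_k p_k\,\chS^{(i)}\purho^{(k)}$, which is a \emph{mixture of purifications}, not the \emph{purification of the mixture} $\chS^{(i)}\purho$. (Take $\chS^{(i)}=\id$: the left-hand state is mixed on $\At\A$, while $\purho$ is pure; they agree on the $\A$-marginal and nothing more.) With this the data-processing step collapses, and your ``first direction'' is not established. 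Moreover, even if some local operation did relate the two states, what you actually need — in order to combine with your block-diagonal test $T=\sum_k\ketbra{k}{k}_{\mathrm{K}}\otimes T^{(k)}$ — is an operation in $\goc$ mapping $\chS^{(i)}\purho$ \emph{to} $\widehat\rho^{(i)}$ (so that the $\beta$ of the original problem is bounded above by the $\beta$ of the dilated one). The direction of your intended data-processing argument points the wrong way.

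What is missing is the specific instrument on $\At$ that effects exactly this map: take Kraus operators $\lambda_j^{1/2}\,\rho^{(j)1/2}_{\At}\rho^{-1/2}_{\At}$ together with the projector onto $\ker\rho_{\At}$. Using $M_{\Q}\Phi_{\Qt\Q}=M^{\T}_{\Qt}\Phi_{\Qt\Q}$ one checks $\mc{M}^{(j)}_{\At|\At}\purho=\lambda_j\purho^{(j)}$, so that Alice's instrument followed by communication of the outcome $j$ and the test $T^{(j)}$ realises the needed one-way test on $\At\outS$ directly — yielding type-I error $\sum_j\lambda_j\epsilon_j$ and type-II error $\sum_j\lambda_j\beta_j$, and hence the convexity inequality. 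Your block-diagonal $T$ is then the ``read $\mathrm{K}$, apply $T^{(k)}$'' half of that protocol; the instrument is the nontrivial half that makes it a valid test on the \emph{original} system $\At\outS$ rather than the dilated one. As written, your argument asserts the conclusion of this step without supplying the construction.
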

\begin{proof}
    Suppose that we have,
    for all $j \in \{1,\ldots, m\}$, $\epsilon_j \in [0,1]$,
    $\iS^{(j)} \in \states(\A)$, and $\lambda_j \geq 0$
    such that
    $\sum_{j=1}^m \lambda_j = 1$,
    $\sum_{j=1}^m \lambda_j \rho^{(j)}_{\A} = \rho_{\A}$, and
    $\sum_{j=1}^m \epsilon_j \rho^{(j)}_{\A} = \epsilon$.
    
    Let $T(j)_{\bS}$ be a test in $\tests^{\goc}$
    that achieves
	$\beta_j :=
	\rBeta{\mclass}
	{\epsilon_j}{\chS^{(0)}\purho^{(j)}}{\chS^{(1)}\purho^{(j)}}$.
	That is to say,
	\begin{align}\label{Tj-properties}
        \tr_{\bS} (\1-T(j))_{\bS} \chS^{(0)}\purho^{(j)} \leq& \epsilon_j,\\
        \text{and } \tr_{\bS} T(j)_{\bS} \chS^{(1)}\purho^{(j)} =& \beta_j.
	\end{align}
	The claim is that
	\begin{equation}\label{convexity-claim}
	   \rBeta{\mclass}{\epsilon}{\chS^{(0)} \purho}{\chS^{(1)} \purho}
       \leq \sum_{j=1}^m \lambda_j \beta_j.
	\end{equation}
	
	We shall explicitly construct a test in
	$\goc$ which demonstrates (\ref{convexity-claim}).
	Consider the suboperations
	\begin{equation}
	   \mc{M}^{(j)}_{\At|\At}:
       X_{\At} \mapsto \lambda_j
       \rho^{(j)1/2}_{\At} \rho^{-1/2}_{\At}
       X_{\At}
       \rho^{-1/2}_{\At} \rho^{(j)1/2}_{\At}
	\end{equation}
	for $j \in \{1, \ldots, m\}$ and
	\begin{equation}
	   \mc{M}^{(0)}_{\At|\At}:
       X_{\At} \mapsto
       \Pi_{\At}
       X_{\At}
       \Pi_{\At}
	\end{equation}
	where $\Pi_{\At}$ is the orthogonal projector onto the
	kernel of $\rho_{\At}$.
	The key property of these suboperations is that
	\begin{equation}\label{Mj-action}
	   \mc{M}^{(j)}_{\At|\At} \purho
	   =
	   \begin{cases}
	   \lambda_j
	   \purho^{(j)} \text{ for } j \in \{1, \ldots, m\},\\
	   0 \text{ for } j = 0.
        \end{cases}
	\end{equation}
	Since
	\begin{equation}
	   \Pi_{\At} + \sum_{j=1}^m
	   \lambda_j
	   \rho^{-1/2}_{\At}
	   \rho^{(j)}_{\At} 
	   \rho^{-1/2}_{\At}
	   = \1_{\At},
	\end{equation}
	the suboperations $\{ \mc{M}^{(j)}_{\At|\At} \}_{j=0}^m$
	constitute an instrument.
	
	Suppose that Alice performs this instrument, and classically
	communicates the outcome to Bob. If the outcome is $j$ then
	they perform the test with POVM element $T(j)_{\bS}$
	corresponding to deciding on hypothesis $0$.
	If their decision is stored in a register $\C$,
	then the overall measurement operation is
	\begin{align}
	       \mc{M}_{\C|\bS} =
	       \sum_{j=0}^m& \{
	       \ketbra{0}{0}_{\C} \tr_{\bS} T^{(j)}_{\bS}\\
	       +&
	       \ketbra{1}{1}_{\C} \tr_{\bS} (\1-T^{(j)})_{\bS} \}
	       \mc{M}^{(j)}_{\At|\At}.
	\end{align}	
	Since this implementation uses only one-way classical communication
	followed by operations in class $\goc$, which, by hypothesis,
	contains $\oneway$, this test is also in $\tests^{\goc}(\At:
\outS)$.
	Using (\ref{Mj-action}) and (\ref{Tj-properties}),
	we find that the type I error of this test is
	\begin{equation}
	   \tr_{\C} \ketbra{1}{1}_{\C} \mc{M}_{\C|\bS} \chS^{(0)} \purho
	   = \sum_{j=1}^{m} \lambda_j \epsilon_{j} = \epsilon
	\end{equation}
	while its type II error is
	\begin{equation}
	   \tr_{\C} \ketbra{0}{0}_{\C} \mc{M}_{\C|\bS} \chS^{(1)} \purho
	   = \sum_{j=1}^{m} \lambda_j \beta_{j},
	\end{equation}
	and so the claimed convexity (\ref{convexity-claim}) does hold.
\end{proof}

\begin{corollary}\label{convexity}
   $\rBch{\mclass}{\epsilon}{\chS}{\iS}$ is also jointly convex in
   $\epsilon$ and $\iS$, as it is given by maximising over functions
   with this property.
\end{corollary}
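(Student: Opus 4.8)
The plan is to flesh out the one-line argument already indicated in the statement: realise each function being maximised as an instance of the function to which Theorem \ref{thm:cvx} applies, and then invoke stability of convexity under pointwise suprema.

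First I would, for each fixed $\oS \in \states(\B)$, introduce the ``replacer'' operation $\mathcal{R}^{(\oS)}_{\outS\qcon\A}\colon X_{\A} \mapsto (\tr_{\A} X_{\A})\,\oS$, which is completely positive and trace preserving and therefore lies in $\ops(\A \to \B)$. Using the identity $\tr_{\A}\purho = \icS$ recorded in the Remark following Definition \ref{sch-def}, we get $\mathcal{R}^{(\oS)}_{\outS\qcon\A}\,\purho = \icS\,\oS$. Hence, taking $\chS^{(0)} := \chS$ and $\chS^{(1)} := \mathcal{R}^{(\oS)}_{\outS\qcon\A}$ in Theorem \ref{thm:cvx} (whose hypothesis that $\mclass$ contain $\oneway$ is in force), I would conclude that for every fixed $\oS$ the map $(\epsilon,\iS) \mapsto \rBeta{\mclass}{\epsilon}{\chS\purho}{\icS\oS}$ is jointly convex in $\epsilon$ and $\iS$.

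Finally, since by definition $\rBch{\mclass}{\epsilon}{\chS}{\iS} = \max_{\oS \in \states(\B)}\rBeta{\mclass}{\epsilon}{\chS\purho}{\icS\oS}$ is a pointwise supremum over the index $\oS$ of the jointly convex functions just produced, and a pointwise supremum of convex functions is convex, it follows that $\rBch{\mclass}{\epsilon}{\chS}{\iS}$ is jointly convex in $\epsilon$ and $\iS$. There is essentially no obstacle here; the only point requiring any care is the observation that $\icS\oS$ is itself of the form ``an operation applied to $\purho$'' (via the replacer channel), so that Theorem \ref{thm:cvx} is applicable, and that verification is immediate from $\tr_{\A}\purho = \icS$. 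If one wishes to be scrupulous one can also note that the supremum over $\oS$ is attained by compactness of $\states(\B)$ and continuity, justifying the ``$\max$'', though this is already implicit in the definition of $\rBch{\mclass}{\epsilon}{\chS}{\iS}$.
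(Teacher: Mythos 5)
Your proof is correct and follows exactly the route the paper intends: you make explicit the (implicit) observation that $\icS\oS$ can be written as $\mathcal{R}^{(\oS)}_{\outS\qcon\A}\purho$ for the CPTP replacer channel, so that Theorem~\ref{thm:cvx} applies with $\chS^{(1)}=\mathcal{R}^{(\oS)}$ to give joint convexity of each $\rBeta{\mclass}{\epsilon}{\chS\purho}{\icS\oS}$, and then conclude by stability of convexity under pointwise suprema. This is the same argument the paper compresses into ``maximising over functions with this property.''
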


\subsection{Semidefinite programs}\label{ssec:sdps}

As mentioned in the introduction, the converse for entanglement-assisted
codes in terms of unrestricted hypothesis testing, and the converse
for unassisted codes in terms of PPT hypothesis testing, can be formulated
as semidefinite programs. Here we give the first of these in detail, and
describe how to add the PPT constraint.

Letting $R_{\bS} := \icS^{\half} T_{\bS} \icS^{\half}$,
we have 
\begin{align}
    \beta^{\ast}(T,\icS) =& \| \tr_{\A} R_{\bS} \|_{\infty}\\
    =& \min \{ \lambda : \lambda \1_{\outS} \geq \tr_{\A} R_{\bS} \},
\end{align}
and $\alpha(T,\icS) = \tr_{\bS} R_{\bS} \chS \meS $.

\begin{proposition}[Primal SDP]\label{primal}
\begin{align}
   \rIHch{\all}{\epsilon}{\chS}{\iS}
    =& - \log \min_{T,\lambda} \lambda\\
    &\text{subject to}\notag\\
    &\tr_{\A} R_{\bS} \leq \lambda \1_{\outS}, \label{pcon2}\\
    &\tr_{\bS} R_{\bS} \chS \meS   \geq 1 - \epsilon, \label{pcon1}\\
    & R_{\bS} \leq \icS\1_{\outS}, \label{pcon4}\\
    &R_{\bS} \geq 0.
\end{align}

Since the constraints on $\iS$ are semidefinite, the bound
\begin{equation}
    \max_{\iS \in \mc{D}(\mc{H}_{\A})} \rIHch{\all}{\epsilon}{\chS}{\iS}
\end{equation}
from Theorem \ref{se-converse} is also a semidefinite program.
\end{proposition}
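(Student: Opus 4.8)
The plan is to derive the primal SDP directly from facts already in hand, by a single change of variables. By Definition~\ref{htI} we have $\rIHch{\all}{\epsilon}{\chS}{\iS}=-\log\rBch{\all}{\epsilon}{\chS}{\iS}$, and by the minimax reformulation obtained just before Theorem~\ref{thm:cvx} (specialised to $\goc=\all$, so that tests range over \emph{all} POVM elements $0\le T_{\bS}\le\1_{\bS}$), $\rBch{\all}{\epsilon}{\chS}{\iS}$ equals the minimum of $\beta^{\ast}(T,\icS)=\|\tr_{\A}T_{\bS}\icS\|_{\infty}$ over tests $T_{\bS}$ with $\tr_{\bS}T_{\bS}\,\chS\purho\ge 1-\epsilon$. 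So it suffices to show that substituting $R_{\bS}:=\icS^{\half}T_{\bS}\icS^{\half}$, and introducing an epigraph variable $\lambda$ for the operator norm, turns this minimisation into the displayed program, and then to take $-\log$.

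First I would record the identity $\chS\purho=\icS^{\half}(\chS\meS)\icS^{\half}$: by the transpose trick of Remark~\ref{transpose-trick}, the factors $\iS^{\half}$ in $\purho=\iS^{\half}\meS\iS^{\half}$ can be moved from $\A$ onto $\At$, where, because $\icS=\id_{\At|\A}\iS^{\T}$, they become $\icS^{\half}$, after which $\chS$ — which acts only on $\A$ — commutes through them. Given this, each ingredient transforms as follows. The objective: $\beta^{\ast}(T,\icS)=\|\tr_{\A}R_{\bS}\|_{\infty}$, and since $\tr_{\A}R_{\bS}\ge 0$ this equals $\min\{\lambda:\lambda\1_{\outS}\ge\tr_{\A}R_{\bS}\}$, which is constraint~\eref{pcon2} with objective $\lambda$. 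The type-I constraint: $\tr_{\bS}T_{\bS}\,\chS\purho=\tr_{\bS}R_{\bS}(\chS\meS)$ by the identity above, giving~\eref{pcon1}. The POVM constraints: conjugating $0\le T_{\bS}$ and $T_{\bS}\le\1_{\bS}$ by $\icS^{\half}$ yields $R_{\bS}\ge 0$ and $R_{\bS}\le\icS^{\half}\1_{\bS}\icS^{\half}=\icS\1_{\outS}$, i.e.~\eref{pcon4}.

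The one point that needs care — and the main obstacle — is that $\icS$ need not be invertible, so $T_{\bS}\mapsto R_{\bS}$ is not literally a bijection. Here I would observe that $\chS\purho$ and every $\icS\oS$ are supported on $\supp(\icS)\otimes\Hs_{\outS}$, so the value of the hypothesis test depends only on the restriction of $T_{\bS}$ to that subspace; on this subspace the map $T_{\bS}\mapsto\icS^{\half}T_{\bS}\icS^{\half}$ \emph{is} a bijection onto the operators with $0\le R_{\bS}\le\icS\1_{\outS}$, with inverse given by the pseudo-inverse $\icS^{-\half}$, and any feasible $R_{\bS}$ of the SDP lifts back to a genuine POVM element without altering either trace. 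This gives a value-preserving correspondence between the two feasible sets, so the two minima agree and the stated formula for $\rIHch{\all}{\epsilon}{\chS}{\iS}$ follows; everything apart from this support bookkeeping is a direct rewriting.

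Finally, for the optimisation over $\iS$: since $\max_{\iS}\rIHch{\all}{\epsilon}{\chS}{\iS}=-\log\min_{\iS}\rBch{\all}{\epsilon}{\chS}{\iS}$, it is enough to fold the minimisation over $\iS$ into the primal SDP, treating $\iS$ as an additional variable. The defining constraints of $\states(\A)$ are $\iS\ge 0$ and $\tr\iS=1$, which are semidefinite; and $\iS$ enters the remaining constraints only through $\icS=\id_{\At|\A}\iS^{\T}$ in~\eref{pcon4}, a linear dependence. Hence the joint problem in the variables $(\iS,R_{\bS},\lambda)$ is again a semidefinite program.
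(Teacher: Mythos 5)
Your proposal is correct and takes essentially the same route as the paper: the paper also establishes the minimax identity $\rBch{\all}{\epsilon}{\chS}{\iS}=\min\{\beta^{\ast}(T,\icS):\alpha(T,\icS)\le\epsilon\}$ just before Theorem~\ref{thm:cvx}, then substitutes $R_{\bS}=\icS^{\half}T_{\bS}\icS^{\half}$ and reads off the SDP. The one thing you add that the paper leaves implicit is the support bookkeeping for non-invertible $\icS$, showing that a feasible $R_{\bS}$ with $0\le R_{\bS}\le\icS\1_{\outS}$ lifts back to a valid POVM element via the pseudo-inverse without changing the objective or the type-I constraint; this is a genuine (if minor) gap in the paper's terse presentation and your treatment of it is correct.
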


\begin{remark}
    The constraint $0 \leq  \transmap_{\outS|\outS} T_{\bS} \leq \1$
	(where $\transmap_{\outS}$ is the transpose map on system $\outS$)
    is equivalent to
    \begin{equation}
        0 \leq \transmap_{\outS|\outS} R_{\bS} \leq \icS\1_{\outS}.
    \end{equation}
    Because the transpose map is linear, adding these constraints on $R_{\bS}$
    to the primal SDP above yields an SDP for
     $\rIHch{\PPT}{\epsilon}{\chS}{\iS}$.
\end{remark}

Associating operators $F_{\bS}$ and $G_{\outS}$
with constraints (\ref{pcon4}) and (\ref{pcon2}),
and a real multiplier $\mu$ with
(\ref{pcon1}) yields the Lagrangian
\begin{align}
    &\lambda + \tr_{\B} G_{\outS}(\tr_{\At} R_{\bS} - \lambda \1_\outS)\\
    +&\tr_{\bS} F_{\bS} (R_{\bS} - \icS \1_{\outS} )\\
    +&\mu (1-\epsilon - \tr R_{\bS} \chS \meS  )\\
    =&\tr_{\bS} R_{\bS} (\1_{\At} G_\outS + F_{\bS} - \chS \meS  )\\
    +&\lambda(1-\tr_{\B} G_\outS)
    + (1-\epsilon)\mu.
\end{align}
from which one can derive the dual SDP.
Below, we show that the optimal value of this dual SDP
is equal to the optimal value of the primal.

\begin{proposition}[Dual SDP]\label{dual}
\begin{align}
    \rIHch{\all}{\epsilon}{\chS}{\iS}
    = -\log \{ \max & (1-\epsilon)\mu - \tr F_{\At}\icS \}
    \label{dobj}\\
    &\text{subject to}\notag\\
    \1_{\At} G_\outS + F_{\bS} &\geq \mu \chS \meS ,\label{dc1}\\
    \tr_{\B} G_\outS &\leq 1,\label{dc2}\\
    G_\outS, F_{\bS}, \mu &\geq 0.\label{dc3}
\end{align}
\end{proposition}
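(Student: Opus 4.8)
\emph{Plan.} The content of the proposition is strong duality between the program displayed above and the primal SDP of Proposition~\ref{primal}, together with the fact that the dual supremum is attained (so that ``$\max$'' is justified). I would prove it by the textbook route for semidefinite programs: first establish weak duality directly from the Lagrangian computed just above, and then invoke Slater's condition to upgrade it to strong duality with attainment.

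\emph{Weak duality.} Fix any primal-feasible pair $(R_{\bS},\lambda)$ (so $R_{\bS}\geq 0$, $\tr_{\At}R_{\bS}\leq\lambda\1_{\outS}$ by (\ref{pcon2}), $R_{\bS}\leq\icS\1_{\outS}$ by (\ref{pcon4}), $\tr_{\bS}R_{\bS}\chS\meS\geq 1-\epsilon$ by (\ref{pcon1}), and hence $\lambda\geq 0$) and any dual-feasible triple $(F_{\bS},G_{\outS},\mu)$. The Lagrangian $L$ has two forms. In its original form, $L=\lambda+\tr_{\outS}[G_{\outS}(\tr_{\At}R_{\bS}-\lambda\1_{\outS})]+\tr_{\bS}[F_{\bS}(R_{\bS}-\icS\1_{\outS})]+\mu(1-\epsilon-\tr_{\bS}R_{\bS}\chS\meS)$, and each of the last three terms is the trace of a product of a positive operator with a negative-semidefinite one (using primal feasibility and $F_{\bS},G_{\outS},\mu\geq 0$), so $L\leq\lambda$. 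In the regrouped form, $L=\lambda(1-\tr_{\outS}G_{\outS})+\tr_{\bS}[R_{\bS}(\1_{\At}G_{\outS}+F_{\bS}-\mu\chS\meS)]+(1-\epsilon)\mu-\tr_{\At}F_{\At}\icS$ with $F_{\At}:=\tr_{\outS}F_{\bS}$, and the first two terms are nonnegative by $\lambda\geq 0$, $\tr_{\outS}G_{\outS}\leq 1$ from (\ref{dc2}), $R_{\bS}\geq 0$ and (\ref{dc1}), so $L\geq(1-\epsilon)\mu-\tr_{\At}F_{\At}\icS$. Hence $\lambda\geq(1-\epsilon)\mu-\tr_{\At}F_{\At}\icS$; taking the infimum over primal-feasible points and the supremum over dual-feasible points gives $(\text{primal opt})\geq(\text{dual opt})$, i.e.\ $\rIHch{\all}{\epsilon}{\chS}{\iS}\leq-\log(\text{dual opt})$.

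\emph{Strong duality and attainment.} It remains to prove the reverse inequality and that the dual supremum is a maximum; both follow from Slater's condition (see any standard reference on SDP duality). For strong duality it suffices to exhibit a strictly feasible point of the \emph{dual}: for sufficiently small $\delta>0$ the triple $(F_{\bS},G_{\outS},\mu)=(\1_{\bS},\,\tfrac{1}{2\dim(\outS)}\1_{\outS},\,\delta)$ works, since then $\1_{\At}G_{\outS}+F_{\bS}=(1+\tfrac{1}{2\dim(\outS)})\1_{\bS}>\delta\,\chS\meS$, $\tr_{\outS}G_{\outS}=\tfrac12<1$, and $F_{\bS},\mu>0$; as the primal is also feasible (e.g.\ $(R_{\bS},\lambda)=(\icS\1_{\outS},1)$, using that $\chS$ is trace preserving so $\tr_{\outS}\chS\meS=\1_{\At}$ and hence $\tr_{\bS}(\icS\1_{\outS})\chS\meS=1$), this yields $(\text{primal opt})=(\text{dual opt})$. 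Attainment of the dual maximum follows, in turn, from strict feasibility of the \emph{primal}: when $\rho_{\A}$ has full rank and $0<\epsilon<1$, taking $0<\delta<\epsilon$, the pair $(R_{\bS},\lambda)=((1-\delta)\icS\1_{\outS},1)$ satisfies $0<R_{\bS}<\icS\1_{\outS}$, $\tr_{\At}R_{\bS}=(1-\delta)\1_{\outS}<\1_{\outS}$ and $\tr_{\bS}R_{\bS}\chS\meS=1-\delta>1-\epsilon$, and then standard SDP duality gives that the dual optimum is attained. Combining the two inequalities establishes the stated equality.

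\emph{Main obstacle.} Everything above is routine, with one genuine subtlety: when $\rho_{\A}$ is not of full rank, constraint (\ref{pcon4}) confines $R_{\bS}$ to the proper subspace $\mathrm{supp}(\icS)\otimes\Hs_{\outS}$, so the primal literally has no strictly feasible point in the ambient space, and the attainment argument must first restate the primal as an SDP on that subspace---on which the point above is again strictly feasible---and check that the dual of the restricted program has the same optimal value as the stated dual (the freedom of $F_{\bS}$ outside the subspace only appears through (\ref{dc1}) and can always be accommodated at no cost to the objective $\tr_{\At}F_{\At}\icS$). The degenerate cases $\epsilon\in\{0,1\}$ are handled by direct inspection (for $\epsilon=1$ both optimal values are $0$) or by passing to the limit $\epsilon\downarrow 0$ using the continuity implicit in Corollary~\ref{convexity}.
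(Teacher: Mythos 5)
Your proof is correct and follows the same route as the paper: weak duality read off from the Lagrangian, then Slater's condition and the Vandenberghe--Boyd strong-duality theorem, using essentially the same strictly feasible dual point. The one place you go beyond the paper is the attainment question: the paper exhibits only a strictly feasible \emph{dual} point, which by the cited theorem gives $p^*=d^*$ together with attainment of the \emph{primal} minimum but not of the dual supremum; you correctly observe that to literally justify the ``$\max$'' in the statement one should also exhibit a strictly feasible \emph{primal} point, and you supply one (valid when $\rho_{\A}$ has full rank and $0<\epsilon<1$), flagging---though not fully carrying out---the restriction of the primal to $\mathrm{supp}(\rho_{\At})\otimes\Hs_{\outS}$ when $\rho_{\A}$ is rank-deficient and the degenerate endpoints $\epsilon\in\{0,1\}$. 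This is a genuine, if minor, refinement of the paper's one-line proof (which implicitly reads ``$\max$'' as ``$\sup$''), and your way of closing it is sound.
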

\begin{proof}
    For sufficiently large $a$, the point given by
$G_{\outS} = \1_{\outS}/(2\dim(\outS))$,
$F_{\bS} = a \1_{\bS}$, and any $\mu > 0$
strictly satisfies the dual constraints (\ref{dc1}-\ref{dc3}),
so the
dual SDP is strictly feasible,
and therefore its solution is equal to
the primal solution (see
Theorem 3.1 of \cite{VB-SDP}).
\end{proof}
The maximisation of $(\ref{dobj})$ over states $\iS$ of $\At$
can also be formulated as an SDP, in a similar way to the primal.

\subsection{Classical channels}\label{ssec:classical}
Let $\mc{C}_{\A\qcon \A}$ and
$\mc{C}_{\outS \qcon \outS}$ denote the completely
dephasing operations in the classical bases for
$\A$ and $\outS$, respectively.
A channel operation $\chS$ is classical if
$
    \chS \mc{C}_{\A\qcon \A} = \chS,
$
and
$
    \mc{C}_{\outS \qcon \outS} \chS = \chS,
$
and we can therefore restrict the minimization over
average input states in our bounds to states
$\rho_{\A}$ which are diagonal in the classical basis, thus
\begin{equation}
    \rho_{\A} = \sum_{x} p(x) \ketbra{x}{x}_{\A}.
\end{equation}
Furthermore, since the state 
$
\chS \purho
$
is invariant under
the operation
$\mc{C}_{\outS \qcon \outS} \in \ops^{\LO}(\At\to\At,
\outS\to\outS)$,
for any
class $\goc$ containing $\LO$ we have (by Proposition \ref{dpi})
\begin{align}
	&\rDH{\goc}{\epsilon}{\chS \purho}{\icS \oS}\\
	\geq&
	\rDH{\goc}{\epsilon}{
	\mc{C}_{\outS \qcon \outS} \chS \purho 
	}{
	\mc{C}_{\outS \qcon \outS} \icS \oS}\\
	=&
	\rDH{\goc}{\epsilon}{
	\chS \purho 
	}{
	\icS \oS' }
\end{align}
where $\oS'$ is diagonal in the classical basis
of $\outS$. Therefore, we can also restrict the optimisation
over $\oS$ to classical states and, by the discussion
in subsection \ref{sec:hypoTest}, restrict the test POVM element
to be diagonal in the classical basis. Therefore,
Theorems \ref{se-converse} and \ref{ua-converse} both reduce to
the PPV converse for finite alphabets when the channel is classical.

\subsection{Comparison with Wang--Renner}\label{ssec:WR}
\def\cS{\mathrm{C}} 
\newcommand{\cs}[1]{\ketbra{#1}{#1}_{\cS}} 
\def\ens{\mathrm{Ens}} 

In our notation,
the Wang-Renner converse states that for c-q channels
with finite input alphabet $\mathsf{A}$ and output states
$\tau(x)_{\outS}$ for $x \in \mathsf{A}$
\begin{equation}
    \log M_{\epsilon} \leq
    \sup_{p} \rDH{\all}{\epsilon}{\tau_{\cS\outS}}{\tau_{\cS}\tau_{\outS}}
\end{equation}
where $\cS$ is a system of dimension $|\mathsf{A}|$ and
$\tau_{\cS\outS}
:= \sum_{x \in \mathsf{A}} p(x) \ketbra{x}{x}_{\cS} \ox \tau(x)_{\outS}$.
To apply their converse to general channels, one notes that any
(unassisted) code for a general quantum channel, induces
a c-q channel by its specification of the input states used in the code.
Together with the choice of $p$ the yields a c-q state
\begin{equation}
    \tau_{\cS\outS}
    = \sum_{x} p(x) \ketbra{x}{x}_{\cS} \ox \chS[\rho(x)_{\A}]
\end{equation}
Optimising the Wang-Renner bound over all choices of
input states and distributions $p$ such that the average
channel input to the quantum channel is $\rho_{\A}$,
one obtains
\begin{equation}
    M_{\epsilon}(\chS) \leq \chi_{\epsilon}(\chS, \iS)
\end{equation}
where
\begin{definition}
\begin{equation}
    \chi_{\epsilon} (\chS,\iS) :=
    \max_{\eta_{\cS \A} \in \ens(\rho_{\A})}
        \rDH{\all}{\epsilon}{\chS \eta_{\cS \A}}{\eta_{\cS} \chS \rho_{\A}},
\end{equation}
where
$\ens (\rho_{\A})$
    be the set of all states of the form
    \begin{equation}
        \sum_{k=1}^{\dim(\cS)} p_k \cs{k} \ox \rho(k)_{\A}
    \end{equation}
    where $\cS$ is some finite dimensional system 
	(acting as a classical register) and where
    $p_k \geq 0$, $\sum_k p_k = 1$,
    $\rho(k)_{\A} \in \states(\A)$, and
        $\sum_{k} p_k \rho(k)_{\A} = \rho_{\A}$.
\end{definition}
This notation is motivated by the fact that the Holevo bound
\cite{holevo1973} is given by
$
    \chi (\chS,\iS)
    = \max_{\eta_{\cS \A} \in \ens(\rho_{\A})}
        D(\chS \eta_{\cS \A} \| \eta_{\cS} \chS \rho_{\A}),
$
where $D$ is the usual quantum relative entropy \cite{2001-SchumacherWestmoreland}.
\begin{proposition}
    \begin{equation}
    \chi_{\epsilon}(\chS,\iS)
    = \rDH{\oneway}{\epsilon}{\chS \purho}{\icS \chS \iS}
    \end{equation}
\end{proposition}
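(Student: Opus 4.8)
\noindent The plan is to reduce the one-way hypothesis test on $\At{:}\outS$ to a family of unrestricted tests on a classical--quantum bipartite system --- one such test for each way of decomposing the average input $\iS$ into an ensemble --- and then match this family against the optimisation defining $\chi_\epsilon$. First, by Proposition~\ref{test-beta}, $\rBeta{\oneway}{\epsilon}{\chS\purho}{\icS\chS\iS}$ equals the minimum of $\tr_{\bS}T_{\bS}\,\icS\chS\iS$ over $T_{\bS}\in\tests^{\oneway}(\At:\outS)$ with $\tr_{\bS}T_{\bS}\chS\purho\geq1-\epsilon$, and --- by relabelling outcomes, which absorbs the convex combination in the definition of $\tests^{\oneway}$ into the first POVM --- every $T_{\bS}\in\tests^{\oneway}(\At:\outS)$ can be written in the form \eref{onewayT}, $T_{\bS}=\sum_{a,b}T(a,b)E(a)_{\At}D^a(b)_{\outS}$, with $\{E(a)_{\At}\}$ a POVM on $\At$, $\{D^a(b)_{\outS}\}_b$ a POVM on $\outS$ for each $a$, and $0\leq T(a,b)\leq1$. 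To such a $T_{\bS}$ I would attach the ensemble determined by $p_a\rho(a)_{\A}:=\tr_{\At}(E(a)_{\At}\purho)$: using the transpose trick of Remark~\ref{transpose-trick} on $\purho=\iS^{\half}\Phi_{\At\A}\iS^{\half}$, this equals $\iS^{\half}\check{E}(a)_{\A}\iS^{\half}$, where $\{\check{E}(a)_{\A}\}$ is the POVM on $\A$ obtained from $\{E(a)_{\At}\}$ by transposition and the identification $\A\cong\At$, so $\sum_a p_a\rho(a)_{\A}=\iS^{\half}\1_{\A}\iS^{\half}=\iS$ and $\eta_{\cS\A}:=\sum_a p_a\cs a\otimes\rho(a)_{\A}\in\ens(\iS)$. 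I would also attach the operator $\tilde{T}_{\cS\outS}:=\sum_a\cs a\otimes\sum_b T(a,b)D^a(b)_{\outS}$, which satisfies $0\leq\tilde{T}_{\cS\outS}\leq\1$.

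The crux is then the identity --- using only that $\chS$ acts on $\A\to\outS$, that $\tr_{\At}(E(a)_{\At}\purho)=p_a\rho(a)_{\A}$, and that $\tr_{\At}(E(a)_{\At}\icS)=\tr_{\A}\tr_{\At}(E(a)_{\At}\purho)=p_a$ since $\icS=\tr_{\A}\purho$ --- that
\begin{align*}
  \tr_{\bS}T_{\bS}\chS\purho
  &= \textstyle\sum_{a,b}T(a,b)\,p_a\,\tr_{\outS}\!\bigl(D^a(b)_{\outS}\,\chS\rho(a)_{\A}\bigr)\\
  &= \tr_{\cS\outS}\tilde{T}_{\cS\outS}\,\chS\eta_{\cS\A}
\end{align*}
and, similarly, $\tr_{\bS}T_{\bS}\,\icS\chS\iS=\sum_{a,b}T(a,b)\,p_a\,\tr_{\outS}(D^a(b)_{\outS}\,\chS\iS)=\tr_{\cS\outS}\tilde{T}_{\cS\outS}\,\eta_{\cS}\chS\iS$ with $\eta_{\cS}=\tr_{\A}\eta_{\cS\A}$; so $T_{\bS}$ and $\tilde{T}_{\cS\outS}$ have the same type-I and type-II errors in their respective problems.

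Conversely, given any $\eta_{\cS\A}=\sum_k p_k\cs k\otimes\rho(k)_{\A}\in\ens(\iS)$ and any test $\tilde{T}_{\cS\outS}$ for $(\chS\eta_{\cS\A},\eta_{\cS}\chS\iS)$, I would use that both states are block-diagonal in the classical basis of $\cS$ and fixed by the self-adjoint operation $\mc{C}_{\cS\qcon\cS}$ that completely dephases $\cS$ to reduce, as in Section~\ref{sec:hypoTest}, to $\tilde{T}_{\cS\outS}=\sum_k\cs k\otimes\tilde{T}(k)_{\outS}$ with $0\leq\tilde{T}(k)_{\outS}\leq\1_{\outS}$; then realise the ensemble by the POVM on $\At$ corresponding (under the same transposition and identification) to $\check{E}(k)_{\A}:=\iS^{-\half}(p_k\rho(k)_{\A})\iS^{-\half}$ on $\supp\iS$, plus one extra element $\1_{\A}-\Pi_{\supp\iS}$ which carries zero weight because $\iS^{\half}\Pi_{\ker\iS}\iS^{\half}=0$ --- the Schr\"odinger--HJW steering property --- and let Bob's operation for outcome $k$ be the two-outcome POVM $\{\tilde{T}(k)_{\outS},\,\1_{\outS}-\tilde{T}(k)_{\outS}\}$, accepting on the first outcome; this produces $T_{\bS}=\sum_k E(k)_{\At}\tilde{T}(k)_{\outS}\in\tests^{\oneway}(\At:\outS)$ with, by the same computation, equal errors. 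Combining the two directions, and noting (again by the block-diagonal reduction) that on the classical--quantum side an unrestricted test already attains the minimum,
\begin{align*}
  &\rBeta{\oneway}{\epsilon}{\chS\purho}{\icS\chS\iS}\\
  &\qquad= \min_{\eta_{\cS\A}\in\ens(\iS)}\rBeta{\all}{\epsilon}{\chS\eta_{\cS\A}}{\eta_{\cS}\chS\iS};
\end{align*}
taking $-\log$ and recalling the definition of $\chi_\epsilon$ finishes the proof.

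The step I expect to be the main obstacle is the reverse half of the steering correspondence --- that \emph{every} ensemble decomposition of $\iS$ arises from a POVM on the purifying system $\At$ of $\purho$ --- i.e.\ the Schr\"odinger--HJW theorem, whose careful statement involves the generalised inverse $\iS^{-\half}$ on $\supp\iS$ and an extra POVM element supported on $\ker\iS$. The other ingredients --- putting an arbitrary element of $\tests^{\oneway}(\At:\outS)$ into the product form \eref{onewayT} by relabelling, the transpose-trick bookkeeping for the identification $\A\cong\At$, and the reduction to $\cS$-block-diagonal tests on the classical--quantum side --- are routine.
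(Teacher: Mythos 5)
Your proof is correct and takes essentially the same route as the paper: decompose a one-way test into Alice's POVM on $\At$ plus a conditional test by Bob, use the transpose trick to identify Alice's POVM with an ensemble decomposition of $\iS$ (and conversely, the Schr\"odinger--HJW steering argument), and observe that once Alice's outcome is in the classical register $\cS$ the remaining test on $\cS\outS$ is unrestricted. The paper is more terse about the reverse (steering) direction, which you spell out carefully, but the underlying argument is the same.
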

\begin{proof}
Let Alice's measurement have the POVM elements $F(k)_{\At}$
where $k$ labels the outcome which she sends to Bob.
There is no loss of generality in having Alice perform
her measurement
before Bob does anything, and storing the outcome
in a classical register $\cS$ to which Bob has access.

Let $p(k) := \tr \icS F(k)_{\At}$ be
the probability of outcome $k$ and
$\rho(k)_{\A}
= \tr_{\At} (\iS^{\half}
\Phi_{\At\A} \iS^{\half}) F_{\At}(k)$.
Under hypothesis 0 the state of $\cS\outS$ is $\chS \eta_{\cS\A}$,
where 
\begin{equation}
    \eta_{\cS\A} = \sum_{k} p(k) \ketbra{k}{k}_{\cS} \ox\rho(k)_{\A},
\end{equation}
while under hypothesis 1 the state is $\eta_{\cS} \chS \eta_{\A}$.
Clearly $\eta_{\A} = \sum_{k} p(k) \rho(k)_{\A} = \rho_{\A}$. 
Therefore,
\begin{align}
    &\rDH{\oneway}{\epsilon}{\chS \purho}{\icS \chS \iS}\\
    =&
    \max_{\eta_{\cS \A} \in \ens(\rho_{\A})}
        \rDH{\all}{\epsilon}{\chS \eta_{\cS \A} }{ \eta_{\cS} \chS \iS }\\
    =& \chi_{\epsilon}(\chS,\iS).
\end{align}
\end{proof}
\begin{corollary} From the above proposition and
    the definitions of the quantities involved,
    the inequalities
    \begin{equation}
    \rIHch{\LO}{\epsilon}{\chS}{\iS} \leq
    \rIHch{\oneway}{\epsilon}{\chS}{\iS} \leq
    \chi_{\epsilon}(\chS,\iS)
    \end{equation}
    follow immediately.
\end{corollary}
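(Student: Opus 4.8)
The plan is to derive both inequalities directly from the definitions, using only the preceding Proposition; no new construction is needed. For the first inequality, I would fix an arbitrary $\oS \in \states(\B)$ and compare the two hypothesis-testing quantities at that $\oS$. By Proposition \ref{test-beta}, $\rBeta{\goc}{\epsilon}{\chS \purho}{\icS \oS}$ is the minimum type-II error over tests in $\tests^{\goc}(\At:\outS)$ with type-I error at most $\epsilon$. Because $\LO \subset \oneway$ as classes of operations (Section \ref{classes}), every test implementable by an $\LO$ measurement operation followed by a classical test is also implementable with a $\oneway$ measurement operation, so $\tests^{\LO}(\At:\outS) \subseteq \tests^{\oneway}(\At:\outS)$; equivalently, \eref{LOT} is the special case of \eref{onewayT} in which $\{D^{a}(b)_{\B}\}$ does not depend on $a$. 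Minimising over the larger set can only lower the value, so $\rBeta{\LO}{\epsilon}{\chS \purho}{\icS \oS} \geq \rBeta{\oneway}{\epsilon}{\chS \purho}{\icS \oS}$, and applying $-\log$ gives $\rDH{\LO}{\epsilon}{\chS \purho}{\icS \oS} \leq \rDH{\oneway}{\epsilon}{\chS \purho}{\icS \oS}$. Since this holds for every $\oS$ and both $\rIHch{\LO}{\epsilon}{\chS}{\iS}$ and $\rIHch{\oneway}{\epsilon}{\chS}{\iS}$ are obtained by minimising the respective $D$-quantity over the \emph{same} set $\states(\B)$ (Definition \ref{htI}), the pointwise inequality is preserved under the minimisation, yielding $\rIHch{\LO}{\epsilon}{\chS}{\iS} \leq \rIHch{\oneway}{\epsilon}{\chS}{\iS}$.

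For the second inequality, the idea is that $\chi_{\epsilon}(\chS,\iS)$ corresponds to one particular (and generally suboptimal) choice of the second argument of the $\oneway$ hypothesis test. Concretely, Definition \ref{htI} gives $\rIHch{\oneway}{\epsilon}{\chS}{\iS} = \min_{\oS} \rDH{\oneway}{\epsilon}{\chS \purho}{\icS \oS} \leq \rDH{\oneway}{\epsilon}{\chS \purho}{\icS \oS}$ for any fixed $\oS$; taking $\oS = \chS \iS$ and invoking the preceding Proposition, which identifies $\rDH{\oneway}{\epsilon}{\chS \purho}{\icS \chS \iS}$ with $\chi_{\epsilon}(\chS,\iS)$, closes the chain $\rIHch{\LO}{\epsilon}{\chS}{\iS} \leq \rIHch{\oneway}{\epsilon}{\chS}{\iS} \leq \chi_{\epsilon}(\chS,\iS)$.

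The argument is essentially bookkeeping and there is no genuine obstacle; the only point that needs care is the direction of the monotonicity — a \emph{larger} class of operations admits a \emph{larger} set of tests, hence a \emph{smaller} $\beta$ and a \emph{larger} $D$ — together with the observation that this pointwise-in-$\oS$ comparison survives the outer $\min_{\oS}$ precisely because the two bounds optimise over the identical set of states $\oS \in \states(\B)$. Everything else is already in place: the nesting $\tests^{\LO} \subseteq \tests^{\oneway}$ is implicit in the hierarchy $\all \supset \PPT \supset \oneway \supset \LO$ recorded in Section \ref{classes}, and the equality $\chi_{\epsilon}(\chS,\iS) = \rDH{\oneway}{\epsilon}{\chS \purho}{\icS \chS \iS}$ is exactly the content of the Proposition just proved.
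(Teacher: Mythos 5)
Your proposal is correct and matches the paper's reasoning: the paper simply asserts the inequalities ``follow immediately,'' and your unpacking — monotonicity of $\rBeta{\goc}{\epsilon}{\cdot}{\cdot}$ under enlarging the test class $\LO \subseteq \oneway$ (giving the first inequality after $-\log$ and minimisation over the common set $\states(\B)$), together with specialising the outer $\min_{\oS}$ to $\oS = \chS\iS$ and invoking the preceding proposition (giving the second) — is exactly the intended argument. Both the direction of the monotonicity and the use of Definition~\ref{htI} are handled correctly.
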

\subsection{Asymptotics}\label{ssec:asymp}
It was already noted in \cite{RennerWang} that the
(asymptotically tight)
Holevo bound on unassisted codes  
can be recovered from an asymptotic analysis of the Wang-Renner bound,
which our bounds on unassisted codes subsume
(in fact, an argument of \cite{DMHB} can be used to show
that the converse part of the HSW theorem \cite{HSW-H,HSW-SW}
can also be derived).

For entanglement-assisted coding over memoryless quantum channels,
Shannon's noisy channel coding theorem has a beautiful generalisation due to
Bennett, Shor, Smolin and Thapliyal:
\begin{align}
    \lim_{\epsilon \to 0}
    \lim_{n \to \infty}
    \frac{1}{n} \log \Me{\epsilon}{(\chan^{\ox n})_{\outS^n \qcon \A^n}}
    = \max_{\rho_{\A}} I(\chS,\iS)
\end{align}
where
\begin{equation}
    I(\chS,\iS) := S(\iS) + S(\chS(\iS)) - S(\chS \purho)
\end{equation}
is the \emph{quantum mutual information} between systems $\At$
and $\outS$ when the state of $\At\outS$ is $\chS \purho$.
As noted, for classical channels Theorem \ref{se-converse} reduces
to Theorem 27 of \cite{PPV}.
In section III.G of \cite{PPV} it is shown how to derive
a Fano-type converse from their Theorem 27.
The derivation and result generalise perfectly to
the entanglement-assisted codes for quantum channels:
As usual, the binary entropy is $h(p) := - (1-p)\log (1-p) - p \log p$,
and the binary relative entropy is
\begin{align}
        d(p\| q)
        :=& D((p,1-p)\|(q,1-q))\\
        =& p \log \frac{p}{q} + (1-p) \log \frac{1-p}{1-q}\\
        \geq&p \log \frac{1}{q} - h(p).\label{d-bound}
\end{align}
    
By the data processing inequality
    for quantum relative entropy under CPTP maps,
    and (\ref{d-bound})
    \begin{align}
        D(\rho_0 \| \rho_1)
        \geq&
        d(1-\epsilon\| \beta_{\epsilon}(\rho_0, \rho_1) )\\
        \geq&
        (1-\epsilon) \log \frac{1}{\beta_{\epsilon}(\rho_0,\rho_1)} 
        -  h(\epsilon)\label{beta-bound}.
    \end{align}
    Therefore,
    \begin{equation}
	D^{\all}_{\epsilon}(\rho_0 \| \rho_1 )
	\leq (D(\rho_0 \| \rho_1 ) + h(\epsilon))/(1-\epsilon) .
    \end{equation}
    Setting $\rho_0 = \chS \purho$ and
    $\rho_1 = \icS\oS$, and minimizing over $\oS$ yields
\begin{lemma}\label{fano}
    \begin{equation}
            \rIHch{\all}{\epsilon}{\chS}{\iS}
            \leq
            (I(\mc{E}_{\outS\qcon \A}, \rho_{\A} )
            + h(\epsilon))/
            (1-\epsilon)
    \end{equation}
\end{lemma}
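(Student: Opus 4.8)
The plan is to specialise the two-state bound that has just been assembled in the lines preceding the statement, and then carry out the optimisation over $\oS$. Concretely, combining the data-processing inequality for the quantum relative entropy (applied to the binary measurement $\{T_{\bS}, \1-T_{\bS}\}$ realising the optimal test) with the elementary estimate (\ref{d-bound}) gives, for arbitrary states $\rho_0,\rho_1$, the inequality (\ref{beta-bound}), which rearranges to $\rDH{\all}{\epsilon}{\rho_0}{\rho_1} \leq (D(\rho_0\|\rho_1) + h(\epsilon))/(1-\epsilon)$. So the first step is simply to set $\rho_0 = \chS\purho$ and $\rho_1 = \icS\oS$ and then minimise both sides over $\oS \in \states(\B)$: by Definition \ref{htI} the left-hand side becomes $\rIHch{\all}{\epsilon}{\chS}{\iS}$, and since $h(\epsilon)$ and the factor $(1-\epsilon)^{-1}$ do not depend on $\oS$, the right-hand side becomes $\big(\min_{\oS} D(\chS\purho\|\icS\oS) + h(\epsilon)\big)/(1-\epsilon)$.

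The remaining work is to identify $\min_{\oS} D(\chS\purho\|\icS\oS)$ with $I(\chS,\iS)$. Here I would invoke the standard variational identity for the quantum relative entropy: for any bipartite state $\omega_{\At\B}$ and any $\sigma_{\B}\in\states(\B)$, $D(\omega_{\At\B}\|\omega_{\At}\otimes\sigma_{\B}) = S(\omega_{\At}) + S(\omega_{\B}) - S(\omega_{\At\B}) + D(\omega_{\B}\|\sigma_{\B})$, so that, since $D(\omega_{\B}\|\sigma_{\B})\geq 0$ with equality iff $\sigma_{\B} = \omega_{\B}$, the minimum over $\sigma_{\B}$ equals the mutual information $S(\omega_{\At}) + S(\omega_{\B}) - S(\omega_{\At\B})$ and is attained at $\sigma_{\B} = \omega_{\B}$. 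Applying this with $\omega_{\At\B} = \chS\purho$: its $\At$-marginal is $\rho_{\At}$, which by the remark after Definition \ref{sch-def} equals $\id_{\At|\A}\rho_{\A}^{\T}$ and hence has the same spectrum as $\iS$, giving $S(\rho_{\At}) = S(\iS)$; its $\B$-marginal is $\tr_{\At}\chS\purho = \chS(\tr_{\At}\purho) = \chS\iS$; and $S(\chS\purho)$ is left as is. Thus $\min_{\oS} D(\chS\purho\|\icS\oS) = S(\iS) + S(\chS\iS) - S(\chS\purho) = I(\chS,\iS)$, which is exactly the quantity defined above the lemma, and substituting this into the bound of the previous paragraph completes the proof.

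No part of this is really an obstacle: all of the analytic input — data processing for $D$, the binary-relative-entropy estimate, Definition \ref{htI} — is already in place, and the only thing requiring a moment's care is checking that the optimal $\oS$ is $\chS\iS$ and that the three resulting entropy terms line up verbatim with the definition of $I(\chS,\iS)$. Writing the remainder term $D(\omega_{\B}\|\sigma_{\B})$ out explicitly makes both points immediate, so I would include that one line rather than just cite the identity.
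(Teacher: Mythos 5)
Your proposal is correct and follows the paper's argument exactly: it uses data processing for the quantum relative entropy plus the binary estimate (\ref{d-bound}) to obtain (\ref{beta-bound}) and hence $\rDH{\all}{\epsilon}{\rho_0}{\rho_1} \le (D(\rho_0\|\rho_1)+h(\epsilon))/(1-\epsilon)$, and then specialises to $\rho_0 = \chS\purho$, $\rho_1 = \icS\oS$ and minimises over $\oS$. The only thing you add beyond the paper's (very terse) proof is the explicit verification that $\min_{\oS} D(\chS\purho\|\icS\oS) = I(\chS,\iS)$ via the variational identity, which the paper leaves implicit — a welcome clarification rather than a departure.
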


The converse part of this theorem is easily derived from the
previous lemma, which tells us that
\begin{equation}
\begin{split}
    &\log \Mei{\epsilon}{(\chan^{\ox n})_{\outS^n \qcon \A^n}}
    {\rho_{\A^n}}\\
    \leq&
    \max_{\rho_{\A^n}}
    \frac{I((\chan^{\ox n})_{\outS^n \qcon \A^n},
    \rho_{\A^n} ) + h(\epsilon)}
    {1-\epsilon}.
\end{split}
\end{equation}
In \cite{CerfAdami} Adami and Cerf show that 
\begin{equation}
    I(\mc{E}^{(1)}_{\B_1\qcon\A_1}
\mc{E}^{(2)}_{\B_2\qcon\A_2}, \rho_{\A_1 \A_2})
    = I(\mc{E}^{(1)}_{\B_1\qcon\A_1},\rho_{\A_1})
+ I(\mc{E}^{(2)}_{\B_2\qcon\A_2},\rho_{\A_2}),
\end{equation}
so we have
\begin{equation}
	\max_{\rho_{\A^n}} I((\mc{E}^{\ox n})_{\B^n \qcon \A^n}, \rho_{\A^n})
\leq n \max_{\rho_{\A}} I(\mc{E}_{\B\qcon \A}, \iS),
\end{equation}
and
\begin{equation}
    \lim_{n\to\infty}
    \frac{1}{n} \log
    M_{\epsilon}((\mc{E}^{\ox n})_{\B^n \qcon \A^n})
    \leq
    \frac{1}{1-\epsilon}
    \max_{\iS} I(\chS,\iS).
\end{equation}
Taking the limit $\epsilon \to 0$ 
completes the proof.

\subsection{Using symmetries}\label{sym}
We now show how symmetries of $\chS$ can be used to simplify
the computation of $I^{\goc}_{\epsilon}(\chS,\iS)$.
For classical channels, analogous results were
obtained in \cite{2013-Polyanskiy} (but note that
\cite{2013-Polyanskiy} also deals
with infinite input/output alphabets)
and similar ideas were discussed in \cite{2012-Matthews}.

Suppose that there is a group $G$ with a
representation
$g \mapsto g_{\A|\A} \in \ops(\A \to \A)$
for all $g \in G$, given by
\begin{equation}
    g_{\A|\A} \tau_{\A} = U(g)_{\A} \tau_{\A} U(g)_{\A}\hc,
\end{equation}
and a representation
$g \mapsto g_{\B|\B} \in \ops(\B \to \B)$, given by
\begin{equation}
     g_{\B|\B} \tau_{\B} = V(g)_{\B} \tau_{\B} V(g)_{\B}\hc,
\end{equation}
where $U$ and $V$ are unitary representations of $G$,
and let
\begin{equation}
    g_{\At|\At} \tau_{\At} := U(g)^{\ast}_{\A} \tau_{\At} U(g)_{\A}^{\T}.
\end{equation}

Suppose that $\chS$ possesses the \emph{$G$-covariance}
\begin{equation}
    \forall g \in G:~\chS g_{\A|\A}
    = 
    g_{\B|\B} \chS.
\end{equation}
\begin{proposition}\label{sym1} 
    For any $\Omega \supseteq \LO$ and for all $g \in G$,
    \begin{equation}
        \begin{split}
        &\rDH{\goc}{\epsilon}
        {\chS g_{\At|\At} g_{\A|\A} \purho}
        {g_{\At|\At}g_{\B|\B} \rho_{\At} \oS}\\
        =&
        D^{\goc}_{\epsilon}(\chS \purho
        \|
        \rho_{\At} \sigma_{\outS})
        \end{split}
    \end{equation}
    and
    $I^{\goc}_{\epsilon}(\chS, g_{\A|\A} \iS)
    = I^{\goc}_{\epsilon}(\chS, \iS)$.
\end{proposition}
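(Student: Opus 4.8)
The plan is to exploit the $G$-covariance of $\chS$ together with the data processing inequality (Proposition \ref{dpi}) and its converse (Corollary \ref{revdpi}), using the fact that the group action $\tau \mapsto g_{\cdot|\cdot}\,\tau$ is implemented by a unitary conjugation and is therefore an invertible operation lying in $\LO$ (indeed in $\ops^{\goc}$ for every $\goc \supseteq \LO$, since it is a local unitary on each of $\At$ and $\outS$ separately). First I would check that the local operation $\mc{N}_g := g_{\At|\At}\, g_{\B|\B}$, acting on the bipartite system $\At\outS$, sends the pair of states $(\chS\purho,\ \rho_{\At}\oS)$ to the pair appearing on the left-hand side. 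For the second slot this is immediate: $\mc{N}_g(\rho_{\At}\oS) = (g_{\At|\At}\rho_{\At})(g_{\B|\B}\oS)$, which is the stated $g_{\At|\At}g_{\B|\B}\rho_{\At}\oS$. For the first slot I would use the $G$-covariance $\chS\, g_{\A|\A} = g_{\B|\B}\,\chS$ to move the $\outS$-side action through the channel: $g_{\B|\B}\,\chS\,\purho = \chS\, g_{\A|\A}\,\purho$, and then note that the $\At$-side action commutes with $\chS$ trivially (it acts on a system untouched by $\chS$), giving $g_{\At|\At}\,g_{\B|\B}\,\chS\,\purho = \chS\, g_{\At|\At}\, g_{\A|\A}\,\purho$, exactly the argument of $\rDH{\goc}{\epsilon}{\cdot}{\cdot}$ on the left.

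With this identification, the equality of hypothesis-testing relative entropies follows from Corollary \ref{revdpi}: the operation $\mc{N}_g = g_{\At|\At}g_{\B|\B}$ lies in $\ops^{\goc}(\At:\outS \to \At:\outS)$ (it is a product of local unitary conjugations, hence in $\LO \subseteq \goc$), and it is invertible with inverse $\mc{N}_{g^{-1}} = g^{-1}_{\At|\At} g^{-1}_{\B|\B}$, also in $\goc$, satisfying $\mc{N}_{g^{-1}}\mc{N}_g(\chS\purho) = \chS\purho$ and $\mc{N}_{g^{-1}}\mc{N}_g(\rho_{\At}\oS) = \rho_{\At}\oS$. Hence the conditions of Corollary \ref{revdpi} are met and
\begin{equation}
    \rDH{\goc}{\epsilon}{\mc{N}_g(\chS\purho)}{\mc{N}_g(\rho_{\At}\oS)}
    = \rDH{\goc}{\epsilon}{\chS\purho}{\rho_{\At}\oS},
\end{equation}
which is precisely the first displayed equation once the left-hand arguments are rewritten as above.

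For the second claim, $I^{\goc}_{\epsilon}(\chS, g_{\A|\A}\iS) = I^{\goc}_{\epsilon}(\chS,\iS)$, I would unwind Definition \ref{htI}: $I^{\goc}_{\epsilon}(\chS,\iS) = \min_{\oS} \rDH{\goc}{\epsilon}{\chS\purho}{\rho_{\At}\oS}$, where now $\rho_{\A}$ is replaced by $g_{\A|\A}\iS$. The key facts are that the canonical purification and the $\At$-marginal transform covariantly: replacing $\rho_{\A}$ by $g_{\A|\A}\rho_{\A}$ replaces $\purho$ by $g_{\At|\At}g_{\A|\A}\purho$ and $\rho_{\At}$ by $g_{\At|\At}\rho_{\At}$. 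This is the one step that needs a small computation from Definition \ref{sch-def} and Remark \ref{transpose-trick}: writing $(g_{\A|\A}\iS)^{1/2} = U(g)\iS^{1/2}U(g)\hc$ and using $M_{\A}\Phi_{\At\A} = M^{\T}_{\At}\Phi_{\At\A}$ to convert the $U(g)\hc$ on the right into $U(g)^{\ast}$ on $\At$, one gets exactly $g_{\At|\At}g_{\A|\A}\purho$. Substituting into the definition and then applying the first part of the proposition term by term (for each fixed $\oS$, and noting that as $\oS$ ranges over $\states(\B)$ so does $g_{\B|\B}\oS$, since $g_{\B|\B}$ is a bijection of the state space), the minimum over $\oS$ is unchanged, giving the claimed invariance. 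The main obstacle, such as it is, is just keeping the transpose/conjugate bookkeeping straight in this covariance computation for $\purho$ and $\rho_{\At}$; everything else is a direct appeal to Corollary \ref{revdpi} and the bijectivity of the group action on state space.
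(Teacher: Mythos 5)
Your proposal is correct and follows essentially the same route as the paper: identify $g_{\At|\At}g_{\B|\B}$ as an invertible local (hence $\LO\subseteq\goc$) operation mapping the pair $(\chS\purho,\rho_{\At}\oS)$ to the pair on the left, invoke Corollary~\ref{revdpi}, and then use the covariance computation $\rho_{\A}\mapsto g_{\A|\A}\rho_{\A}\Rightarrow\purho\mapsto g_{\At|\At}g_{\A|\A}\purho,\ \rho_{\At}\mapsto g_{\At|\At}\rho_{\At}$ (exactly the transpose-trick step the paper carries out explicitly) to conclude the second claim. Your substitution $\oS\mapsto g_{\B|\B}\oS$ using bijectivity of $g_{\B|\B}$ on $\states(\B)$ is a slightly cleaner way to close the minimisation than the paper's one-inequality-plus-inverse argument, but it is the same idea.
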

\begin{proof}
    The first claim follows from
    \begin{align}
    &\chS g_{\A|\A} \purho\\
	=&
    \chan_{\outS\qcon\A}[
    U(g)_{\A}\rho^{\half}_{\A}U(g)_{\A}^{\dagger}
    \Phi_{\At\A}
    U(g)_{\A}\rho^{\half}_{\A}U(g)_{\A}^{\dagger}
    ]\\
    =&
    \chan_{\outS\qcon\A}[
    U(g)_{\A}\rho^{\half}_{\A}U(g)_{\At}^{\ast}
    \Phi_{\At\A}
    U(g)^{\T}_{\At}\rho^{\half}_{\A}U(g)_{\A}^{\dagger}
    ]\\
    =&
    V(g)_{\outS}U(g)_{\At}^{\ast}
    \chan_{\outS\qcon\A}[
    \rho^{\half}_{\A}
    \Phi_{\At\A}
    \rho^{\half}_{\A}
    ]
    V(g)_{\outS}U(g)_{\At}^{\T}\\
    =& g_{\At|\At} g_{\B|\B} \chS \purho,
\end{align}
the fact that $g_{\At|\At} g_{\B|\B}$ and
its inverse belong to
$\LO(\At\to\At,\outS\to\outS) \subseteq \goc$,
and Corollary \ref{revdpi}. We use this to prove the second claim thus:
\begin{align}
    &I^{\goc}_{\epsilon}(\chS, \iS)\\
    =& \min_{\sigma_{\outS}}
    D^{\goc}_{\epsilon}(\chS \purho
    \|
    \rho_{\At} \sigma_{\outS})\\
    =& D^{\goc}_{\epsilon}(\chS \purho
    \|
    \rho_{\At} \sigma^{0}_{\outS})\\
    =& D^{\goc}_{\epsilon}(
    \chS g_{\At|\At} g_{\A|\A} \purho 
    \|
    g_{\At|\At}g_{\B|\B} \rho_{\At}
     \sigma^{0}_{\outS})\\
    \geq & \min_{\sigma_{\outS}}
    D^{\goc}_{\epsilon}
    (\chS g_{\At|\At} g_{\A|\A} \purho 
    \|
    g_{\At|\At} \rho_{\At} \sigma_{\outS})\\
    =& I^{\goc}_{\epsilon}(\chS, g_{\A|\A} \iS).
\end{align}
Since $g$ has an inverse in $G$, the reverse inequality holds too.
\end{proof}
Suppose that there is a Haar ($G$-invariant) measure $\mu$ on $G$, and let
$\bar{\rho}_{\A} := \int_{G} d\mu(g) g_{\A|\A} \iS $. Then, $\bar{\rho}_{\A}$
is invariant under the action $g_{\A|\A}$, and by Jensen's inequality,
Corollary \ref{convexity},
and Proposition \ref{sym1},
\begin{align}
\rBch{\mclass}{\epsilon}{\chS}{\bar{\rho}_{\A}}
    \leq&
\int_{G} d\mu(g) \rBch{\mclass}{\epsilon}{\chS}{g_{\A|\A}\rho_{\A}}\\
    =& \rBch{\mclass}{\epsilon}{\chS}{\iS}.
\end{align}
Therefore, the optimisation over $\iS$ can be restricted to
those density operators invariant under the action of $g_{\A|\A}$.

An important type of symmetry that an operation
representing $n$ channel uses may possess is
permutation covariance. For example, this applies
to $n$ uses of a memoryless channel.

For any element $\pi$ of the symmetric group $S_n$,
and $n$-partite system $\Q^n := \Q_1 \Q_2 \ldots \Q_n$ consisting
of $n$ isomorphic systems $\Q_j$, let
$\pi_{\Q^n|\Q^n} \in \ops(\Q^n \to \Q^n)$ denote the
unitary operation which permutes the $n$ systems.

An operation
$\chan_{\outS^n \qcon \A^n} \in \ops(\A^n \to \outS^n)$
is permutation covariant if
\begin{equation}
	\chan_{\outS^n \qcon \A^n} \pi_{\A^n|\A^n}
	= \pi_{\outS^n|\outS^n} \chan_{\outS^n \qcon \A^n}.
\end{equation}
\def\swap{\chi}
Suppose that,
in addition to permutation invariance of the $n$ uses,
each \emph{use} of the channel is $G$-covariant
in the sense that
\begin{equation}
	\chan_{\outS^n \qcon \A^n} g_{\A_j|\A_j}
	= g_{\B_j|\B_j} \chan_{\outS^n \qcon \A^n}
\end{equation}
for all $g \in G$ and $j \in \{1, \ldots, n\}$.
Here the representations of $G$ on each system are the same, 
except that they act on different systems.
This is the case, for example, if $\chan_{\outS^n \qcon \A^n}$
is $n$ uses of a $G$-covariant memoryless channel.

To every ordered pair $(\pi,\mathbf{g})$ where $\pi \in S_n$
and $\mathbf{g} \in G^{\times n}$, we can associate an
action $(\pi,\mathbf{g})_{\A^n|\A^n}
:= \pi_{\A^n|\A^n} \mathbf{g}_{\A^n|\A^n}$.
Here the action on $\A^n$ for $\mathbf{g} = (g^{(1)}, \ldots, g^{(n)})$
is $\mathbf{g}_{\A^n|\A^n} := \Ox_{j=1}^n g_{\A_j|\A_j}^{(j)}$
and $\mathbf{g}_{\B^n|\B^n} := \Ox_{j=1}^n g_{\B_j|\B_j}^{(j)}$.
Under composition
these actions constitute a group, which is a semi-direct
product of $G^{\times n}$ and $S_n$ ($G^{\times n}$ being the
normal subgroup) which we denote $S_n \ltimes G^{\times n}$.
Defining the action of $(\pi, \mathbf{g}) \in S_n \ltimes G^{\times n}$
on states of $\outS^n$ by
$
	(\pi, \mathbf{g})_{\outS^n|\outS^n}
	:= \pi_{\outS^n|\outS^n} \mathbf{g}_{\outS^n|\outS^n},
$
we have
\begin{equation}
	\chan_{\outS^n \qcon \A^n} (\pi, \mathbf{g})_{\A^n|\A^n}
	= (\pi, \mathbf{g})_{\outS^n|\outS^n} \chan_{\outS^n \qcon \A^n}.
\end{equation}

\section{Example: The depolarising channel}\label{sec:eg}
\def\dpc{\Delta}

	A single use of the $d$-dimensional depolarising
	channel with parameter $p$ and $d$-dimensional input
	and output systems $\A$ and $\outS$ has the operation
	\begin{equation}
		\mathcal{D}_{\outS \qcon \A} \tau_{\A} 
		= (1-p)\tau_{\outS} + p \tr(\tau_{\outS}) \mu_{\outS},
	\end{equation}
	where, for any system $\Q$, $\mu_{\Q} := \1_{\Q}/\dim(\Q)$
	denotes the maximally mixed state on that system.
	For $n$ uses the operation is
	\begin{equation}
		\mathcal{D}^{\ox n}_{\outS^n \qcon \A^n} =
		\mathcal{D}_{\outS_1 \qcon \A_1}
		\ldots \mathcal{D}_{\outS_n \qcon \A_n},
	\end{equation}
	which has the covariance group $S_n \ltimes \mathrm{U}(d)^{\times n}$.

    The only input and output states with the corresponding invariances
    are the maximally mixed states. Therefore,
    \begin{align}
        \sup_{\rho_{\A^n}} I^{\all}_{\epsilon}
        (\mathcal{D}^{\ox n}_{\outS^n \qcon \A^n} ,
        \rho_{\A^n})
        =
        D^{\all}_{\epsilon} (
        \phi(p)^{\ox n}_{\At^n \outS^n}
        \|
        \mu_{\At^n} \mu_{\outS^n} )
    \end{align}
    where
    $
    \phi(p)_{\At \outS} :=
    \mathcal{D}_{\outS \qcon \A}[\Phi_{\At\A}/d]
    =
        (1-p) \Phi_{\At \outS}/d + p \mu_{\At} \mu_{\outS}    
    $
	is an \emph{isotropic} state.
    Since the arguments of $D^{\all}_{\epsilon}$ commute in this expression,
    this is equivalent
    to a classical hypothesis test between the distributions given
    by the eigenvalues of the two states.
    In fact, the degeneracy of the eigenvalues makes it is equivalent
    to deciding between hypotheses on the distribution of
    $n$ samples of a binary variable:
    Hypothesis $H_0$ is that the samples
    are drawn i.i.d. with probability $(1-p) + p/d^2$ of being $0$,
    and hypothesis
    $H_1$ is that the samples are drawn i.i.d. with probability $1/d^2$
    of being $0$. Therefore,
    \begin{equation}
        \begin{split}
        &D^{\all}_{\epsilon} (
        \phi(p)^{\ox n}_{\At^n \outS^n}
        \|
        \mu_{\At^n} \mu_{\outS^n} )\\
        =&
        D^{\all}_{\epsilon}( (\mu,1-\mu)^{\ox n},
        (\lambda, 1-\lambda)^{\ox n} )
        \end{split}
    \end{equation}
    where $\mu = (1-p) + p/d^2$ and $\lambda = 1/d^2$,
    and Proposition \ref{bernoulli-beta} gives a formula for this quantity
    which is easy to evaluate exactly (as we have done for
     Fig. \ref{fig:depol-graph}).

\begin{proposition}\label{bernoulli-beta}
    Let $\mu \geq \lambda$ be two probabilities.
    \begin{equation}
        \beta_{\epsilon}( (\mu,1-\mu)^{\ox n}, (\lambda, 1-\lambda)^{\ox n} )
        = (1-\gamma) \beta_{\ell(\epsilon)}
        + \gamma \beta_{\ell(\epsilon)+1}
    \end{equation}
    where
    \begin{align}
        \alpha_{\ell}
        =& \sum_{j=0}^{\ell-1} \binom{n}{j} \mu^{j}(1-\mu)^{n-j}, \\
        \beta_{\ell}
        =& \sum_{j=l}^{n} \binom{n}{j} \lambda^{j}(1-\lambda)^{n-j},
    \end{align}
    $\ell(\epsilon)$ is the value of $l$ satisfying
    $\alpha_{\ell} \leq \epsilon \leq \alpha_{\ell + 1}$
    and
    $\gamma = (\alpha - \alpha_{\ell(\alpha)})/
                (\alpha_{\ell(\alpha)+1} - \alpha_{\ell(\alpha)})$.
\end{proposition}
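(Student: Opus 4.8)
The plan is to reduce the statement to the Neyman--Pearson lemma and then exhibit the optimal test explicitly as a randomised threshold test on the number of zeros in the sample.

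First I would observe that both $P^{(0)}:=(\mu,1-\mu)^{\ox n}$ and $P^{(1)}:=(\lambda,1-\lambda)^{\ox n}$ are invariant under permutation of the $n$ coordinates, and that the number of zeros $j\in\{0,1,\dots,n\}$ is a sufficient statistic: a string with $j$ zeros has $P^{(i)}$-probability depending only on $j$, and there are $\binom{n}{j}$ such strings, which accounts for the binomial coefficients appearing in $\alpha_\ell$ and $\beta_\ell$. Hence it suffices to optimise over tests $T$ that are functions of $j$ alone. The likelihood ratio of a string with $j$ zeros is $P^{(0)}/P^{(1)}=(\mu/\lambda)^{j}\,((1-\mu)/(1-\lambda))^{n-j}$, whose logarithm is affine in $j$ with slope $\log\frac{\mu(1-\lambda)}{\lambda(1-\mu)}$, which is $\geq 0$ precisely because $\mu\geq\lambda$. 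So the likelihood ratio is non-decreasing in $j$, and a likelihood-ratio threshold test (``accept $H_0$ when $P^{(0)}/P^{(1)}$ is large'') is the same thing as a threshold test on $j$ (``accept $H_0$ when $j$ is large''), with possible randomisation on a single level set $j=\ell$.

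Next I would introduce, for $\ell\in\{0,\dots,n+1\}$, the deterministic test $T_\ell$ that accepts $H_0$ iff $j\geq\ell$. Straight from Definition~\ref{c-beta}, $T_\ell$ has type-I error $\alpha(P^{(0)},T_\ell)=\Pr(j<\ell\mid H_0)=\alpha_\ell$ and type-II error $\beta(P^{(1)},T_\ell)=\Pr(j\geq\ell\mid H_1)=\beta_\ell$. Since $\alpha_\ell$ is non-decreasing with $\alpha_0=0$ and $\alpha_{n+1}=1$, for $\epsilon\in[0,1]$ there is a well-defined $\ell(\epsilon)$ with $\alpha_{\ell(\epsilon)}\leq\epsilon\leq\alpha_{\ell(\epsilon)+1}$, and the randomised test $T^{\ast}:=(1-\gamma)T_{\ell(\epsilon)}+\gamma T_{\ell(\epsilon)+1}$, with $\gamma$ as in the statement, has type-I error $(1-\gamma)\alpha_{\ell(\epsilon)}+\gamma\alpha_{\ell(\epsilon)+1}=\epsilon$ and type-II error $(1-\gamma)\beta_{\ell(\epsilon)}+\gamma\beta_{\ell(\epsilon)+1}$, which is exactly the claimed value. (If $\alpha_{\ell(\epsilon)+1}=\alpha_{\ell(\epsilon)}$, which can only occur when $\mu\in\{0,1\}$, one takes $\gamma=0$; the endpoint values $\epsilon\in\{0,1\}$ are handled by the same convention.)

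Finally I would establish optimality via the standard Neyman--Pearson argument. The test $T^{\ast}$ is a likelihood-ratio test: there is a threshold $t$ with $T^{\ast}(x)=1$ wherever $P^{(0)}(x)>t\,P^{(1)}(x)$ and $T^{\ast}(x)=0$ wherever $P^{(0)}(x)<t\,P^{(1)}(x)$, randomisation occurring only on the level set $j=\ell(\epsilon)$ where $P^{(0)}=t\,P^{(1)}$ --- this is precisely what the monotonicity in $j$ buys us. For any test $T$ one then has the pointwise inequality $(T^{\ast}(x)-T(x))(P^{(0)}(x)-t\,P^{(1)}(x))\geq 0$; summing over $x$ and using $\sum_x T(x)P^{(0)}(x)=1-\alpha(P^{(0)},T)$ and $\sum_x T(x)P^{(1)}(x)=\beta(P^{(1)},T)$ gives $\alpha(P^{(0)},T)-\epsilon+t\,(\beta(P^{(1)},T)-\beta(P^{(1)},T^{\ast}))\geq 0$, hence $t\,(\beta(P^{(1)},T)-\beta(P^{(1)},T^{\ast}))\geq\epsilon-\alpha(P^{(0)},T)\geq 0$ whenever $\alpha(P^{(0)},T)\leq\epsilon$. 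For $t>0$ this yields $\beta(P^{(1)},T)\geq\beta(P^{(1)},T^{\ast})$, so $T^{\ast}$ is optimal; the cases $t=0$ and $t=\infty$ force $\epsilon$ to a boundary value at which the feasible set is degenerate and optimality is immediate. I expect the only real obstacle to be the bookkeeping around degenerate parameters ($\mu=\lambda$, or $\mu$ or $\lambda$ in $\{0,1\}$, or $\epsilon$ at an endpoint), where the likelihood ratio is only weakly monotone or the formula for $\gamma$ degenerates; each such case is checked directly and is consistent with the stated expression.
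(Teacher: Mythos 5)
Your proof is correct and follows essentially the same route as the paper: the paper's proof is a one-line appeal to the Neyman--Pearson lemma (pointing to~\cite{PPV} for the resulting formula), and you have simply spelled out that argument in full, including the reduction to the sufficient statistic, the monotone likelihood ratio, the construction of the optimal randomised threshold test, and the standard pointwise Neyman--Pearson inequality establishing optimality.
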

\begin{proof}
    This is just optimising over the optimal (classical) hypothesis tests
    identified by the Neyman-Pearson lemma. The same expression is
    given in \cite{PPV}.
\end{proof}

\begin{figure}[h]
\includegraphics[scale=1.0]{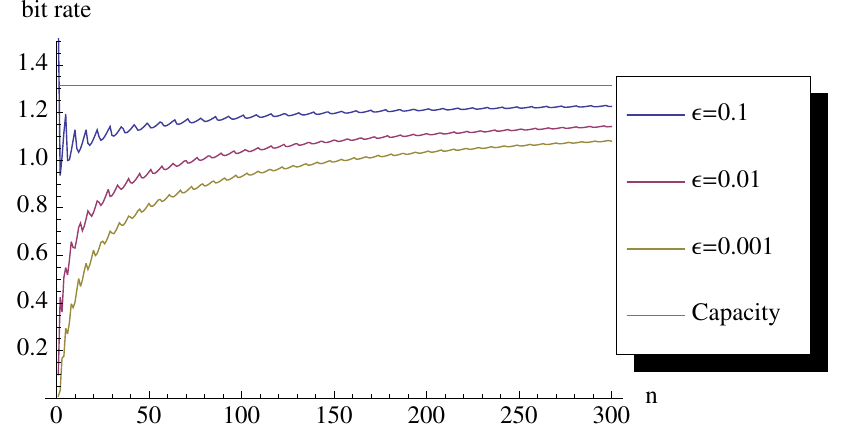}
\caption{
    Our upper bound (Theorem \ref{se-converse}) on the rate
    of entanglement-assisted codes
    evaluated (see Section \ref{sec:eg})
    for three different error probabilities $\epsilon$,
    for the qubit depolarising channel with failure probability
    $0.15$.
	The red line marks the capacity of the channel (roughly
	1.31 bits/channel use)
	as given by the formula of Bennett, Shor, Smolin
	and Thapliyal~\cite{BSST}.
}
\label{fig:depol-graph}
\end{figure}

\section{Converses and cryptography}\label{sec:sconv}
In~\cite{kw:converse} a strong converse was proven for the classical capacity
of many quantum channels $\mathcal{N}$, including depolarising noise. 
The results of~\cite{kw:converse} had a nice application to
proving security in so-called noisy-storage model of 
quantum cryptography~\cite{noisy:new}. This model
allows for the secure implementation of any two-party cryptographic
task under the assumption that the adversary's quantum memory is noisy.
Examples of such tasks include bit commitment and oblivious 
transfer, which are impossible to achieve without
assumptions~\cite{lo,mayers}. 

Concretely, the noisy-storage model assumes that during waiting times introduced
into the protocol, the adversary can only store quantum information in a memory device modelled
by a channel $\chS$. Otherwise the adversary is all powerful and may even use a quantum computer
to peform the most advantageous error-correcting encoding. At the beginning of the waiting time, the state
of the protocol can be described as a cq-state
$\rho_{\mathrm{X}\mathrm{K}\A}$
where $\mathrm{X}$ is a classical register held by the honest party
and storing a string $X$
and $\A$ and $\mathrm{K}$
are the quantum and classical registers of the
dishonest party respectively. 
After the wait time, the state of the protocol is described by 
$\chS \rho_{\mathrm{X}\mathrm{K}\A}$.
The security of all protocols proposed in this model requires a bound on the min-entropy ${\rm H}_{\rm min}(\mathrm{X}|\B\mathrm{K}) = - \log P_{\rm guess}(\mathrm{X}|\B\mathrm{K})$, 
where $P_{\rm guess}(\mathrm{X}|\B\mathrm{K})$ is the probability that the adversary holding $\B$ and $\mathrm{K}$
manages to guess $X$ maximized over all possible measurements
on $\B\mathrm{K}$.  
Bounds on this quantity can be linked to the classical capacity~\cite{noisy:new}, the entanglement cost~\cite{entCost} 
or the quantum capacity~\cite{bfw12} of $\chS$. Yet, explicit bounds on the min-entropy which would enable practical implementations
of such protocols~\cite{bcExp,otExp} are elusive. In particular, up to this date no statements are known for arbitrary channels $\chS$, such as channels which do not obey a strong converse or simply structureless channels. Our method can be used to compute explicit bounds on the min-entropy for arbitrary channels.

The key to relating our analysis to a study of the min-entropy is the observation~\cite{noisy:new} that
\begin{align}
&{\rm H}_{\rm min}(\mathrm{X}|\B\mathrm{K}) \geq\\
&\qquad - \log \max_{\mathcal{Z}} \Pr\left(\hat{W} = W|\chS,\mathcal{Z}, S_{\lfloor{\rm H}_{\rm min}(\mathrm{X}|\mathrm{K})\rfloor}\right)\ .
\end{align}
That is, by understanding the adversary's knowledge
${\rm H}_{\rm min}(\mathrm{X}|\mathrm{K})$ conditioned on the classical knowledge $\mathrm{K}$ alone, and the properties of
the channel $\chS$ we can bound the adversary's knowledge about a
string $\mathrm{X}$ given \emph{both} $\B$ and $\mathrm{K}$. 
In~\cite{noisy:new} the bound ${\rm H}_{\rm min}(\mathrm{X}|\mathrm{K}) \gtrsim n/2$ 
was obtained from an uncertainty relation for BB84 measurements that
was used to generate the $n$-bit string $X$.
Different measurements lead to higher (or lower)
values of 
\begin{align}
\frac{{\rm H}_{\rm min}(\mathrm{X}|\mathrm{K})}{n} =: \hat{R}\ .
\end{align}
In~\cite{noisy:new}, a strong converse for some channels
$\mathcal{N}^{\otimes \ell}$ was then used to bound the r.h.s.
for $\chS$ and rate $R = (n\hat{R})/\ell$, as long as $R$ exceeded the capacity of $\mathcal{N}$.

Let us now sketch how 
our approach directly leads to a security statement for any $\chS$. 
More specifically, we will turn things around, fix $\hat{R}$, and ask how large we
have to choose $n$ such that sending $n\hat{R}$ randomly chosen classical bits
through the channel $\chS$ will incur an error of at least
$\epsilon$. Intuitively, our goal is to effectively overflow
the adversary's storage device $\chS$, that is, $n$ will be chosen such that no coding scheme allows for an error less than $\epsilon$ for 
a fixed value of $\hat{R}$. By the results of~\cite{noisy:new} 
we can then bound the adversary's min-entropy as
${\rm H}_{\rm min}(\mathrm{X}|\B\mathrm{K}) \geq - \log (1-\epsilon)$. 
For any $\epsilon > 0$, our analysis yields a bound on
$\log M_{\epsilon}$ stating that if we were to transmit more
than $n \hat{R} > \log M_{\epsilon}$ bits, the error is
necessarily $\epsilon + \delta > \epsilon$ for some $\delta > 0$, 
which yields the desired bound. As $\hat{R}$ is determined by an
uncertainty relation, we thus know how many transmissions $n$ 
we have to make to obtain security.
Note that allowing the adversary to perform entanglement-assisted coding only gives him additional power, and hence our
analysis for entanglement-assisted coding provides a method with which explicit security parameters may be computed by
means of a semi-definite program.

\section{Conclusion}\label{sec:conc}
We have shown how a simple and powerful
idea \cite{PPV} for obtaining a finite blocklength
converse for classical channels in terms
of a hypothesis testing problem 
can be generalised to quantum channels
and entanglement-assisted codes.

This generalisation has the property that
a natural restriction on codes
(removing entanglement-assistance)
translates into a natural restriction
on the tests that can be performed
in the hypothesis testing problem
(they must be local). This provides
a strong link between the extensively
studied problems of channel coding and
hypothesis testing (and state discrimination)
of bipartite systems under locality
restrictions.

Many avenues for further work are apparent
to the authors:
Subsection \ref{sym} invites a more thorough investigation
into the extent to which symmetries can be
used to simplify evaluation of the bound
in various special cases, and
especially in the case of general memoryless
channels, where one might hope for an
exponential reduction in the size of
the SDP, in a quantum generalisation
of \cite{2012-Matthews}.

While we have been able to fit the existing
converse of Wang and Renner \cite{RennerWang} precisely into
our hierarchy bounds based on restricted
hypothesis testing, it is not obvious to
the authors what relationship exists between
our converse for entanglement-assisted codes
and that of Datta and Hsieh \cite{DH}. We would
like to know what can be said about this,
particularly in light of the achievability
bound given in \cite{DH}.

A limitation of our work is that we only
generalise the classical bound of \cite{PPV}
for the case of finite input/output alphabets
(as our input/output systems have finite
dimension). To analyse the general case
will require greater mathematical sophistication
(see \cite{2013-Polyanskiy}), but would be desirable given
(for example) the interest in quantum gaussian channels \cite{Gaussian}.

\acknowledgments
Matthews acknowledges the support of the Isaac Newton Trust,
NSERC and QuantumWorks
and would like to thank Andreas Winter, Debbie Leung, and Nilanjana Datta for
useful conversations regarding this work.
SW was supported by the National Research Foundation
and the Ministry of Education, Singapore.
We would like to thank the anonymous referees for
their careful reviews of our first draft.

\newpage

\bibliographystyle{unsrt} 
\bibliography{bigbib}

\begin{thebibliography}{10}

\bibitem{MD}
M.~Milan and N.~Datta.
\newblock Generalized relative entropies and the capacity of classical-quantum
  channels.
\newblock {\em Journal of Mathematical Physics}, 50(7):072104, 2009.

\bibitem{RennerWang}
L.~Wang and R.~Renner.
\newblock One-shot classical-quantum capacity and hypothesis testing.
\newblock {\em Phys. Rev. Lett.}, 108:200501, May 2012.

\bibitem{RR}
J.M. Renes and R.~Renner.
\newblock Noisy channel coding via privacy amplification and information
  reconciliation.
\newblock {\em Information Theory, IEEE Transactions on}, 57(11):7377--7385,
  2011.

\bibitem{DH}
N.~{Datta} and M.-H. {Hsieh}.
\newblock {One-shot entanglement-assisted quantum and classical communication}.
\newblock {\em ArXiv e-prints}, May 2011.

\bibitem{DMHB}
N.~Datta, M.~Mosonyi, M-H. Hsieh, and F.~G. S.~L. Brandao.
\newblock Strong converse capacities of quantum channels for classical
  information.
\newblock {\em To appear in IEEE Trans. Inf. Th}, 2011.

\bibitem{PPV}
Y.~Polyanskiy, H.~V. Poor, and S.~Verd\'u.
\newblock Channel coding rate in the finite blocklength regime.
\newblock {\em IEEE Transactions on Information Theory}, pages 2307--2359,
  2010.

\bibitem{2013-Polyanskiy}
Y.~Polyanskiy.
\newblock Saddle point in the minimax converse for channel coding.
\newblock {\em Information Theory, IEEE Transactions on}, 59(5):2576--2595,
  2013.

\bibitem{Rains-rig}
E.~M. Rains.
\newblock Rigorous treatment of distillable entanglement.
\newblock {\em Phys. Rev. A}, 60:173--178, Jul 1999.

\bibitem{2001-Rains-SDP}
E.M. Rains.
\newblock A semidefinite program for distillable entanglement.
\newblock {\em Information Theory, IEEE Transactions on}, 47(7):2921 --2933,
  nov 2001.

\bibitem{VP}
S.~Virmani and M.~B. Plenio.
\newblock Construction of extremal local positive-operator-valued measures
  under symmetry.
\newblock {\em Phys. Rev. A}, 67:062308, Jun 2003.

\bibitem{2005-DArianoLoPrestiPerinotti}


\bibitem{BSST}
C.~H. Bennett, P.~W. Shor, J.~A. Smolin, and A.~V. Thapliyal.
\newblock Entanglement-assisted capacity of a quantum channel and the reverse
  {S}hannon theorem.
\newblock {\em Information Theory, IEEE Transactions on}, 48(10):2637--2655,
  October 2002.

\bibitem{2012-Matthews}
W.~Matthews.
\newblock A linear program for the finite block length converse of
  {P}olyanskiy, {P}oor, {V}erd\'{u} via nonsignaling codes.
\newblock {\em Information Theory, IEEE Transactions on}, 58(12):7036--7044,
  2012.

\bibitem{2010-PolyanskiyVerdu}
Y.~Polyanskiy and S.~Verdú.
\newblock Arimoto channel coding converse and {R}\'{e}nyi divergence.
\newblock In {\em Communication, Control, and Computing (Allerton), 2010 48th
  Annual Allerton Conference on}, pages 1327--1333, Sept 2010.

\bibitem{2012-SharmaWarsi}
Naresh Sharma and Naqueeb~Ahmad Warsi.
\newblock Fundamental bound on the reliability of quantum information
  transmission.
\newblock {\em Phys. Rev. Lett.}, 110:080501, Feb 2013.

\bibitem{2006-HayashiBook}
Masahito Hayashi.
\newblock {\em Quantum Information: An Information}.
\newblock Springer, 2006.

\bibitem{VB-SDP}
L.~Vandenberghe and S.~Boyd.
\newblock Semidefinite programming.
\newblock {\em SIAM review}, 38(1):49--95, 1996.

\bibitem{holevo1973}
A.S. Holevo.
\newblock Bounds for the quantity of information transmitted by a quantum
  communication channel.
\newblock {\em Problemy Peredachi Informatsii}, 9(3):3--11, 1973.

\bibitem{2001-SchumacherWestmoreland}
Benjamin Schumacher and Michael~D. Westmoreland.
\newblock Optimal signal ensembles.
\newblock {\em Phys. Rev. A}, 63:022308, Jan 2001.

\bibitem{HSW-H}
A.S. Holevo.
\newblock The capacity of the quantum channel with general signal states.
\newblock {\em Information Theory, IEEE Transactions on}, 44(1):269 --273, jan
  1998.

\bibitem{HSW-SW}
B.~Schumacher and M.~D. Westmoreland.
\newblock Sending classical information via noisy quantum channels.
\newblock {\em Phys. Rev. A}, 56:131--138, Jul 1997.

\bibitem{CerfAdami}
C.~Adami and N.~J. Cerf.
\newblock von neumann capacity of noisy quantum channels.
\newblock {\em Phys. Rev. A}, 56:3470--3483, Nov 1997.

\bibitem{kw:converse}
R.~K{\"o}nig and S.~Wehner.
\newblock A strong converse for classical channel coding using entangled
  inputs.
\newblock {\em Physical Review Letters}, 103:070504, 2009.

\bibitem{noisy:new}
R.~K\"onig, S.~Wehner, and J.~Wullschleger.
\newblock Unconditional security from noisy quantum storage.
\newblock {\em IEEE Transactions on Information Theory}, 58(3):1962 -- 1984,
  2012.

\bibitem{lo}
H-K. Lo and H.F. Chau.
\newblock Why quantum bit commitment and ideal quantum coin tossing are
  impossible.
\newblock {\em Physica D: Nonlinear Phenomena}, 120(1–2):177 -- 187, 1998.
\newblock Proceedings of the Fourth Workshop on Physics and Consumption.

\bibitem{mayers}
D.~Mayers.
\newblock Unconditionally secure quantum bit commitment is impossible.
\newblock {\em Phys. Rev. Lett.}, 78:3414--3417, Apr 1997.

\bibitem{entCost}
M.~Berta, M.~Christandl, F.G.S.L. Brandao, and S.~Wehner.
\newblock Entanglement cost of quantum channels.
\newblock In {\em Information Theory Proceedings (ISIT), 2012 IEEE
  International Symposium on}, pages 900 --904, july 2012.

\bibitem{bfw12}
M.~Berta, O.~Fawzi, and S.~Wehner.
\newblock Quantum to classical randomness extractors.
\newblock In Reihaneh Safavi-Naini and Ran Canetti, editors, {\em Advances in
  Cryptology – CRYPTO 2012}, volume 7417 of {\em Lecture Notes in Computer
  Science}, pages 776--793. Springer Berlin / Heidelberg, 2012.

\bibitem{bcExp}
N.~Ng, S.~Joshi, C.~Chia, C.~Kurtsiefer, and S.~Wehner.
\newblock Experimental implementation of bit commitment in the noisy-storage
  model.
\newblock {\em Nature Communications}, 3(1326), 2012.

\bibitem{otExp}
C.~Erven, N.~Ng, N.~Gigov, R.~Laflamme, S.~Wehner, and G.~Weihs.
\newblock An experimental implementation of oblivious transfer in the noisy
  storage model.
\newblock {\em Nature Communications}, 5:3418, 2014.

\bibitem{Gaussian}
A.~S. Holevo, M.~Sohma, and O.~Hirota.
\newblock Capacity of quantum gaussian channels.
\newblock {\em Phys. Rev. A}, 59:1820--1828, Mar 1999.

\end{thebibliography}

\end{document}